\documentclass[11pt, fullpage]{article}

\usepackage{amsfonts,mathtools,mathtools,color}
\usepackage[dvipsnames]{xcolor}
\usepackage[colorlinks=true,
            citecolor=MidnightBlue,
            linkcolor=RedViolet,
            urlcolor=blue]{hyperref}
\usepackage{algorithm}
\usepackage{algpseudocode}
\usepackage{bm}
\usepackage{amsmath}
\usepackage{amssymb}
\usepackage{amsthm}
\usepackage{wrapfig}
\usepackage{bbm}
\usepackage{bbold}
\usepackage{xcolor}
\usepackage{comment}
\usepackage{graphicx}
\usepackage{subcaption}
\usepackage{tcolorbox}
\usepackage[autostyle]{csquotes}
\numberwithin{figure}{section}
\allowdisplaybreaks
\newcommand{\R}{\mathbbm{R}}
\newcommand{\Z}{\mathbbm{Z}}
\newcommand{\Q}{\mathbbm{Q}}

\newtheorem{theorem}{Theorem}[section]
\newtheorem*{theorem*}{Theorem}
\newtheorem{claim}[theorem]{Claim}
\newtheorem{lemma}[theorem]{Lemma}
\newtheorem{prop}[theorem]{Proposition}
\newtheorem{definition}[theorem]{Definition}
\newtheorem{observation}[theorem]{Observation}
\definecolor{kyracolor}{RGB}{154, 99, 201}
\definecolor{santoshcolor}{RGB}{25, 99, 201}

\newcommand{\sv}{Santosh Vempala }
\newcommand{\tlo}{Thiago Oliveira }

\newcommand{\vol}{\mathsf{vol}}
\newcommand{\eps}{\varepsilon}
\newcommand{\del}{\delta}
\newcommand{\seq}{\subseteq}
\newcommand{\m}{\setminus}
\newcommand{\KNAP}{\text{KNAP}}
\newcommand{\cl}{\text{cl}}
\newcommand{\from}{\leftarrow}
\newcommand{\W}{\mathcal{W}}
\newcommand{\M}{\hat{M}}
\newcommand{\ind}{j}
\newcommand{\cub}{C}
\newcommand{\1}{\mathbbm{1}_n}
\newcommand{\0}{\mathbb{0}_n}
\newcommand{\is}[1]{\mathbf{1}_{#1}}
\newcommand{\el}{l}
\newcommand{\inv}{^{-1}}

\begin{document}

\title{Deterministic Volume Estimation of Truncated Hypercubes
}
\author{Kyra Gunluk\\
School of Computer Science\\
Georgia Institute of Technology}
\date{}

\begin{titlepage}
\maketitle

\begin{abstract}
    We present a {\em deterministic} polynomial-time algorithm for estimating the volume of a hypercube intersected by a fixed number of constraints of the type $f(x) \leq b$, where $f$ is the sum of univariate functions that are each nonnegative, monotone, and convex. Such constraints include knapsack and norm-ball constraints. The case of the unit hypercube truncated by a single linear constraint (halfspace) is already \#P-hard.
    Given $k$ such constraints in dimension $n$, with total input length of at most $L$ bits, total output length of at most $L_o$ bits, and an error parameter $\eps > 0$, our algorithm computes a $(1 + \eps)$-multiplicative approximation of the volume of their intersection with the unit hypercube $[0,1]^n$ in time poly$_k(n, 1/\eps, L,L_o)$. 
\end{abstract}
\begingroup
\renewcommand\thefootnote{}
\footnotetext{Under review (submitted to RANDOM 2026).}
\endgroup
\end{titlepage}

\setcounter{page}{1}

\section{Introduction}
Computing the volume is an ancient and difficult problem, even for convex bodies. Dyer and Frieze~\cite{DF88} showed that computing exact volume is \#P-hard, even for an explicit polyhedron specified by a unimodular constraint matrix.
In fact, even approximating the volume of a convex body is notoriously difficult. In the general membership oracle model, it was shown by \cite{E86,Barany1987,BF88} that any deterministic algorithm that computes a polynomial (in the dimension) relative approximation of the volume of a convex body must incur exponential complexity. 
Against this backdrop, the randomized polynomial-time approximation scheme of Dyer, Frieze and Kannan~\cite{DyerFK89} was a surprising breakthrough that heralded an age of new techniques for randomized algorithms and analysis for volume computation that have also been extended to logconcave integration~\cite{L90,AK91,DyerF90,LS90,LS93,KLS95,KLS97,LV06,Lovasz2007,KV06,CousinsV18,kookV2025}. 

Our interest here is in the setting where the input is specified explicitly (and therefore the membership oracle model lower bounds for deterministic algorithms do not apply). For these explicitly specified polyhedra, the known efficient algorithms are essentially the same as in the membership oracle model and rely heavily on randomization --- they are based on sampling a sequence of distributions by Markov chains. The main motivation for the present paper is to understand classes of convex bodies whose volume can be estimated by efficient \emph{deterministic} algorithms. 

Going back to the original hardness proof of Dyer and Frieze~\cite{doi:10.1137/0217060}, the core problem shown to be \#P-hard is very simple: Let $P$ be the explicit polytope obtained by intersecting the unit hypercube $[0,1]^n$ with a {\em single} halfspace described by a nonnegative constraint vector, $a^\top x\le b$. Computing the volume of this family of polytopes, equivalently, computing the probability that a random point in a unit hypercube satisfies a given linear constraint, is already \#P-hard.

\begin{theorem}[Polytope Volume is \#P-hard~\cite{DF88}]\label{Dyer-Frieze}
Let $P(A,b) = \{ x \in \mathbb{R}^n : Ax \le b \}$ be a polyhedron.
$A, b$ have rational entries where $A = (a_{ij})$, $i = 1,2,\dots,k$,
$j = 1,2,\dots,n$, and $b = (b_i)$, $i = 1,2,\dots,k$.
    Computing $\operatorname{vol}(P(A,b))$ is \#P-hard even when $A$ is totally unimodular; it is also \#P-hard for $P = [0,1]^n \cap \{x: a^\top x \le b\}$ for a given integer vector $a$.  
\end{theorem}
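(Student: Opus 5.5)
The plan is to reduce from a known \#P-complete counting problem. The natural choice for the single-halfspace case is counting the solutions of a knapsack or subset-sum instance: given positive integers $a_1,\dots,a_n$ and $b$, count the subsets $S\subseteq\{1,\dots,n\}$ with $\sum_{i\in S}a_i\le b$. This problem is \#P-complete by a parsimonious reduction from \#SAT through exact cover and subset sum.

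The key identity comes from inclusion--exclusion over the facets $x_i\le 1$ of the hypercube. For $a>0$,
\[
\vol\bigl([0,1]^n\cap\{a^\top x\le b\}\bigr)=\frac{1}{n!\prod_i a_i}\sum_{S\subseteq[n]}(-1)^{|S|}\Bigl(b-\sum_{i\in S}a_i\Bigr)_+^{\,n}.
\]
This holds because the simplex $\{x\ge 0: a^\top x\le t\}$ has volume $t^n/(n!\prod a_i)$, and each shifted orthant $\{x_i\ge 1, i\in S\}$ is subtracted or added with the appropriate sign. The volume is therefore a piecewise polynomial in $b$, and its breakpoints sit exactly at the subset sums.

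The next step is to extract the counts from volume queries; a Turing reduction is enough for \#P-hardness. Define $g(t)=n!\prod a_i\cdot\vol(P(t))$. The $n$-th finite difference of $g$ with step $1$, or equivalently knowing $g$ as a polynomial on each unit interval between consecutive integers, determines the signed counts $c_s=\sum_{S:\,\sum_S a_i=s}(-1)^{|S|}$. Signed counts are not quite what we want, so I would pad the instance to encode $|S|$. Replace $a_i$ by $a_i'=Ma_i+1$ with $M>n$, so that $\sum_S a_i'=M\sum_S a_i+|S|$. The sum $\sum_S a_i'$ then determines both $\sum_S a_i$ and $|S|$, and the signed counts $c_s$ of the new instance give the unsigned count of subsets for each pair (sum, size). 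Summing over all sums $\le b$ recovers the knapsack count.

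To read off each $c_s$, I would query the volume at polynomially many rational thresholds $t$ inside each relevant unit interval. On $(s,s+1)$ the function $g$ is a degree-$n$ polynomial, so $n+1$ queries let us interpolate it. The jump in the leading behaviour across $s$ gives $c_s$ through the term $c_s(t-s)^n$. There are exponentially many values of $s$, so instead I would use the threshold $b$ together with a few nearby thresholds and a finite-difference identity, which gives cumulative information directly: $\Delta^n g$ relates to the partial sums $\sum_{s\le b}c_s$ up to polynomial corrections. The main obstacle is making this extraction polynomial. The interpolation arguments and the bit sizes must both stay polynomial, which holds since all quantities are rationals with polynomially many bits, and the number of volume-oracle calls must be polynomial. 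For the totally unimodular statement I would cite Dyer--Frieze's separate construction, which encodes the order-polytope or linear-extension count, since order polytopes are defined by totally unimodular difference constraints. Counting linear extensions is \#P-complete by Brightwell--Winkler, and $\vol$ of the order polytope equals $e(P)/n!$.
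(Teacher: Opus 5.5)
The paper does not prove this theorem; it quotes it from \cite{DF88} as background, so your argument has to stand on its own.

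Your totally unimodular part works. The order polytope has rows $e_i-e_j$ (the transpose of a directed incidence matrix) plus rows $\pm e_i$, so it is totally unimodular. Its volume is the number of linear extensions divided by $n!$, and Brightwell--Winkler gives \#P-completeness of that count. Your inclusion--exclusion formula for the single halfspace is also correct.

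The gap is the extraction step. You flag it as ``the main obstacle'' and never close it, and the pieces you name do not combine into a polynomial reduction. The precise fact you want is this: for integer $m$ and $t\in(m,m+1)$, $g(t)=\sum_{S:\,\sum_S a_i\le m}(-1)^{|S|}\bigl(t-\sum_S a_i\bigr)^n$ is a single polynomial. So $n+1$ queries in $(m,m+1)$ plus interpolation give its leading coefficient $C(m)=\sum_{S:\,\sum_S a_i\le m}(-1)^{|S|}$. The step-$1$ difference $\Delta^n g$ is the wrong tool. It also picks up the unknown $c_s$ at the $n$ integers adjacent to $t$, and those are not ``polynomial corrections''.

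More importantly, $C(m)$ is still a signed count, and your padding $a_i'=Ma_i+1$ does not remove the sign. It puts $|S|$ in the low-order digit, so $C'(Mb+n)=\sum_{S:\,\sum_S a_i\le b}(-1)^{|S|}$ again. That padding only helps for pointwise recovery of each $c_s$, which is the exponential route you were trying to avoid.

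The fix is to put the cardinality in the high-order digit. Take $a_i'=a_i+T$ with $T>\sum_j a_j$, and without loss of generality $b<T$. Sets of size $k$ then have $a'$-sums in the disjoint windows $[kT,\,kT+\sum_j a_j]$, so
\[
C'(kT+b)-C'(kT-1)=(-1)^k\,\#\bigl\{S:\ |S|=k,\ \textstyle\sum_{i\in S}a_i\le b\bigr\}.
\]
Each value of $C'$ costs $n+1$ volume queries at rational thresholds. These become integral after multiplying the constraint by a common denominator, and all numbers stay polynomially sized. Summing $(-1)^k$ times these differences over $k=0,\dots,n$ gives the knapsack count with $O(n^2)$ oracle calls, which completes the \#P-hardness reduction.
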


\subsection{Results}
\begin{center}
\begin{minipage}{0.24\textwidth}
\centering
\includegraphics[width=\linewidth]{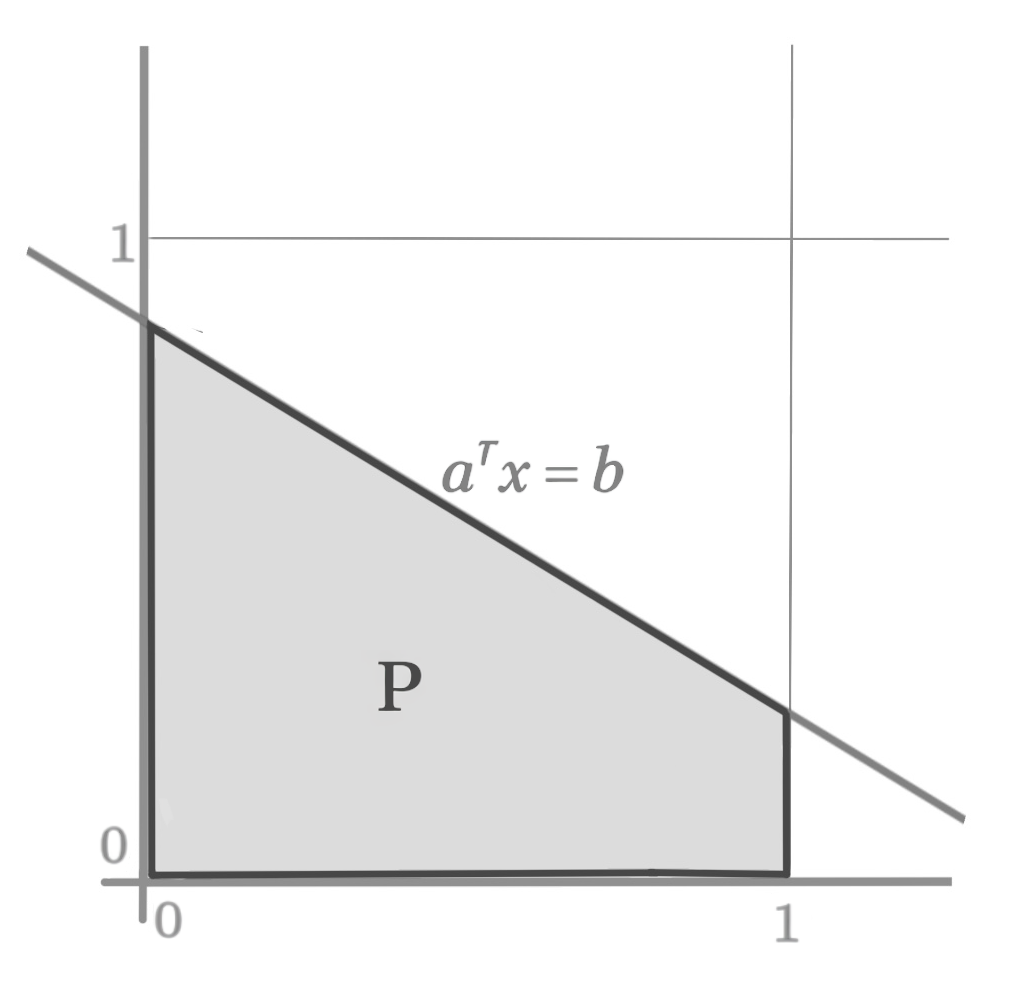}
\vspace{-30pt}
\captionof{figure}{}
\label{fig:one-plane}
\end{minipage}
\hfill
\begin{minipage}{0.24\textwidth}
\centering
\includegraphics[width=\linewidth]{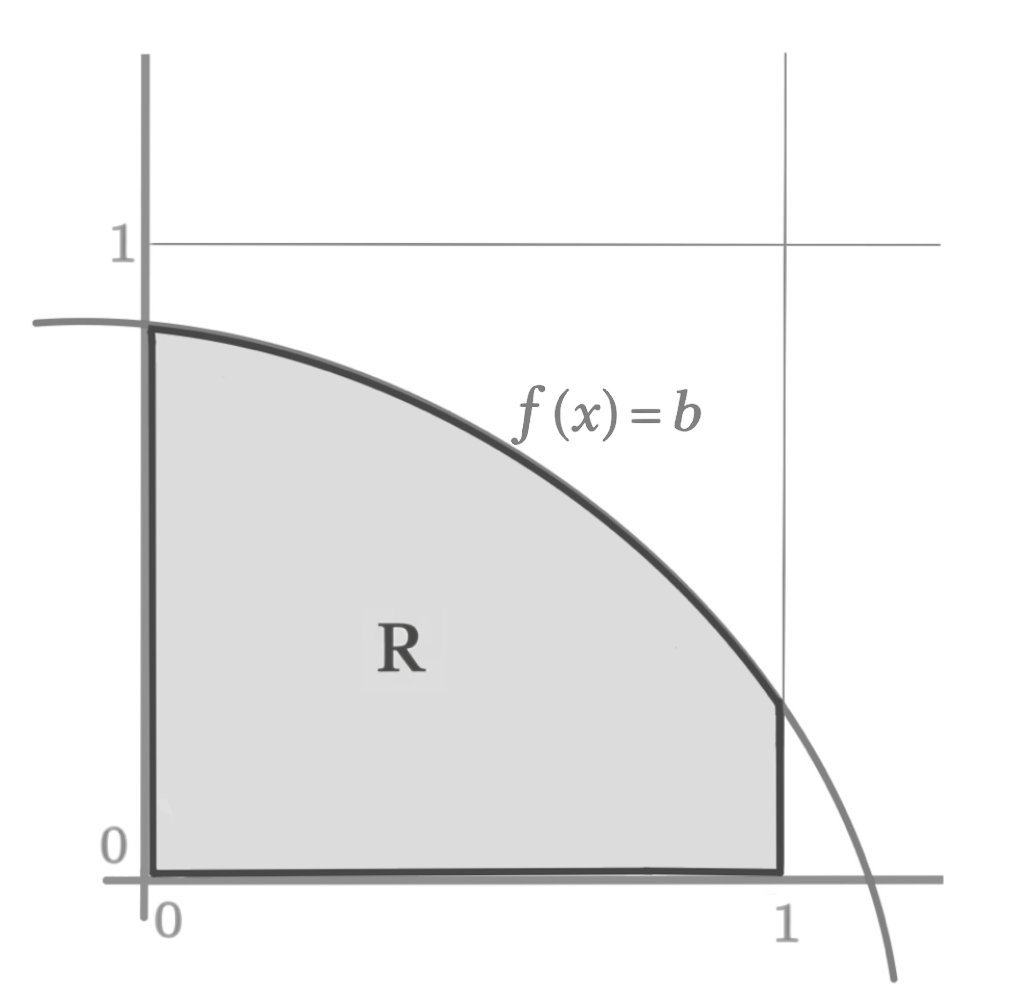}
\vspace{-30pt}
\captionof{figure}{}
\label{fig:one-conv}
\end{minipage}
\hfill
\begin{minipage}{0.24\textwidth}
\centering
\includegraphics[width=\linewidth]{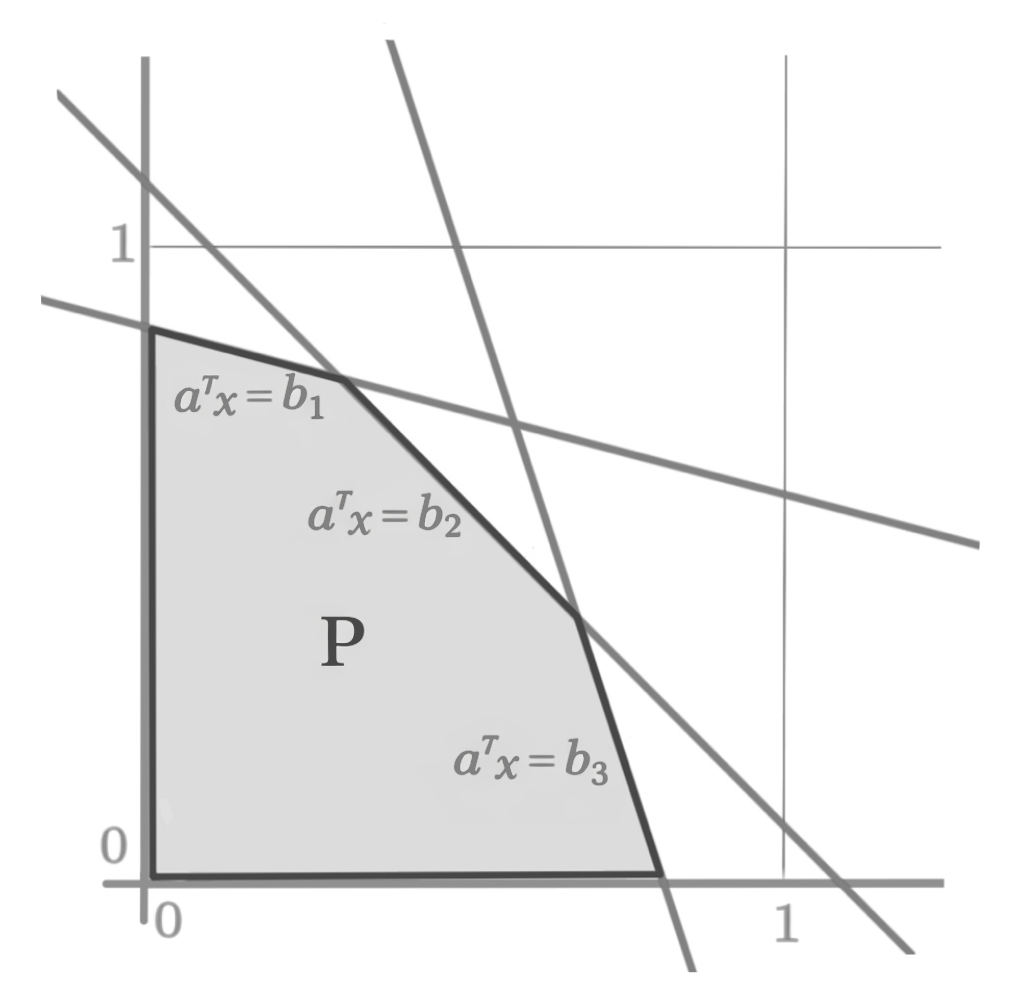}
\vspace{-30pt}
\captionof{figure}{}
\label{fig:multi-plane}
\end{minipage}
\hfill
\begin{minipage}{0.24\textwidth}
\centering
\includegraphics[width=\linewidth]{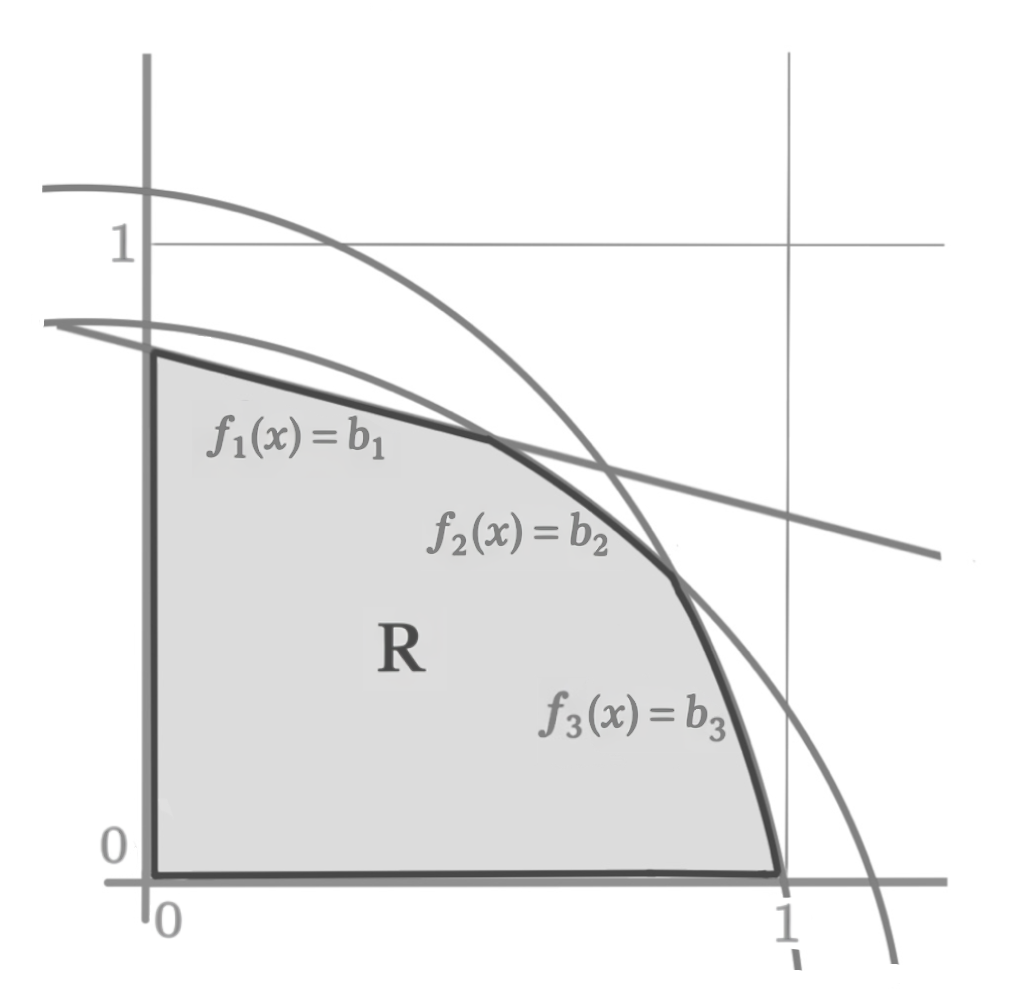}
\vspace{-30pt}
\captionof{figure}{}
\label{fig:multi-conv}
\end{minipage}
\end{center}

Our first result is a deterministic fully polynomial-time approximation scheme for a hypercube truncated by a single arbitrary halfspace, as depicted in Figure \ref{fig:one-plane}. This family of polytopes is exactly the one used in the \#P-hardness proof by Dyer and Frieze.

\begin{theorem}\label{thm:one-plane}
    Let $P = [0,1]^n \cap \{x : a^\top x \leq b\}$ where $a \in \Z^n, b\in \Z$. Given $\eps >0$, there exists a deterministic algorithm that computes $Z' \in \R$ such that \[\vol(P) \leq Z' \leq (1 + \eps)\vol(P),\] 
    using $O\left(\frac{n^3}\eps (\log\frac{n}\eps + L)^4\right)$ arithmetic operations on $O(L + \log (n/\eps))$-bit numbers, where $L$ is the maximum encoding length of an input parameter.
\end{theorem}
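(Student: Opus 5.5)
First normalize: replacing $x_i$ by $1-x_i$ for every $i$ with $a_i<0$ preserves volume and turns the constraint into $a^\top x\le b'$ with $a\ge 0$. Discard coordinates with $a_i=0$, since they contribute a factor $1$. If $b'<0$ the volume is $0$; if $b'\ge\sum_i a_i$ it is $1$. Otherwise the volume is $V(n,b')$, where
\[
V(j,t)=\vol\Bigl(\{x\in[0,1]^j:\textstyle\sum_{i\le j}a_ix_i\le t\}\Bigr).
\]
This satisfies the recursion
\[
V(j,t)=\int_0^1 V\bigl(j-1,\,t-a_jx_j\bigr)\,dx_j,
\]
with $V(0,t)=\mathbf 1[t\ge 0]$. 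For each fixed $j$, $V(j,\cdot)$ is nondecreasing in $t$, is $0$ for $t<0$, and is $1$ for $t\ge\sum_{i\le j}a_i$. The plan is the Dyer-style ``approximate counting by dynamic programming'' idea, but with the grid placed on the value axis rather than the $t$ axis. That is, we maintain a monotone step-function upper approximation $\tilde V_j\ge V(j,\cdot)$ whose breakpoints are where the value crosses powers of $(1+\delta)$, with $\delta=\eps/(2n)$.

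Concretely, $\tilde V_j$ is stored as a list of pairs $(t_\ell,v_\ell)$: for $t\in[t_{\ell},t_{\ell+1})$ we set $\tilde V_j(t)=v_{\ell+1}$, rounding up. The value levels are $v_\ell=(1+\delta)^{\ell}v_{\min}$. Below the smallest level we need a floor. The true volume is at least roughly $\min(1,b'/\sum a_i)^n/n!$, or, more carefully, at least $\prod$-type bounds. Either way $\log(1/V)$ is $O(n(L+\log n))$, so the number of levels is $O(\frac{n}{\eps}(L+\log\frac n\eps))$ per stage.

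To compute $\tilde V_j$ from $\tilde V_{j-1}$, we apply the integral recursion to the step function. Since $\tilde V_{j-1}$ is piecewise constant, the map $t\mapsto\int_0^1\tilde V_{j-1}(t-a_jx)\,dx$ is piecewise linear and monotone with $O(\#\text{levels})$ pieces. For each target level we locate the largest $t$ whose value is at most that level by binary search over the breakpoints, evaluating the integral in linear time. We then round values up to the next level. An inductive argument gives $V(j,\cdot)\le\tilde V_j\le(1+\delta)^{j}V(j,\cdot)$. The lower inequality holds because integration preserves pointwise upper bounds and rounding only increases values. The upper inequality holds because the integral of a $(1+\delta)^{j-1}$-approximation is a $(1+\delta)^{j-1}$-approximation, and one more rounding costs a factor $1+\delta$. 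After $n$ stages we output $Z'=\tilde V_n(b')\le(1+\delta)^n V\le(1+\eps)V$.

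The main obstacle is the bottom of the range. Near $t=0$ the function $V(j,t)$ behaves like $t^j/(j!\prod a_i)$, so a geometric value grid needs a floor, and truncating below it must not destroy the multiplicative guarantee at the query point $b'$. The fix I would try first is to treat everything below level $v_{\min}=(1+\delta)^{-1}\cdot\text{(lower bound on }V(n,b'))\cdot(\text{crude factor})$ by bounding those contributions additively. In the recursion, the contribution from $t$-values where $\tilde V_{j-1}<v_{\min}$ is at most $v_{\min}$, and this is absorbed into the relative error because $V(j,t)$ at relevant $t$ is at least $V(n,b')\ge v_{\min}\cdot\mathrm{poly}$. Alternatively, one could use the exact polynomial form $t^j/(j!\prod a_i)$, valid for $t\le\min a_i$, in that region. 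The remaining work is the operation count: $n$ stages, times $O(\frac n\eps(L+\log\frac n\eps))$ levels, times a binary search of depth $O(L+\log\frac n\eps)$ with linear-time evaluations. Keeping all numbers rounded to $O(L+\log(n/\eps))$ bits and accounting for those rounding errors inside the $\delta$ budget gives the stated bound.
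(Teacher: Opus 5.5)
Your route is genuinely different from the paper's, and its skeleton is sound. The paper never works with $V(j,t)$. It scales by a power of two $u\ge 9n^{2.5}/(\eps\,\ell(P))$ and uses the cube-containment lemma to replace $u^n\vol(P)$ by the number of integer points in $[0,u)^n\cap\{a^\top x\le ub\}$. It then turns that count into a $0$--$1$ knapsack count on $n\log u$ variables by binary expansion and calls the Stefankovic--Vempala--Vigoda counter as a black box.

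Discretizing gives the paper two things for free. Counts are at least $1$, so there is no floor. And the SVV recurrence stores only integer capacities, with counts implicit as level indices, so the bit bound is immediate.

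You run a continuous analogue of that dynamic program on the volume function itself. Once completed, this avoids the cube-counting lemma and the binary expansion, and it even improves the dependence on $L+\log\frac n\eps$. Your two-sided induction is correct away from the floor: the integral is an average, so it preserves pointwise bounds, and rounding up costs one factor $1+\delta$. But the floor argument and the complexity accounting have concrete gaps.

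\textbf{The floor.} Your reason that the truncated mass is absorbed, namely that ``$V(j,t)$ at relevant $t$ is at least $V(n,b')$'', is false. The window of $t$ that feeds into $\tilde V_n(b')$ at stage $j$ is $[b'-\sum_{i>j}a_i,\,b']$, which typically reaches down to $t\approx 0$, where $V(j,t)\to 0$. For example, with $a=(1,1)$ and $b'=1$ one has $V(2,1)=\frac12$ but $V(1,t)=t$ on $[0,1]$. So the error cannot be converted to relative error stage by stage.

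What works is to carry a mixed pointwise bound $V(j,\cdot)\le\tilde V_j\le(1+\delta)^jV(j,\cdot)+E_j$. Flooring the computed integral at $v_{\min}$ gives $E_j=(1+\delta)E_{j-1}+v_{\min}$, because averaging does not amplify an additive error. Hence $E_n\le n(1+\delta)^n v_{\min}$. It then suffices to take $v_{\min}\le\frac{\eps}{4n}(b'/\sum_i a_i)^n$, which is at most $\frac{\eps}{4n}V(n,b')$ because the cube $[0,b'/\sum_i a_i]^n$ lies in the body. This needs $b'\ge 1$, which holds in the nontrivial case since $b'$ is an integer. The case $b'=0$ must be answered by $0$ separately.

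\textbf{Running time.} With $\delta=\Theta(\eps/n)$ and $\log(1/v_{\min})=\Theta(n(L+\log n)+\log\frac1\eps)$ in the worst case, each stage has $N=\Theta(\frac{n^2}{\eps}(L+\log\frac n\eps))$ levels, not $O(\frac n\eps(L+\log\frac n\eps))$.

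More seriously, your procedure does one binary search per level with a linear-time evaluation of the integral at each probe. That costs about $N^2$ per stage, roughly $n^5(L+\log\frac n\eps)^2/\eps^2$ in total. Even with your level count the product carries $\eps^{-2}$, so the stated bound does not follow.

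The fix is to keep a prefix integral $F$ of $\tilde V_{j-1}$ and write $W(t)=(F(t)-F(t-a_j))/a_j$. All level crossings can then be found in one merge over the $O(N)$ linear pieces of $W$. This costs $O(N)$ per stage and $O(\frac{n^3}{\eps}(L+\log\frac n\eps))$ overall, which is within the theorem's bound.

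\textbf{Bit length.} Values go down to $2^{-\Theta(nL)}$, so the claim of $O(L+\log(n/\eps))$-bit numbers can only hold in floating point, and you give no argument for it. The argument can be supplied as follows.

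First, the cancellation in $F(t)-F(t-a_j)$ is benign. Since $\tilde V_{j-1}$ is nondecreasing and vanishes for $t<0$, we have $F(t-a_j)\le\frac{t-a_j}{a_j}\,(F(t)-F(t-a_j))$ for $t\ge a_j$, and $F(t-a_j)=0$ otherwise. So at most $O(L+\log n)$ bits of relative precision are lost.

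Second, breakpoints must be rounded leftward so that the upper bound survives. The accumulated shift $s\le n\gamma$, with $\gamma$ the rounding granularity, is controlled by $V(n,b'+s)\le(1+s/b')^nV(n,b')$. So $O(\log\frac n\eps)$ fractional bits suffice.
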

Note that any linear constraint $a^\top x \leq b$ with $a_1, \dots, a_n, b \in \Q$ intersecting a cube of any length can be transformed into a polytope of the type above. Such an affine transformation would scale the volume by the determinant of the transformation matrix.
This theorem is interesting because, while there are known formulas computing the volume of a hypercube clipped by a single hyperplane~\cite{Barrow01011979}, they rely on exponentially many arithmetic operations.

Our next result is for a hypercube truncated by a decomposable \emph{rational} convex constraint, as depicted in Figure \ref{fig:one-conv}. These special constraints include p-norm balls $\{x : \|x\|_p \leq b\}$ for any $p\geq 1,b \geq 0$. 
\begin{theorem}\label{thm:one-convex}
    Let $Q = [0,1]^n \cap \{x : \sum_{j = 1}^nf_j(x_j) \leq b\}$ where $b\in \Q$ and $f_j: \R \to \R_{\geq 0}$ are nonnegative, nondecreasing convex functions such that on input $x_j \in \Q$ then $f_j(x_j) \in \Q$ for all $j \in [n]$. Given $\eps >0$, there exists a deterministic algorithm that computes $Z' \in \R$ such that \[\vol(Q) \leq Z' \leq (1 + \eps)\vol(Q)\] 
        using $O(n^5 (L_o+\log({n/}\eps))^3 (L + \log n) / \eps^2)$ 
        arithmetic operations on $O(L(\log(\frac{n}{\eps})+ L_o))$-bit numbers, where $L$ is the max input encoding length and $L_o$ is the max output encoding length, i.e. the encoding length of $\vol(Q)$.
\end{theorem}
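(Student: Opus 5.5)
We follow the dynamic-programming / rounding paradigm used for deterministic approximate counting of knapsack solutions (Dyer, Gopalan--Klivans--Meka--Štefankovič--Vempala--Vigoda, Štefankovič--Vempala--Vigoda), adapted to volumes. Define, for $j \in \{0,\dots,n\}$ and a budget $t \ge 0$, the function
\[
V_j(t) = \vol\Bigl(\{x \in [0,1]^j : \textstyle\sum_{i\le j} f_i(x_i) \le t\}\Bigr),
\]
so that $\vol(Q) = V_n(b)$. We have the recursion $V_j(t) = \int_0^1 V_{j-1}(t - f_j(s))\,ds$ (with $V_{j-1}(u)=0$ for $u<0$). Each $V_j$ is nondecreasing in $t$. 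Because the $f_i$ are convex, the sublevel set is convex, so $V_j$ is in fact log-concave in $t$; in any case it is monotone.

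Following the ``inverse'' trick, we do not tabulate $V_j$ at budgets. Instead we tabulate, for each $j$ and each level $a$ in a geometric grid $\{v_{\min}(1+\delta)^{i}\}$, an approximation $\tau_j(a)$ of the largest budget $t$ with $V_j(t)\le a$. Here $\delta = \eps/(2n)$. The grid only needs to reach down to the smallest relevant volume, which we lower-bound by $2^{-\mathrm{poly}(L_o)}$ using the output length $L_o$. This bound is what makes the grid size $O(n(L_o+\log(n/\eps))/\eps)$.

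Given $\tau_{j-1}$, we evaluate the recursion for a candidate budget $t$ as follows. For each grid level $a_i$, the set of $s\in[0,1]$ with $t - f_j(s) \ge \tau_{j-1}(a_i)$ is an interval $[0, s_i]$, because $f_j$ is nondecreasing. Its endpoint $s_i$ is found by binary search on the rational-valued, monotone $f_j$, to $O(L+\log n)$ bits. Integrating the staircase approximation of $V_{j-1}$ then gives a lower estimate
\[
\textstyle\sum_i a_i\,(s_i - s_{i+1}),
\]
which is within a factor $(1+\delta)$ of the true integral. We set $\tau_j(a)$ by a further binary search over $t$ for the largest $t$ whose estimate is at most $a$. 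Monotonicity in $t$ makes this search valid. By induction, each layer loses at most a $(1+\delta)$ factor, which yields
\[
V_n(b) \le Z' \le (1+\delta)^{n}\, V_n(b) \le (1+\eps)\, V_n(b)
\]
after an upward rounding at the end.

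The operation count is (layers $n$) $\times$ (grid size) $\times$ (binary searches over $t$ and over $s$, each of depth $O(L+\log n)$ or $O(L_o+\log(n/\eps))$) $\times$ (grid size for the sum). This is consistent with the stated $n^5(\cdot)^3(\cdot)/\eps^2$ bound.

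The main obstacle is the error analysis that sandwiches the true $V_j$ between the rounded versions. We must show that using $\tau_{j-1}$ (rather than the true inverse) inside the integral gives
\[
\tilde V_{j-1}\le V_{j-1}\le(1+\delta)\tilde V_{j-1}
\]
pointwise, and that this relation survives integration over $s$ and the outer binary search. The binary searches are over reals, so $f_j$'s inverse and the budget $t$ must be truncated to finite precision. That truncation error has to be absorbed into $\delta$, and we would bound it using the $L$-bit encoding and a lower bound on the derivative-free monotone change of $V_j$ near the relevant levels. This is the second delicate point: controlling the bit length $O(L(\log(n/\eps)+L_o))$ of the stored budgets, and ensuring that the smallest nonzero volumes (which the grid must reach) are bounded below in terms of $L_o$.
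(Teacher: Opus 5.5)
Your route differs from the paper's. You run a continuous analogue of the dynamic program behind Theorem~\ref{SVV} directly on $V_j(t)$. The paper discretizes first: it computes $\ell'$ by FIND\_INTERCEPT (Theorem~\ref{find-l}) and dilates by $u=\lceil 9n^{2.5}/(\eps\ell')\rceil$. It then shows, via Proposition~\ref{Cube containment} and Lemma~\ref{int-pts-vol}, that an $\eps/9$-approximate count of the integer points of the dilated body is an $\eps$-approximation of its volume. Finally it counts those points with the rounded interval ROBP of Theorem~\ref{single-conv-ROBP}, whose states are exact partial sums $\sum_j f_j(x_j/u)$ at integers, so breakpoints can be located exactly (Claim~\ref{encoding-gap}). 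The points you defer as ``delicate'' are exactly what that discretization handles, and as written they do not go through.

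First, the intermediate volumes are \emph{not} bounded below. As $f_j(s)\uparrow t$, the integrand $V_{j-1}(t-f_j(s))$ decreases continuously to $V_{j-1}(0)$. This limit is already $0$ for $f_i(x)=x$, where $V_{j-1}(\tau)=\tau^{j-1}/(j-1)!$. Only the final value $V_n(b)=\vol(Q)\ge 2^{-L_o}$ comes with a lower bound. Hence a finite geometric staircase is not within a factor $(1+\delta)$ of the integral, and the purely multiplicative invariant $\tilde V_{j-1}\le V_{j-1}\le(1+\delta)\tilde V_{j-1}$ cannot hold pointwise. You need a mixed invariant, e.g.\ $(1+\delta)^{-j}V_j-j\,v_{\min}\le\tilde V_j\le V_j$, with the grid truncated at $v_{\min}\approx\eps 2^{-L_o}/n$. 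This works because each layer averages over $s\in[0,1]$, so additive losses do not grow. It is repairable, but it is not what you wrote.

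Second, and more seriously, finite precision.

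\emph{Precision in $s$.} An error of $2^{-p}$ in the endpoints $s_i$ moves $\sum_i a_i(s_i-s_{i+1})=\sum_i s_i(a_i-a_{i-1})$ (with $a_0=0$) by up to $2^{-p}$ in \emph{absolute} terms. So the natural requirement is $2^{-p}\lesssim \eps\vol(Q)/n$, i.e.\ $p=\Theta(L_o+\log(n/\eps))$, which is the analogue of the paper's $\log u$. Nothing you wrote justifies $O(L+\log n)$ bits.

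\emph{Precision in $t$.} For the budgets you need a modulus-of-continuity bound for $V_j$ (or for your computed step function), and none is supplied. It can be very poor near small budgets. For $f_1(x)=((x-\tfrac12)_+)^m$ one has $V_1(\gamma)=\tfrac12+\gamma^{1/m}$ for small $\gamma$. So the grid level just above $V_1(0)=\tfrac12$ has $\tau_1$-value at most $(\delta/2)^m$, which takes about $m\log(2/\delta)$ bits to resolve, and bounding $m$ requires the rational-encoding assumption on the $f_j$. Convexity with $f_i(0)=0$ does give $V_j((1+\gamma)t)\le(1+\gamma)^jV_j(t)$. Hence relative precision in $t$ suffices once the smallest relevant positive budget is bounded below, but that bound and the resulting bit count must be proved. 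Alternatively, restricting $s$ to $p$-bit dyadics makes your estimate a step function with finitely many rational jump points that can be searched exactly; that is the paper's lattice discretization in another guise.

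Until the inverse-table invariant is established with these truncation and precision errors included — the step you yourself call the main obstacle — the proposal is a plausible plan rather than a proof.
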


Our next result is for multiple linear constraints, as depicted in Figure \ref{fig:multi-plane}. 
We give a deterministic fully polynomial-time approximation scheme for a hypercube truncated by a \emph{fixed} number, $k$, of halfspaces, each defined by normals with \emph{nonnegative} coefficients. The complexity is polynomial in the dimension for any fixed $k$.  
\begin{theorem}\label{thm:multi-plane}
    Let $P = [0,1]^n \cap \{x : A x\leq b\}$ where $A \in \Z^{k \times n}_{\geq 0}$ and $b \in \Z^k$. Given $\eps >0$, there exists a deterministic algorithm that computes $Z' \in \R$ such that \[\vol(P) \leq Z' \leq (1 + \eps)\vol(P)\] using $n^{O(k^2)}(\log \frac{n}\eps +L)^{O(k)}/\eps^{O(k)}$ arithmetic operations on $O(L(\log(n /\eps) + L))$-bit numbers, where $L$ is the max input encoding length.
\end{theorem}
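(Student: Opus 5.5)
We prove Theorem~\ref{thm:multi-plane} with a deterministic dynamic program over coordinates. Each coordinate is discretized geometrically, and the $k$ partial constraint sums are rounded to a polynomial-size grid. This extends the single-halfspace scheme behind Theorem~\ref{thm:one-plane}, with the state becoming a $k$-tuple of rounded partial sums.

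\textbf{Step 1: normalization.} Write $a^{(i)}$ for the $i$-th row of $A$. If $b_i<0$ the body is empty. Coordinates with $a_{ij}>b_i$ can be capped: replace $[0,1]$ by $[0,\min(1,b_i/a_{ij})]$, which is a rescaling that changes the volume by an explicit factor. After this, each $x_j$ ranges over $[0,u_j]$ and every constraint has all terms $a_{ij}x_j\le b_i$.

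\textbf{Step 2: discretize each coordinate.} Partition $[0,u_j]$ into $m=O(n\log(n/\eps)/\eps)$ cells with geometrically shrinking lengths toward $0$. Concretely, use cells $[u_j(1+\eps/n)^{-t-1},u_j(1+\eps/n)^{-t}]$, plus a tiny cell $[0,\delta]$ whose contribution is negligible.

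Each cell is assigned its left endpoint as a representative. Rounding $x$ down to its representative gives a map $R$ with $R(x)\ge x/(1+\eps/n)$ coordinatewise. The dilation $x\mapsto(1+\eps/n)x$ has determinant $(1+\eps/n)^n\le e^{\eps}$. Since $A\ge 0$, we therefore get
\[
P\subseteq \{x: A R(x)\le b\}\subseteq (1+\eps/n)P\cap[0,\cdot].
\]
The volume of the middle set is a sum over cell tuples of products of cell lengths, restricted to tuples whose representatives satisfy $Ar\le b$. This is a weighted count of integer points $t\in[m]^n$ satisfying $k$ knapsack constraints with weights $\prod_j w_{j,t_j}$. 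It sandwiches $\vol(P)$ within a factor $e^{\eps}$. We then rescale $\eps$ and correct the containment direction so the output is an upper bound, as Theorem~\ref{thm:multi-plane} requires.

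\textbf{Step 3: approximate the weighted multi-knapsack count.} We use the Dyer / Štefankovič--Vempala--Vigoda style of rounding, adapted deterministically. Define $\tau(j,s)$ as the total weight of assignments to the first $j$ coordinates whose partial sum vector is at most $s\in\R^k$. Rather than storing exact sums, we store, for each constraint $i$ and each target weight level on a $(1+\eps/n)$-geometric grid, the minimum partial sum achieving that level.

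With $k$ constraints, a single-parameter trade-off curve is not enough, so we fix the first $k-1$ partial sums to geometric grids. For each grid point, the program stores the minimal $k$-th sum achieving each weight level. Weights and sums have polynomial bit length, so each grid has $O(nL\log(n/\eps)/\eps)$ levels. The state space is therefore $(nL/\eps)^{O(k)}$, and each transition ranges over the $m$ cells. Equivalently, one can round constraints $1,\dots,k$ by replacing $a_{ij}$ with $\lfloor a_{ij}n/(\eps b_i)\rfloor$-type integers, as in Dyer's knapsack counting. The count becomes exact over integers of size $\mathrm{poly}(n/\eps)$, giving $(n/\eps)^{O(k)}$ states. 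The rounding then costs a factor governed by the ratio of counts of rounded versus original solutions.

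\textbf{Main obstacle.} The hard step is bounding this ratio deterministically when $k>1$. For one constraint, Dyer's argument injects rounded solutions into original ones by dropping a single largest item. For $k$ constraints, one must remove up to $k$ items, one per violated constraint. Removing an item means setting its cell to the smallest one, which costs a weight factor. Bounding the multiplicity of this map needs $n^{O(k)}$ choices, and iterating over scales compounds to the $n^{O(k^2)}$ term.

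I would attempt this by splitting coordinates into "heavy" sets for each constraint. We guess up to $O(k/\eps)$ large coefficients per constraint by enumeration, costing $n^{O(k^2)}$ or $n^{O(k/\eps)}$ overall. I would try to keep the guessed set of size $O(k)$ per constraint. The remaining light coordinates are then rounded with additive error at most $\eps b_i/n$ each, and the resulting slack is absorbed by the same dilation argument as in Step 2.
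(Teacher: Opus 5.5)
\textbf{The gap is Step 3.} Your reduction in Steps 1--2 is sound up to two details. First, restricting $x_j\le\min_i b_i/a_{ij}$ does not change $P$ at all, since $A\ge 0$. Second, on the tiny cell $R(x)_j=0$, so the upper containment needs an additive slack. Taking $\delta_j=\eps u_j/n^2$ gives $a_{ij}\delta_j\le \eps b_i/n^2$, because $a_{ij}u_j\le b_i$.

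Step 3 carries all of the algorithmic content, and you leave it as something you ``would attempt''. The tools you propose cannot deliver a $(1+\eps)$ factor:
\begin{itemize}
\item A Dyer-type injection with multiplicity $n^{O(k)}$ only shows that the rounded count is within a factor $n^{O(k)}$ of the true one. That is exactly the bound $|Z|\le|S|\le 2n^k|Z|$ of Lemma~\ref{Dyer}, and Dyer turns such a sandwich into a $(1+\eps)$-estimate only by random sampling.
\item Your geometric-grid DP is too coarse as stated. If partial sums are rounded to a $(1+\eps/n)$-geometric grid at each of the $n$ layers, the errors compound to a relaxation $b\mapsto e^{\eps}b$, which can inflate the volume by $e^{\eps n}$.
\item Rounding coefficients at resolution $\eps b_i/n$ relaxes $b$ to $(1+\eps)b$ and costs $(1+\eps)^n$.
\item The heavy/light enumeration is not analyzed.
\end{itemize}

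\textbf{How the paper handles it, and what you are missing.} The paper dilates by $u\ge 9n^{2.5}/(\eps\ell)$. By Proposition~\ref{Cube containment} and Lemma~\ref{int-pts-vol}, the number of integer points in $[0,u)^n\cap\{Ax\le ub\}$ then approximates $u^n\vol(P)$. It counts these points deterministically via Theorem~\ref{multi-ROBP}. There, Dyer's rounded set is used only as a small-space source $D$. Each constraint's interval ROBP is rounded to preserve acceptance probabilities under every $D^w$, and the $2n^k$ bound converts the additive error $k\eta$ into a multiplicative one.

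In your weighted framework, the obstacle disappears for a reason specific to volume. Round each term $a_{ij}r_{j,t}$ down to a multiple of $\eps b_i/n^2$. The resulting set $S$ satisfies $\{t: Ar(t)\le b\}\subseteq S\subseteq\{t:Ar(t)\le (1+\eps/n)b\}$. Moreover $\{x\in\prod_j[0,u_j]: Ax\le(1+\gamma)b\}\subseteq(1+\gamma)P$ (divide $x$ by $1+\gamma$). Together with your cell and tiny-cell slack, the weight of the right-hand set is therefore at most $(1+O(\eps/n))^n\vol(P)=e^{O(\eps)}\vol(P)$.

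So no injection is needed. $W(S)$ can be computed exactly by a DP over the $(\lfloor n^2/\eps\rfloor+1)^k$ tuples of rounded partial sums, using $O(knm(n^2/\eps+1)^k)$ operations. Number sizes can be controlled with dyadic cell endpoints and upward-rounded arithmetic, which is harmless since all terms are nonnegative. With that step supplied, your route would be a valid and simpler alternative to the paper's, with $n^{O(k)}$ rather than $n^{O(k^2)}$ dependence. As submitted, though, the proof is incomplete.
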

Similarly to theorem \ref{thm:one-plane}, this theorem applies to any collection of nonnegative linear constraints intersecting any cube, as we can transform it into one of this form.
Also note that there exists a formula for the volume of a cube intersected with multiple halfspaces, extended by \cite{cho2022volumehypercubesclippedhyperplanes}, but again, this is an exponential computation.

Our last result is an extension to the intersection of a hypercube with multiple decomposable {\em convex} constraints, as depicted in Figure \ref{fig:multi-conv}. 
\begin{theorem}\label{thm:multi-conv}
    Let $Q = [0,1]^n \cap \{x : \sum_{j = 1}^nf_{ij}(x_j) \leq b_i \quad \forall i \in [k]\}$ where $b_i \in \Q^k$ and $f_{ij} : \R \to \R_{\geq 0}$ are nonnegative, nondecreasing convex functions such that on input $x_j \in \Q$, then $f_{ij}(x_j) \in \Q$ for all $i \in [k],j \in [n]$. Given $\eps >0$, there exists a deterministic algorithm that computes $Z' \in \R$ such that \[\vol(Q) \leq Z' \leq (1 + \eps)\vol(Q)\] using $n^{O(k^2)}(\log \frac{n}\eps + L_o)^{O(k)}(L + \log n)/\eps^{O(k)}$ arithmetic operations on $O(L(\log (n/\eps) + L_o))$-bit numbers, where $L $ is the max input encoding length and $L_o$ is the max output encoding length.
\end{theorem}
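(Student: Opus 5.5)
We plan to prove Theorem~\ref{thm:multi-conv} by a multidimensional dynamic program. It estimates the volume coordinate by coordinate, tracking the $k$ partial sums $s_i=\sum_{j\le t} f_{ij}(x_j)$ on a discretized, geometrically spaced grid of \enquote{budget used}, in the spirit of the Dyer/\v{S}tefankovi\v{c}--Vempala--Vigoda counting algorithms for knapsack. Write $V_t(c)$, for $c\in\R^k_{\ge 0}$, for the volume of $\{x\in[0,1]^t : \sum_{j\le t} f_{ij}(x_j)\le c_i\ \forall i\}$. Then
\[V_t(c)=\int_0^1 V_{t-1}\bigl(c-f_{\cdot t}(y)\bigr)\,dy,\]
with $V_{t-1}$ equal to $0$ whenever some coordinate of its argument is negative. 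Each $V_t$ is coordinatewise nondecreasing in $c$. We will store, for each $t$ and each $(1+\delta)$-\emph{volume level} vector, a threshold capacity. Because $k$ is fixed, we index by a grid in $c$ instead. We choose a grid $G_i$ for each coordinate $i$ such that consecutive grid values change $V$ by at most a factor $(1+\eps/n)$ \emph{in the relevant monotone sense}, and take $\tilde V_t(c)=V_t(\lceil c\rceil_G)$ as the upper rounding. Monotonicity then guarantees the computed value is always an upper bound, and each of the $n$ stages loses at most a $(1+\eps/(2n))$ factor, which compounds to $(1+\eps)$.

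The key steps, in order, are as follows. (1) \emph{Grid construction.} For a single coordinate we would reuse the construction behind Theorem~\ref{thm:one-convex}: choose capacity breakpoints so that the one-constraint volumes form a geometric sequence with ratio $1+\eps/(2n)$. The number of breakpoints is $O(n(L_o+\log(n/\eps))/\eps)$, since volumes range between $2^{-L_o}$-type bounds and $1$. The breakpoints are found by binary search over rationals with $O(L+\log n)$ bits of precision per search. For $k$ constraints we take the product grid, which has $(n L_o/\eps)^{O(k)}$ points. (2) \emph{Transition.} For each grid point $c$ and stage $t$ we evaluate the one-dimensional integral over $y$ exactly on the step function $\tilde V_{t-1}$. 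Each $f_{it}$ is monotone and convex, so the set of $y$ at which $c_i-f_{it}(y)$ crosses a given grid value is an interval endpoint, computable by binary search using rational evaluations of $f_{it}$. The integral is then a sum over at most $\prod_i|G_i|$ pieces. This accounts for the $n^{O(k^2)}$ factor (grid size$^{k}$ per point, times grid size, with the grid size containing $n$). (3) \emph{Error analysis.} We prove by induction that $V_t(c)\le \tilde V_t(c)\le (1+\eps/(2n))^t V_t(c)$. Rounding the argument up only increases the value, which gives the lower inequality. The upper inequality needs that rounding each coordinate up to the next grid value inflates $V_{t-1}$ by at most $1+\eps/(2nk)$ per coordinate.

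The main obstacle is step (3). The grid must be valid for \emph{all} stages $t$ and under simultaneous rounding of all $k$ coordinates, but a grid tuned to the full $n$-dimensional single-constraint volume does not directly control the ratio $V_{t-1}(c')/V_{t-1}(c)$ in the multi-constraint body. Our first attempt is to bound this ratio by a log-concavity and scaling argument. Convexity and monotonicity of the $f_{ij}$ make the feasible region $\{(x,c)\}$ convex, so $V_t$ is log-concave in $c$ by Pr\'ekopa--Leindler. Convexity together with $f_{ij}\ge 0$ also gives $V_t((1+\eta)c)\le (1+\eta)^{t}V_t(c)$, via the map $x\mapsto x/(1+\eta)$ and $f(x/(1+\eta))\le f(x)/(1+\eta)+\eta f(0)/(1+\eta)$; this requires first normalizing so that $f_{ij}(0)=0$, by shifting $b_i$. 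Hence a geometric capacity grid with ratio $1+\eps/(2n^2k)$ suffices for rounding. The remaining issue is that such a grid must also cover tiny capacities. We handle these by truncating below a threshold where the volume is negligible, which is where $L_o$ enters. This yields grid size $\mathrm{poly}(n,1/\eps,L_o)$ per coordinate, and combining with step (2) gives the stated bound of $n^{O(k^2)}(\log\frac{n}{\eps}+L_o)^{O(k)}(L+\log n)/\eps^{O(k)}$ operations.
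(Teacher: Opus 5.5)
Your route differs from the paper's, and its key lemma is sound, but the error analysis has two genuine gaps.

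\emph{Comparison.} The paper never works with the continuous volume function. It computes $\ell'\approx\ell(Q)$ by halving and dilates by $u=9n^{2.5}/(\eps\ell')$. It then uses the cube sandwich behind Lemma~\ref{int-pts-vol} to show that the lattice-point count of the dilated body is within $1+\eps$ of $u^n\vol(Q)$, and counts lattice points with Theorem~\ref{multi-ROBP}. The joint control over the $k$ constraints that you worry about comes there from Dyer's rounding used as a small-space source, with ROBP breakpoints chosen for every source vertex. $L_o$ enters only through $\log(1/\ell(Q))\le L_o$. You instead run a DP on $V_t(c)$ and get joint control from $V_t(\lambda c)\le\lambda^tV_t(c)$ for $\lambda\ge1$. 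This is correct once $f_{ij}(0)=0$, since $x\mapsto x/\lambda$ maps the $\lambda c$-body into the $c$-body inside $[0,1]^t$. That one inequality makes a single geometric grid valid at every stage and under simultaneous rounding of all coordinates, and it avoids lattices and ROBPs entirely.

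\emph{Gap 1: small capacities.} Your invariant $V_t(c)\le\tilde V_t(c)\le(1+\eps/2n)^tV_t(c)$ at every grid point cannot hold at the bottom of the grid, whichever way you treat capacities in $[0,\tau)$.
\begin{itemize}
\item If you round them up to $\tau$, then already for $k=1$ and $f_j(y)=y$ you get $\tilde V_t(\tau)=\tau^t$ against $V_t(\tau)=\tau^t/t!$, so the upper bound fails.
\item If you truncate them, the lower bound fails, and with it your claim that the output is always an upper bound.
\end{itemize}
You also give no argument that the truncated volume is negligible, nor say how $L_o$ would provide one.

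What works is a global invariant, again using only the dilation inequality. Since $R(c)\le(1+\eta)c+\tau$ coordinatewise, and $Af_t(y)\ge f_t(y)$ for $A\ge1$, induction gives $\tilde V_t(g)\le V_t(A_tg+B_t)$ with $A_t=(1+\eta)^t$ and $B_t\le t(1+\eta)^t\tau$. Taking $\tau_i=\eta b_i/n$ yields
\[
\tilde V_n(b)\le V_n\bigl((1+\eta)^{n+1}b\bigr)\le(1+\eta)^{n(n+1)}V_n(b).
\]
So $\eta=\Theta(\eps/n^2)$ suffices, and the grid has $O(n^2\log(n/\eps)/\eps)$ points per coordinate, independent of $L_o$. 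Alternatively, you can truncate, bound the lost shell $V_n(b)-V_n(b-\tau\mathbf{1})$ by the same inequality, and rescale the output.

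\emph{Gap 2: the integral is not exact.} The $y$-breakpoints solve $f_{it}(y)=g_i-\gamma$ and are typically irrational, already for $f(y)=y^2$. Binary search with rational evaluations only brackets them, so the error must be controlled. You can do this one-sidedly:
\begin{itemize}
\item The integrand is nonincreasing in $y$, so moving each breakpoint right by a factor at most $1+\eta'$ keeps an upper bound.
\item By the substitution $y\mapsto y/(1+\eta')$, this costs at most a factor $1+\eta'$ per stage.
\item The bound $f_{it}(y)\le y\,f_{it}(1)$, from convexity with $f_{it}(0)=0$, lower-bounds every breakpoint by a grid gap divided by $f_{it}(1)$. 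This bounds the search depth.
\end{itemize}

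With both repairs your argument proves the theorem, in fact with $n^{O(k)}$ in place of $n^{O(k^2)}$. As written, however, the error analysis does not close.
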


Our results are the first deterministic polynomial-time approximation schemes for volume for truncated hypercubes; they generalize and improve the complexity of several special cases studied in the literature.
Existing deterministic algorithms for approximating the volume of convex bodies are either less general or have weaker bounds.
\cite{costandin2024deterministicalgorithmquasipolynomialcomplexity} gives a quasi-polynomial deterministic approximation algorithm for computing the volume of a hypercube intersected with two sets; both of which are of the form $\{x : \sum_{i = 1}^n f_i(x_i) \leq b\}$ where each $f_i$ is a polynomial function rather than our requirement of nonnegative, nondecreasing, and convex. 
\cite{Barvinok2024} gives a polynomial-time deterministic algorithm for approximating the volume of the nonnegative orthant intersected with $k$ \emph{hyperplanes} (rather than halfspaces); however, their multiplicative error has an exponential dependence on $n-k$.
\cite{guo2024deterministicapproximationvolumetruncated} gives an FPTAS for computing the volume of the truncated fractional matching polytope for a hypergraph with bounded maximum degree: $P = \{ x \in [0,1]^V : x_u + x_v \le 1 \,\,\, \forall \{u,v\} \in E\}
$.

\subsection{Techniques}
The algorithms in this paper rely on choosing a lattice for which the fraction of lattice points in $[0,1]^n$ that lie within our convex body $K$ approximates the volume $\vol(K)$. We find such a lattice by first scaling our body $K$ by a factor of $u \in \Z_{+}$ (it now lies in the $[0,u]^n$ cube) and using points with integer coordinates. For sufficiently large $u$, we can guarantee that the number of integer points is a good approximation of $u^n\cdot \vol(K)$. 
We then adapt known fully polynomial-time approximation schemes for counting integer solutions under constraints $f(x) \leq b$ where $f$ is convex, nonnegative, monotone, and the sum of univariate functions.

It is important to note that we choose $u$ sufficiently large in a way that guarantees that there is no direction in which the body that is too \enquote{narrow}, and thus has few integer points. To do this, we ensure that the minimum axis-intercept (as defined in \ref{axis-intercept}) of our scaled body is \enquote{big enough}. In order to use this notion of axis-intercept, we need some information about where each constraint intersects each axis. This is where our dependence on the encoding length of input and output parameters is crucial. In addition, given constraint $\sum_{j = 1}^nf_{ij}(x_j) \leq b_i$, we will assume that given input $x_j \in \Q$ with encoding length $L_x$, the function value $f_{ij}(x_j)$ has encoding length at most $LL_x$, where $L$ is the maximum input encoding length. This follows from the assumption that each $f_j$ is a \emph{rational} function, i.e. on rational input, it has rational output. Note that this allows us to efficiently evaluate $f_{ij}(x_j)$ on rational inputs, as we are given $f_{ij}$ explicitly.

Now we introduce the main component of each of the volume-estimation algorithms: counting integer solutions in the scaled body.

Our algorithm for Theorem \ref{thm:one-plane} relies on counting knapsack solutions. We call on the following result by \cite{SVV10}.
\begin{theorem}[Stefankovic-Vempala-Vigoda]\label{SVV}
Let $Z = |\{ x \in \{0,1\}^n : w^\top x \leq b \}|$ where $w \in \Z^n_{\geq 0}, b \in \Z_{\geq 0}$. Given $\eps \in (0,1)$, there exists a deterministic algorithm  COUNT\_KNAPSACK($\eps$) that computes $Z' \in \R_{\geq 0}$ such that \[
(1 - \eps) Z \le Z' \le Z.
\]
The algorithm runs in time $O\!\left(n^3 \log(n/\eps)/\eps\right)$.
\end{theorem}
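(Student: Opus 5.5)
The plan is to follow the deterministic dynamic-programming approach of Stefankovic, Vempala and Vigoda. The key idea is to swap the roles of count and capacity. For $i \in \{0,\dots,n\}$ and a real $a \ge 1$, define
\[
\tau(i,a) = \min\{ C : |\{x \in \{0,1\}^i : \textstyle\sum_{j\le i} w_j x_j \le C\}| \ge a\},
\]
the smallest capacity for which the first $i$ items have at least $a$ feasible subsets. We set $\tau(i,a)=\infty$ when $a > 2^i$, and $\tau(0,a)=0$ for $a\le 1$. The number $Z$ is then the largest $a$ with $\tau(n,a)\le b$.

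\textbf{Exact recurrence.} Split the feasible subsets according to whether item $i$ is used. The exact recurrence is
\[
\tau(i,a) = \min_{\alpha\in[0,1]} \max\bigl(\tau(i-1,\alpha a),\ \tau(i-1,(1-\alpha)a)+w_i\bigr).
\]
Capacity $C$ admits $\ge a$ subsets iff the subsets avoiding $i$ with weight $\le C$, plus those using $i$ with remaining weight $\le C-w_i$, total at least $a$. One direction takes $\alpha a$ to be the count of the first part. The other direction is immediate from monotonicity of $\tau$ in $a$.

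\textbf{Discretization.} Only restrict $a$ to powers $Q^j$ with $Q = 1+\eps/(n+1)$ and $0\le j\le s$, where $s = \lceil n\log_Q 2\rceil = O(n^2/\eps)$. Define the table
\[
T[i,j] = \min_{\alpha\in[0,1]} \max\bigl(T[i-1,\lfloor j+\log_Q\alpha\rfloor],\ T[i-1,\lfloor j+\log_Q(1-\alpha)\rfloor]+w_i\bigr),
\]
with boundary conventions $T[i,j]=0$ for $j\le 0$ and $T[0,j]=\infty$ for $j>0$. Two observations make this cheap:
\begin{itemize}
\item It suffices to consider $\alpha$ at the $O(s)$ breakpoints where either floor changes.
\item The first argument of the max is nondecreasing in $\alpha$ and the second is nonincreasing, so a binary search over $\alpha$ finds the optimum in $O(\log s)$ steps.
\end{itemize}
Each entry therefore costs $O(\log s)$, and filling the $n\times s$ table costs $O(n\cdot s\log s)$, which fits within the claimed $O(n^3\log(n/\eps)/\eps)$.

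\textbf{Invariant (the main obstacle).} The delicate step is the sandwich invariant
\[
\tau(i,Q^{j-i}) \le T[i,j] \le \tau(i,Q^j),
\]
proved by induction on $i$.
\begin{itemize}
\item Upper bound: take the optimal real $\alpha$ for $\tau(i,Q^j)$ and round its induced indices down. By monotonicity of $T$ in $j$, the cost only decreases.
\item Lower bound: each floor loses at most one factor of $Q$. So the $\alpha$ chosen in the table certifies $\tau(i,\cdot)$ at count $\alpha Q^{j-(i-1)-1}+(1-\alpha)Q^{j-(i-1)-1}=Q^{j-i}$.
\end{itemize}
The care needed is in the boundary indices and in checking that flooring in both branches costs only one factor of $Q$ per level, not two.

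\textbf{Output and error.} Let $j^*$ be the largest index with $T[n,j^*]\le b$, and output $Z'=Q^{j^*-n}$. The invariant gives $\tau(n,Q^{j^*-n})\le b$, so $Z\ge Q^{j^*-n}=Z'$. Also $\tau(n,Q^{j^*+1})\ge T[n,j^*+1]>b$, so $Z< Q^{j^*+1}$. Hence
\[
Z'\ge Z/Q^{n+1}\ge (1-\eps)Z,
\]
using $(1+\eps/(n+1))^{n+1}\le e^{\eps}$ and adjusting $\eps$ by a constant.
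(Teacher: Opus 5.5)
\paragraph{Comparison with the paper.} The paper gives no proof of this statement; it imports it from \cite{SVV10}. Your plan reconstructs that dynamic program: the same $\tau$, the same $Q=1+\eps/(n+1)$ and $s$, the same sandwich invariant and binary search. Reading off $Z'=Q^{j^*-n}$ correctly gives the one-sided underestimate claimed here.

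\paragraph{The gap: your boundary convention.} As written, the convention you chose is wrong, and it breaks the upper half of your invariant. When $\alpha=1$ the second index is $\lfloor j+\log_Q 0\rfloor=-\infty$. Your rule $T[i,j]=0$ for $j\le 0$ then turns the second term into $0+w_i$. For $\alpha<1$ the second term is $T[i-1,\cdot]+w_i\ge w_i$ anyway. Hence every entry with $i,j\ge 1$ satisfies $T[i,j]\ge w_i$, and $T[i,j]\le\tau(i,Q^j)$ fails whenever many subsets avoid a heavy item.

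Concretely, take $w=(0,\dots,0,1)$ and $b=0$. Then $Z=2^{n-1}$, but $T[n,j]\ge 1>b$ for all $j\ge 1$. So $j^*=0$ and you output $Q^{-n}<1$, far below $(1-\eps)2^{n-1}$.

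In your upper-bound step this is exactly the case where rounding indices down does not help: at $\alpha\in\{0,1\}$ you would need $T[i-1,-\infty]\le\tau(i-1,0)$. Your exact recurrence has the same defect. It is valid only with $\tau(i-1,0)=-\infty$, since an empty branch should impose no capacity constraint. You instead defined $\tau$ for $a\ge 1$ and set $\tau(0,a)=0$ for $a\le 1$. With $\tau(i-1,0)=0$ the recurrence would give $\tau(i,1)\ge w_i$ instead of the correct value $0$.

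\paragraph{The repair.} Set $T[i-1,-\infty]=-\infty$, mirroring $\tau(i-1,0)=-\infty$, while keeping $T[i-1,k]=0$ for finite $k\le 0$. With this change:
\begin{itemize}
\item The upper bound holds. A finite $k<0$ means $\alpha>0$, and then $\tau(i-1,\alpha Q^j)\ge 0$.
\item The lower bound still holds. A finite index $\lfloor j+\log_Q\alpha\rfloor\le 0$ forces $\alpha Q^{j-i}<Q^{1-i}\le 1$, where $\tau(i-1,\cdot)\le 0$.
\item $T$ stays nondecreasing in $j$ by a one-line induction. You should state this, since both your rounding step and the binary search over $\alpha$ depend on it.
\end{itemize}
The rest of your argument then goes through. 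This includes the $O(ns\log s)=O(n^3\log(n/\eps)/\eps)$ count and the bound $Z'\ge Z/Q^{n+1}\ge e^{-\eps}Z\ge(1-\eps)Z$, which needs no rescaling of $\eps$.

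Also treat $j^*=s$ separately. There, $Z<Q^{j^*+1}$ follows from $Z\le 2^n\le Q^s$, not from the undefined entry $T[n,s+1]$.
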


Our algorithm for Theorem \ref{thm:one-convex} relies on counting integer solutions under our specific constraint. 
We use a slight adaptation of a result by \cite{DBLP:journals/corr/abs-1008-3187}, which we prove in the appendix.

\begin{theorem}[Extension to Thm 1.3 in \cite{DBLP:journals/corr/abs-1008-3187}]\label{single-conv-ROBP}
Let $Z(f,b,u) =\{ x \in \{0,1, \dots, u-1\}^n : \sum_{j = 1}^nf_j(x_j) \leq b\}$, where each $f_j: \R \to \R_{\geq 0}$ is nondecreasing and rational. Given $\del >0$, there is a deterministic  algorithm ROUND\_ROBP($\del$) that computes $Z' \in \R$ such that 
\[|Z(f,b,u)| \leq Z' \leq (1 + \del)|Z(f,b,u)|. \] using $O(n^5 (\log u)^3 (L + \log n) / \del^2)$ arithmetic operations on $O(L \log u + \log n)$-bit numbers, where $L$ is the maximum input encoding length.
\end{theorem}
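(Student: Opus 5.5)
The plan is to follow the read-once branching program (ROBP) approach of Gopalan--Klivans--Meka--Stefankovic--Vempala--Vigoda: express $|Z(f,b,u)|$ as the acceptance count of a layered width-bounded program, then compress that program by rounding its states.

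\textbf{Step 1: the exact ROBP.} Process the coordinates $x_1,\dots,x_n$ in order. After reading $x_1,\dots,x_j$, the relevant state is the partial sum $s_j=\sum_{i\le j} f_i(x_i)$, truncated at $b$; every sum exceeding $b$ is collapsed into a single reject state. Layer $j$ has edges labelled by $x_j\in\{0,\dots,u-1\}$. Since $f_j\ge 0$, once the partial sum exceeds $b$ the input is rejected, so this is a monotone ROBP. The difficulty is that the number of distinct partial sums can be exponential.

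\textbf{Step 2: counting backwards.} For each layer $j$ and each threshold $t$, define
\[
N_j(t)=\#\Bigl\{(x_{j+1},\dots,x_n): \sum_{i>j} f_i(x_i)\le t\Bigr\}.
\]
Then $|Z(f,b,u)|=N_0(b)$, and
\[
N_{j-1}(t)=\sum_{x=0}^{u-1} N_j\bigl(t-f_j(x)\bigr).
\]
Each $N_j$ is a nondecreasing step function of $t$ taking integer values in $[0,u^{n-j}]$. Following \cite{SVV10}, represent an approximation $\tilde N_j$ by its breakpoints only where the value crosses a power of $(1+\delta/(2n))$. This needs $O(n(\log u)/\delta)$ breakpoints per layer, because the values range up to $u^n$ and $\log_{1+\delta/2n}u^n=O(n^2\log u/\delta)$.

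\textbf{Step 3: accumulating the error.} To compute $\tilde N_{j-1}$ at a candidate threshold, sum over $x$ the quantity $\tilde N_j(t-f_j(x))$. Monotonicity of $f_j$ lets the $u$ values of $x$ be grouped into $O(\#\text{breakpoints})$ intervals on which $\tilde N_j(t-f_j(x))$ is constant, found by binary search over $x$ using evaluations of $f_j$. This is where the $\log u$ factors come from, and it avoids a running time linear in $u$. The breakpoints of $\tilde N_{j-1}$ are themselves located by binary search over $t$. This search is over rational values of bounded encoding length $O(L\log u)$, using the assumption that $f_j$ on inputs of length $O(\log u)$ has output length $O(L\log u)$.

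The rounding always goes upward, rounding values up to the next power so that we get an overestimate. The error per layer is then a factor $(1+\delta/2n)$, and the total is at most $e^{\delta/2}\le 1+\delta$. This gives $|Z|\le Z'\le(1+\delta)|Z|$. For the upward direction, note that $\tilde N_j\ge N_j$ pointwise is preserved by the recursion because the recursion is monotone in $N_j$.

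\textbf{Main obstacle.} The expected bottleneck is making the binary searches over real thresholds $t$ rigorous and efficient. The breakpoints are sums of $f$-values, so their bit length grows. I would handle this by restricting the candidate thresholds to values of the form $t'-f_j(x)$, where $t'$ is a breakpoint of the next layer. This keeps the encoding length at $O(L\log u+\log n)$ via the rationality assumption. Counting $n$ layers, $O(n^2\log u/\delta)$ breakpoints per layer, and $O(n^2\log u/\delta)$ intervals per evaluation, each costing $O(\log u\,(L+\log n))$ for the searches, gives the claimed $O(n^5(\log u)^3(L+\log n)/\delta^2)$ operations.
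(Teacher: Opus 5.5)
Your plan is the paper's proof in remaining-capacity coordinates rather than partial-sum coordinates. The paper's acceptance probability $P_{\hat M}(v)$ of a partial-sum state $v$ in layer $\ell$ corresponds to your $\tilde N_\ell(b-v)/u^{n-\ell}$. Its redirection of each child $v$ to the breakpoint $\beta_i\le v$ is your upward rounding of the threshold; the paper places breakpoints where $P$ drops by a factor $1+\eta$ relative to the previous breakpoint rather than on a fixed geometric grid, which makes no essential difference. The choice $\eta=\delta/2n$, the bound $(1+\eta)^n\le 1+\delta$, and the binary searches over edge labels and over state/threshold values are also the same. Your induction $N_j\le\tilde N_j\le(1+\eta)^{n-j}N_j$ uses only that the recursion is monotone and that each rounding is upward by at most a factor $1+\eta$. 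It is a correct and shorter substitute for the paper's interval-ROBP Claims~\ref{A.2.1}--\ref{A.3.2}. One slip: the number of breakpoints per layer is $O(n^2\log u/\delta)$, as your own computation shows, not $O(n\log u/\delta)$.

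The step that does not go through is your resolution of the \emph{main obstacle}. It is right that breakpoints can be taken among the exact jump points, and this would make the paper's encoding-gap device unnecessary. In your coordinates those jump points are $t'+f_j(x)$, not $t'-f_j(x)$. But this only shows that every breakpoint is an exact sum $\sum_{i\ge j}f_i(x_i)$. The rationality assumption bounds the encoding length of each summand by $O(L\log u)$, not that of the sum, and adding rationals with unrelated denominators multiplies denominators. For example, take $f_i(x)=x/p_i$ for distinct primes $p_i$ and $u=2$. The sum at $x=(1,\dots,1)$ is $\sum_{i\le n}1/p_i$, whose reduced denominator is $\prod_i p_i$, i.e.\ $\Theta(n\log n)$ bits, while $L\log u+\log n=O(\log n)$.

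So neither the $O(L\log u+\log n)$ bit length nor the $O(L\log u+\log n)$ iteration count of your threshold bisection follows. Worse, bisection carried out only to that precision can misplace a breakpoint relative to a true jump, which is what the $(1+\eta)$ guarantee rests on. The restriction also does not by itself give a search procedure. There can be $u\cdot O(n^2\log u/\delta)$ candidates, exponentially many, so you still need either bounded-precision bisection or a search over the $O(n^2\log u/\delta)$ lists $\{t'_a+f_j(x)\}_x$, each sorted in $x$.

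The paper's proof relies on the identical unsupported assertion that every partial sum has encoding length at most $\log n+L\lceil\log u\rceil$, used together with Claim~\ref{encoding-gap}. So this is a gap you share with the paper rather than a departure from it. It disappears if all values $f_j(x)$ have a common denominator (e.g.\ integer-valued $f_j$, the knapsack case). In general one has to carry $O(nL\log u+\log n)$-bit thresholds, which keeps everything polynomial but costs up to a factor $n$ in bit length and in bisection steps relative to the stated bound.

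Separately, in your operation count the $O(L\log u+\log n)$ bisection steps over $t$ and the $O(\log u)$ label searches inside each evaluation should multiply. That gives one more factor of $\log u$ than claimed; the paper's Claim~\ref{A.4} omits the label searches in the same way.
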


For Theorems \ref{thm:multi-plane} and \ref{thm:multi-conv} our FPTAS that counts integer points relies on adaptations of two more theorems, both of which we also prove in the appendix.
\begin{lemma}[Extension to \cite{10.1145/780542.780643}]\label{Dyer}
Let $U_n = \{0,\dots,u_1\}\times \dots \times \{0,\dots,u_n\}.$
Let $Z = \cap_{i \in [k]} Z_i$ where $Z_i =\{ x \in U_n : \sum_{j = 1}^nf_{ij}(x_j) \leq b_i \}$ and each function $f_{ij} : \R \rightarrow \R_{\geq 0}$ is nondecreasing and convex. There exists a set 
\[
S = \cap_{i \in [k]}S_i \quad \mbox{ where } \quad S_i = \bigg\{ x \in U_n : \sum_{j = 1}^n \bigg\lfloor \frac{2n^2f_{ij}(x_j)}{b_i}\bigg\rfloor \leq 2n^2 \bigg\}
\]
such that $|Z|\leq |S| \leq 2n^k|Z|$.
\end{lemma}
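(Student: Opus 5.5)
The plan is a charging argument in the style of Dyer's rounding for knapsack. First I show $Z\subseteq S$, which gives $|Z|\le|S|$. Then I build a map $\phi:S\to Z$ and show every point of $Z$ has at most $2n^k$ preimages. Throughout I normalize so that $\sum_j f_{ij}(0)\le b_i$ for every $i$; otherwise $Z=\emptyset$.

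\textbf{Lower bound.} Take $x\in Z$. Since $\lfloor t\rfloor\le t$, we get
\[
\sum_j \Big\lfloor \tfrac{2n^2 f_{ij}(x_j)}{b_i}\Big\rfloor \le \tfrac{2n^2}{b_i}\sum_j f_{ij}(x_j)\le 2n^2
\]
for every $i$. Hence $x\in S_i$ for all $i$, so $Z\subseteq S$.

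\textbf{Slack of $S$.} Take $x\in S$. Using $t<\lfloor t\rfloor+1$ in each of the $n$ terms gives
\[
\tfrac{2n^2}{b_i}\sum_j f_{ij}(x_j)<2n^2+n .
\]
So $\sum_j f_{ij}(x_j)<b_i(1+\tfrac{1}{2n})$: each constraint is violated by less than $b_i/(2n)$.

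\textbf{The map $\phi$.} Let $I(x)$ be the set of constraints $i$ with $\sum_j f_{ij}(x_j)>b_i$. For each $i\in I(x)$, let $j_i$ be a coordinate maximizing $f_{ij}(x_j)-f_{ij}(0)$. Because $\sum_j f_{ij}(0)\le b_i<\sum_j f_{ij}(x_j)$, we get $f_{ij_i}(x_{j_i})-f_{ij_i}(0)>0$. More precisely, this difference is at least $(b_i-\sum_j f_{ij}(0))/n$.

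Let $J=\{j_i : i\in I(x)\}$, so $|J|\le k$. Define $\phi(x)$ by $x_j\mapsto\lfloor x_j/2\rfloor$ for $j\in J$, keeping the other coordinates. By monotonicity and convexity,
\[
f(\lfloor x/2\rfloor)\le f(x/2)\le \tfrac{f(x)+f(0)}{2},
\]
so halving $x_{j_i}$ lowers $\sum_j f_{ij}$ by at least $\tfrac12\big(f_{ij_i}(x_{j_i})-f_{ij_i}(0)\big)$. Since $f_{ij_i}(x_{j_i})$ is the largest of the $n$ terms, this is roughly $\tfrac{1}{2n}\sum_j f_{ij}(x_j)$. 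That is the amount needed to absorb an excess below $b_i/(2n)$. Monotonicity means the other constraints only improve, so $\phi(x)\in Z$.

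\textbf{Counting preimages, and the main obstacle.} A preimage of $y\in Z$ is determined by three things: the set $J$ (at most $n^k$ choices, one coordinate per constraint), the values $x_j$ for $j\notin J$ (equal to $y_j$), and for each $j\in J$ a choice $x_j\in\{2y_j,2y_j+1\}$. Naively this gives $(2n)^k$, so the real difficulty is replacing $2^{|J|}$ by a single factor $2$.

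My first attempt is to strengthen the halving so the parity bits are absorbed. Before halving, the map would first move each odd $x_j$, $j\in J$, down to $x_j-1$ (which stays in $S$ by monotonicity). Every point of $S$ would then be charged to an even representative. I would try to show that only one constraint needs a full halving, while the others are repaired by this $-1$ step thanks to the factor-$n$ room in the slack. If that fails, the fallback is to charge all constraints to one common coordinate $j^*$, maximizing over $i$ the normalized gap $\big(f_{ij}(x_j)-f_{ij}(0)\big)/b_i$, and halve only $x_{j^*}$. This gives one parity bit overall and at most $n^k$ choices for the data recording which constraints were violated. The step I expect to be delicate is verifying that halving this single coordinate repairs every violated constraint simultaneously. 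This needs the normalization of the $f_{ij}(0)$ terms to be handled carefully, and it is where I would concentrate the effort.
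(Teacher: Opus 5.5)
Your containment $Z\subseteq S$, the slack bound $\sum_j f_{ij}(x_j)<b_i(1+\frac1{2n})$ on $S$, and the halving map are the same ingredients the paper uses. The paper normalizes $f_{ij}(0)=0$, picks for each violated $i$ a coordinate $p$ with $f_{ip}(x_p)>b_i/n$ (so $\lfloor 2n^2f_{ip}(x_p)/b_i\rfloor\ge 2n$), and halves it.

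\textbf{The gap.} You leave the upper bound $|S|\le 2n^k|Z|$ open, and neither proposed repair can work. Take $n=k=2$, $f_{11}(t)=f_{22}(t)=t$, $f_{12}=f_{21}\equiv 0$, $b_1=b_2=80$. Then $S=\{x_1\le 89,\ x_2\le 89\}$ and $Z=\{x_1\le 80,\ x_2\le 80\}$. The point $x=(89,89)$ exceeds both capacities by $9$, and each constraint sees only its own coordinate. Halving one common coordinate therefore cannot repair both constraints, and a unit decrement repairs neither. The halving map itself (yours or the paper's) is also more than $2n^k$-to-one here: $y=(44,44)$ has nine preimages ($x_j\in\{44,88,89\}$ for each $j$), more than $2n^2=8$.

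\textbf{The exact count, and the paper's count.} For your map the count is $\sum_{m=0}^{k}2^m\binom nm$: a set $J$ with $|J|=m$ can be chosen in $\binom nm$ ways (not $n^k$), followed by $m$ parity bits. This is at most $2n^k$ once $k\ge 3$ and $n\ge 2$, so there your argument already finishes. For $k=1,2$ it equals $2n+1$ and $2n^2+1$, one more than claimed. Closing that needs an idea beyond per-point preimage counting, such as averaging over $y$.

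The paper does not supply this step. It writes $|g^{-1}(y)|\le 1+2n+2\binom n2+\cdots+2\binom nk\le 2n^k$, which charges a single factor $2$ to a set $P$ of halved coordinates instead of $2^{|P|}$. That is exactly the step you flagged, and the displayed sum is already $1+2n>2n$ at $k=1$. What this route honestly proves is $|S|\le \sum_{m\le k}2^m\binom nm\,|Z|\le (2n+1)^k|Z|$. That is all the later use in Theorem~\ref{multi-ROBP} needs; only the choice of $\eta$ changes.

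\textbf{The $f_{ij}(0)$ terms.} Drop the attempt to keep general $f_{ij}(0)$, because the lemma is false unless $f_{ij}(0)=0$.
\begin{itemize}
\item With $n=k=1$, $f(t)=100+t$, $b=101$, $u_1\ge 51$: $|Z|=2$ but $|S|=52>2n^k|Z|=4$.
\item With $f\equiv 1.2$, $b=1$: $Z=\emptyset$ while $S=U_1$, so your case ``otherwise $Z=\emptyset$'' is not harmless.
\end{itemize}
Correspondingly, your ``roughly $\frac1{2n}\sum_j f_{ij}(x_j)$'' decrease is really only $\frac1{2n}\sum_j\bigl(f_{ij}(x_j)-f_{ij}(0)\bigr)$, which need not cover the excess. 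The lemma has to be read as the paper uses it via Observation~\ref{obs}: shift to $f_{ij}(0)=0$ and build $S$ from the shifted data. Under that normalization your step is exact. After halving, the $i$-th sum is at most $(1-\frac1{2n})\sum_j f_{ij}(x_j)<(1-\frac1{2n})(1+\frac1{2n})b_i<b_i$.
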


\begin{theorem}[Extension to Theorem 1.2 of \cite{DBLP:journals/corr/abs-1008-3187}]\label{multi-ROBP}
Let $Z = \cap_{i \in [k]} Z(f_i,b_i,u)$ where $Z(f_i,b_i,u) =\{ x \in \{0,1,\dots,u-1\}^n : \sum_{j = 1}^nf_{ij}(x_j) \leq b_i\}$ and each $f_{ij} : \R \rightarrow \R_{\geq 0}$ is nondecreasing, convex, and rational. Given $\del > 0$, there is a deterministic algorithm ROUND\_ROBPS($\del$) that computes $Z'$ such that 
\[|Z|\leq Z' \leq (1 + \delta)|Z|\]
using $O(n^{O(k^2)}(\log u/\eps)^{O(k)}(L + \log n))$ arithmetic operations on $O(L\log u + \log n)$-bit numbers, where $L$ is the max input encoding length. 
\end{theorem}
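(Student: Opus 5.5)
We prove Theorem~\ref{multi-ROBP} by adapting the read-once branching program (ROBP) framework behind Theorem~1.2 of \cite{DBLP:journals/corr/abs-1008-3187}. There are three steps. First, Lemma~\ref{Dyer} replaces $Z$ by a set $S$ with small integer weights, losing only a polynomial factor. Second, $|S|$ is computed exactly by dynamic programming. Third, $|Z|$ is recovered to within $(1+\del)$ by self-reducing the polynomial approximation into a product of ratios. The rounding step of Theorem~\ref{single-conv-ROBP} keeps the alphabet size $u$ out of the width, so we pay only $\log u$ factors.

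\emph{Step 1: exact counting of $S$.} In $S$, each constraint $i$ has integer weights $w_{ij}(x_j)=\lfloor 2n^2 f_{ij}(x_j)/b_i\rfloor$ and budget $2n^2$. Any prefix sum exceeding $2n^2$ can be capped, since weights are nonnegative. A width-$(2n^2+2)^k$ ROBP with layers indexed by $j$ and state the vector of capped partial sums therefore decides membership in $S$. Since $f_{ij}$ is nondecreasing, each $w_{ij}(\cdot)$ is a nondecreasing step function on $\{0,\dots,u-1\}$ with at most $2n^2+2$ breakpoints. Each breakpoint can be found by binary search over $x_j$ in $O(\log u)$ evaluations of $f_{ij}$, which costs numbers of $O(L\log u+\log n)$ bits. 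Transitions are then computed by grouping the values of $x_j$ into $O(kn^2)$ intervals on which the weight vector is constant. This counts $|S|$, and the same DP counts any subset of $S$ in which some coordinates are fixed, in time $n^{O(k)}\log u$.

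\emph{Step 2: from a $2n^k$-approximation to a $(1+\del)$-approximation.} We follow the standard telescoping argument, as in the knapsack FPTAS of \cite{SVV10} and in \cite{10.1145/780542.780643}. Introduce a parameter $t$, scaling the budgets to $b_i^{(t)}$ and sweeping from a trivial instance to the target. We choose $m=\mathrm{poly}(n,\log u)$ steps so that consecutive sets satisfy $|Z^{(t)}|/|Z^{(t+1)}|\le 2$. We then write
\[
|Z| = |Z^{(0)}|\prod_t \frac{|Z^{(t+1)}|}{|Z^{(t)}|}
\]
and estimate each ratio.

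Deterministically, instead of sampling, we use the rounding of Theorem~\ref{single-conv-ROBP} carried to $k$ constraints. Each constraint is discretized with a finer granularity of $\mathrm{poly}(n\log u/\del)$ levels rather than $2n^2$. This gives a ROBP of width $(n\log u/\del)^{O(k)}$. Its count differs from $|Z|$ by a factor $(1+\del)$, because the rounding error per coordinate, summed over the $n$ coordinates, shifts the effective budget by at most a $\del/n$-fraction. The monotonicity and convexity of the $f_{ij}$ then let us compare a slightly shrunken budget to the original: by Lemma~\ref{Dyer}-type charging, the lost points are a small fraction. Counting paths in the ROBP is an exact DP. The $n^{O(k^2)}$ term arises because the charging argument must handle the $k$ constraints simultaneously. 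That is, a point violating a shrunken constraint $i$ is mapped, by decreasing its largest coordinate, to a point in $Z$, and this map is at most $n^{O(k)}$-to-one on each of the $O(k)$ levels of the recursion.

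\emph{Main obstacle.} We expect the hardest step to be the multi-constraint comparison showing that shrinking all $k$ budgets by a $(1-\del/\mathrm{poly})$ factor loses only a $(1+\del)$ factor of points. With one constraint, convexity lets us map each lost point injectively, up to a factor $n$, into the kept set. With $k$ constraints, decreasing a coordinate to fix constraint $i$ must not break the others, which holds because all $f_{ij}$ are nondecreasing. Bounding the multiplicity of this map is where the $n^{O(k^2)}$ factor comes from. We would first try an induction on $k$, fixing one constraint at a time and using Lemma~\ref{Dyer} as the base estimate.
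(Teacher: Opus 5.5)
\textbf{There is a genuine gap in Step 2.} Step 1 is sound and matches a component of the paper. The Dyer-rounded set $S$ of Lemma~\ref{Dyer} is recognized by an ROBP of width $n^{O(k)}$ over capped partial-sum vectors with interval edges, so $|S|$ can be computed exactly.

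Step 2 fails. You claim that discretizing each constraint at granularity $N=\mathrm{poly}(n\log u/\delta)$ gives a count within a factor $(1+\delta)$ of $|Z|$. Equivalently, you claim that changing the budgets by a $1/\mathrm{poly}$ relative amount changes the number of integer points by at most a $(1+\delta)$ factor. This is false. Take $k=1$, $u=2$, $f_j(x)=x$, and $b=1-\gamma$ with $0<\gamma<1/(N+1)$. Then $Z$ contains only the zero vector. But $\lfloor N f_j(1)/b\rfloor=\lfloor N/(1-\gamma)\rfloor=N$, so the discretized constraint $\sum_j\lfloor N x_j/b\rfloor\le N$ accepts all $n+1$ vectors of Hamming weight at most one, for any polynomial $N$. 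Integer-point counts are not continuous in the budget. The Dyer-type charging map loses a factor of about $n$ per violated constraint, as your own \emph{Main obstacle} paragraph concedes, and that is exactly why Lemma~\ref{Dyer} stops at a $2n^k$ factor. The telescoping product does not rescue this. You give no deterministic estimator for the ratios $|Z^{(t+1)}|/|Z^{(t)}|$ (Dyer estimates them by sampling), and estimating them is the whole difficulty.

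\textbf{The missing idea: use $S$ as a reference distribution, not as a proxy for $Z$.} This is what the paper does, following Gopalan et al.
\begin{enumerate}
\item Let $D$ be the uniform distribution on $S$. Your Step 1 ROBP makes $D$ a small-space source of width $s=n^{O(k)}$, and $\Pr_{x\sim D}[x\in Z]=|Z|/|S|\ge 1/(2n^k)$.
\item For each constraint $i$ separately, round the exact interval ROBP $M^i$ as in Theorem~\ref{single-conv-ROBP}. States are merged according to their \emph{acceptance probabilities}, with breakpoints wherever the probability drops by a factor $1+\eta/(2n)$, not according to their partial-sum values. Do this simultaneously for every suffix distribution $D^w$, where $w$ ranges over the states of $D$, and take the union of the breakpoint sets.
\item This yields an interval ROBP $\hat{M}^i$ of width $O(ns\log u\cdot n/\eta)$ with $M^i(z)\le\hat{M}^i(z)$ for all $z$ and $\Pr_D[\hat{M}^i\ne M^i]\le\eta$.
\item The product $\hat{M}$ of the $k$ programs has width $(n^{O(k)}\log u/\eta)^k$. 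By a union bound it errs, one-sidedly, with $D$-probability at most $k\eta$.
\item Since $\Pr_D[Z]\ge 1/(2n^k)$, choosing $\eta=\delta/(4kn^k)$ turns this additive error into a $(1+\delta)$ multiplicative one. The output is $|S|\cdot\Pr_D[\hat{M}=1]$, computed by exact DP.
\end{enumerate}
This is where $n^{O(k^2)}$ comes from: a product of $k$ widths, each $n^{O(k)}$, forced by the $1/(2n^k)$ density. It does not come from the multiplicity of a charging map. It also explains why rounding each constraint under the uniform distribution and then intersecting would fail: $Z$ can have tiny uniform density, so per-constraint additive errors must be measured under a distribution in which $Z$ is dense.
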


The assumption that the normals of the truncating hyperplanes have nonnegative coefficients is necessary for the techniques we use, i.e., 
one cannot directly extend the method of approximating volume through approximately counting integer points when coefficients are allowed to be arbitrary. This is because there is no $\eps$-multiplicative approximation for the cardinality of such a set, unless $P = NP$. More precisely, we have the following result:

\begin{theorem}\label{thm: hardness}
Given $a\in \Z^n, b \in \Z$, determining whether the set $Z = \{x \in \{0,1\}^n \,:\, -b \le a^\top x \le b\}$ has a non-zero solution is NP-hard and hence given $\eps < 1$ we cannot efficiently compute $Z' \in R^n$ such that $|Z| \leq Z' \leq (1 + \eps)|Z|$, unless P=NP. 
\end{theorem}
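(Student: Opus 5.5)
We reduce from \textsc{Subset Sum} (equivalently \textsc{Partition}): given positive integers $c_1,\dots,c_m$ and a target $t$, decide whether some subset sums to exactly $t$. The set $Z$ in the statement always contains $x=0$ whenever $b\ge 0$, so the question is whether $Z$ has a \emph{non-zero} element. This is exactly why multiplicative approximation is hopeless. If $Z=\{0\}$, then $|Z|=1$ and any $Z'$ with $1\le Z'\le 1+\eps<2$ certifies this. If $Z$ has a non-zero element, then $|Z|\ge 2$ and $Z'\ge 2$. So an approximation algorithm with $\eps<1$ decides the NP-hard question by checking whether $Z'<2$.

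The reduction we intend: given a \textsc{Subset Sum} instance $(c,t)$, set $n=m+1$, $a=(c_1,\dots,c_m,-t)$ and $b=0$. Then $Z=\{x\in\{0,1\}^{m+1}: c^\top x' - t\,x_{m+1}=0\}$, writing $x=(x',x_{m+1})$.
\begin{itemize}
\item A non-zero solution with $x_{m+1}=1$ is precisely a subset of the $c_j$ summing to $t$.
\item A non-zero solution with $x_{m+1}=0$ would need $c^\top x'=0$ with $x'\ne 0$. This is impossible because every $c_j>0$.
\end{itemize}
Hence $Z\ne\{0\}$ if and only if the \textsc{Subset Sum} instance is a yes-instance. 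The reduction is clearly polynomial, and $a\in\Z^n$, $b\in\Z$ as required. The statement's $-b\le a^\top x\le b$ with $b=0$ is the equality $a^\top x=0$. We may also take $b$ to be any nonnegative integer smaller than $\min_j c_j$ after scaling all $c_j$ and $t$ by a large factor, if a strict interval is preferred.

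The second part then follows as described: run the hypothetical approximation algorithm with a fixed $\eps<1$ on the instance and answer ``non-zero solution exists'' iff $Z'\ge 2$. The constructed instance always has $0\in Z$, so $|Z|\ge 1$ and the dichotomy $|Z|=1$ versus $|Z|\ge 2$ is exact. A polynomial-time approximation algorithm would therefore solve \textsc{Subset Sum} in polynomial time, giving P$=$NP.

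The only step needing care is the claim that there is no ``spurious'' non-zero solution, i.e.\ that the $x_{m+1}=0$ branch is empty. Positivity of the $c_j$ handles this, and the rest is routine.
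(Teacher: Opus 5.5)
Your proof is correct, but it takes a somewhat different route from the paper's.

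\textbf{The paper's reduction.} It starts from the zero-target variant of Subset Sum with integers of arbitrary sign: is there a non-empty $I$ with $\sum_{i\in I}s_i=0$? It sets $a=2s$ and $b=1$. Since $\sum_i s_ix_i$ is an integer, $-1\le 2\sum_i s_ix_i\le 1$ forces it to equal $0$.

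\textbf{Your reduction.} You start from textbook \textsc{Subset Sum} with positive weights, append the coordinate $-t$, and take $b=0$. Positivity of the $c_j$ then rules out spurious non-zero solutions with $x_{m+1}=0$. Your appended-variable step is essentially the standard proof that the zero-sum variant, which the paper uses as a black box, is NP-hard. So your argument is more self-contained.

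\textbf{The approximation part.} You spell out the gap argument: $0\in Z$, so $|Z|=1$ forces $Z'\le 1+\eps<2$, while $|Z|\ge 2$ forces $Z'\ge 2$. The paper leaves this step implicit behind the word ``hence''.

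\textbf{What the paper's choice buys.} Taking $b=1>0$ makes the constraint a genuine slab with nonempty interior, which matches the section's framing of two linear inequalities. Your instance degenerates to the hyperplane $a^\top x=0$. The statement allows this, but it is further from the motivation. You can recover a slab with the paper's trick: replace $a$ by $2a$ and take $b=1$.

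\textbf{A slip in your aside.} After scaling the $c_j$ and $t$ by $K$, every value of $a^\top x$ is a multiple of $K$. What you need is therefore $0\le b<K$. The condition $b<\min_j Kc_j$ is not enough, because $|c^\top x'-t|$ can be much smaller than $\min_j c_j$. For example, take $c=(5,7)$, $t=2$ and $b=4K$. Then $x=(1,0,1)$ gives $a^\top x=3K$, so $x\in Z$, yet no subset of $\{5,7\}$ sums to $2$. This does not affect your main argument, which uses $b=0$.
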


\section{Preliminaries}
\begin{wrapfigure}[6]{r}{0.25\textwidth}
\centering
\vspace{-50pt}
\includegraphics[width=\linewidth]{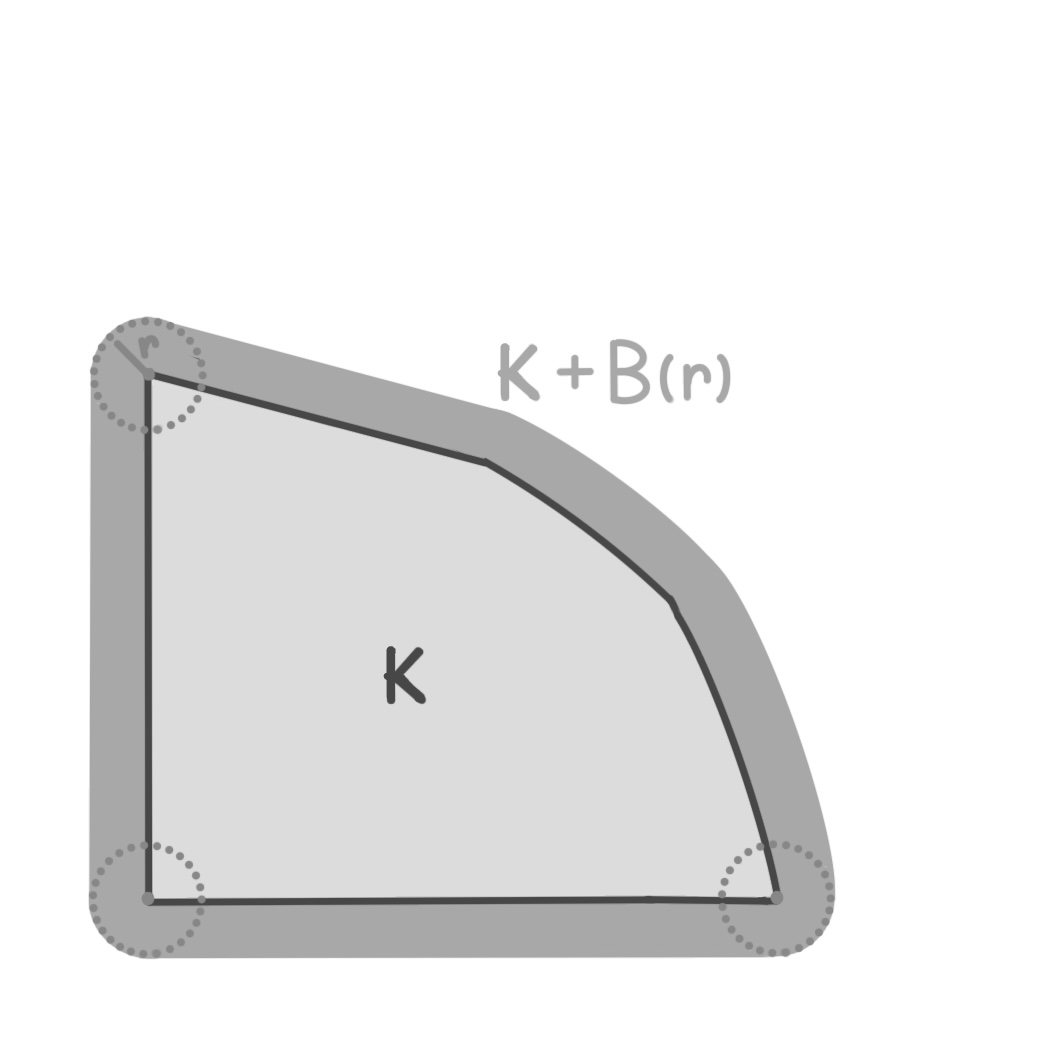}
\vspace{-25pt}
\caption{}
\label{fig:rounded-convex-body}
\end{wrapfigure}
The following notation will be used throughout the paper:
\begin{itemize}
    \item \textbf{Closure:}
    We denote the closure of set $S$ as $\cl(S)$.
    \item \textbf{Rounded Convex Body:}
    For a convex body $K$, $r \in \R_{\geq 0}$, and the Euclidean distance metric $d: \R^n\times\R^n\to \R_{\geq0}$, $\boldsymbol{K+ B(r)}$ is the convex body $\{y: \exists x \in K \text{ with } d(x,y) \leq r\}$. See figure \ref{fig:rounded-convex-body}.
    \item \textbf{Dilated Convex Body:}
    Given a convex body $K$ and $r \in \R_{\geq 0}$, $\boldsymbol{rK}$ is the convex body $\{rx: x \in K\}$. Note that $\vol(rK) = r^n\vol(K)$.
\end{itemize}

\begin{itemize}
    \item For $x, y \in \R^n$, we say $x \leq y$ if $x_i \leq y_i$ for all $i \in [n]$. Similarly, $x < y$ if $x_i < y_i$ for all $i \in [n]$.
    \item We call a function $f:\R^n\to \R$ \textbf{ nondecreasing} if for every $x \leq y \in \R^n$ we have $f (x) \leq f(y)$.
    \item We use $\1$ to refer to the length $n$ vector where each entry is a $1$. Similarly, $\0$ refers to the length $n$ vector where each entry is a $0$.
    \item We use $e_j \in \R^n$ to refer to the length $n$ vector with entries $(e_j)_i = \begin{cases}
        1 & \text{ if }i = j \\ 0 &\text{ otherwise}
    \end{cases}.$
    \item $\is{S} \in \R^n$ is the indicator vector of set $S \seq [n]$. The vector has entries $(\is{S})_j = \begin{cases}
        1 & \text{ if }j \in S \\ 0 &\text{ otherwise}
    \end{cases}.$
    \item Given vector $x \in \R^n$ and scalar $r \in \R$, $rx$ is the vector where every entry is multipled by $r$, and $x/r$ is the vector where every entry is divided by $r$.
    \item We say a function $f$ is \textbf{rational} if, on rational input, it has rational output.
    \item \label{bitsize: 1} The \textbf{encoding length}, of a rational input $p/q$ is the number of bits necessary to represent $p/q$: $\lceil\log(|p| + 1)\rceil + \lceil\log(q + 1)\rceil$
\end{itemize}
We will also reference the following throughout the paper:
\begin{definition}[Axis-Intercept]\label{axis-intercept}
    Let $K \seq \R_{\geq 0}^n$ be a closed convex body such that $\0 \in K$. For fixed $j \in [n]$, the \textbf{$\bm{x_j}$-axis-intercept} of $K$ is the value $\ell_j(K) \in \R_{\geq 0}$ such that $\ell_j(K)e_j \in K$, but for any $y > \ell_j(K)$, $ye_j \not\in K$.
    The \textbf{minimum axis-intercept} is $\ell(K) := \min_{j \in [n]} \ell_j(K)$.
    Note that for $r \geq 1$, dilation $rK$ has minimum axis intercept $\ell(rK) =  r\ell(K)$.
\end{definition}
\begin{observation}\label{obs}
Consider $K = \{ x \in [0,u]^n : \sum_{j = 1}^nf_{ij}({x_j}/u) \leq b_i \quad \forall i\in [k]\}$ where each $f_{ij}: \R\to \R_{\geq 0}$ is nonnegative, nondecreasing, and convex. 
\begin{itemize}
    \item Note that $h_{ij}(x_j) = f_{ij}({x_j}) - f_{ij}(0)$ is also nonnegative, nondecreasing and convex and
    \[\sum_{j = 1}^nf_{ij}\left(\frac{x_j}u\right) \leq b_i \iff \sum_{j = 1}^nh_{ij}\left(\frac{x_j}u\right) \leq b_i- \sum_{j = 1}^nf_{ij}\left(0\right)\]
    Thus we can assume without loss of generality that $f_{ij}(0) = 0$ for all $i \in [k]$, $j \in [n]$.
    \item If $\0 \not\in K$, or equivalently, $b_i < \sum_{j = 1}^nf_{ij}({0}/u) = 0$ for some $i \in [k]$, then $K = \emptyset$, and $\vol(K) = 0$. Thus throughout the paper we will assume this is not the case.
    \item $K$ is the intersection of level sets of convex functions and a convex body, thus $K$ is convex. 
    \item If $\ell(K) = 0$, then $K$ is not full-dimensional, and $\vol(K) = 0$. We will again assume this is not the case.
\end{itemize}.
\end{observation}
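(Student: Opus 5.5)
The statement is a collection of four elementary facts, and I would verify each bullet directly from the hypotheses on the $f_{ij}$: nonnegative, nondecreasing, convex, with domain all of $\R$. Throughout I use that every point of $K$ lies in $[0,u]^n$, so all arguments $x_j/u$ lie in $[0,1]\seq\R_{\geq 0}$.

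\textbf{First bullet.} Set $h_{ij}(t)=f_{ij}(t)-f_{ij}(0)$.
\begin{itemize}
\item Subtracting a constant preserves convexity and monotonicity.
\item Nonnegativity holds on the relevant domain $t\geq 0$, because $f_{ij}$ is nondecreasing, so $f_{ij}(t)\geq f_{ij}(0)$ there.
\end{itemize}
The stated equivalence follows by subtracting $\sum_j f_{ij}(0)$ from both sides of the inequality. Replacing $b_i$ by $b_i-\sum_j f_{ij}(0)$ then justifies assuming $f_{ij}(0)=0$.

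\textbf{Second bullet.} With $f_{ij}(0)=0$, we have $\0\in K$ if and only if $b_i\geq 0$ for every $i$. Suppose instead that $b_i<0$ for some $i$. For any $x\in[0,u]^n$, nonnegativity gives $\sum_j f_{ij}(x_j/u)\geq 0>b_i$. Hence $K=\emptyset$ and $\vol(K)=0$.

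\textbf{Third bullet.}
\begin{itemize}
\item Each map $x\mapsto f_{ij}(x_j/u)$ is a convex function composed with a linear map, so it is convex.
\item Therefore $g_i(x)=\sum_j f_{ij}(x_j/u)$ is convex, and its sublevel set $\{g_i\leq b_i\}$ is convex.
\item $K$ is the intersection of these $k$ sublevel sets with the convex cube $[0,u]^n$, so $K$ is convex.
\end{itemize}
Convex functions on $\R$ are continuous, so each sublevel set is also closed. Hence $K$ is closed and the axis-intercepts $\ell_j(K)$ of Definition~\ref{axis-intercept} are well defined, with the supremum attained.

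\textbf{Fourth bullet (the only step with content).} Suppose $\ell(K)=0$, so $\ell_j(K)=0$ for some $j$; that is, $te_j\notin K$ for every $t>0$. I claim $K\seq\{x: x_j=0\}$. Suppose instead some $x\in K$ has $x_j>0$, and consider the point $x_je_j$ obtained by zeroing every other coordinate.
\begin{itemize}
\item Each term with $l\neq j$ changes from $f_{il}(x_l/u)\geq 0$ to $f_{il}(0)=0$.
\item Hence $\sum_l f_{il}\big((x_je_j)_l/u\big)\leq \sum_l f_{il}(x_l/u)\leq b_i$ for every $i$.
\item Also $x_je_j\in[0,u]^n$.
\end{itemize}
So $x_je_j\in K$ with $x_j>0$, contradicting $\ell_j(K)=0$. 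Therefore $K$ lies in a hyperplane, is not full-dimensional, and has $\vol(K)=0$.

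The fourth bullet is where the monotonicity and the normalization $f_{ij}(0)=0$ are genuinely used. Its key step is the "projection to the axis" argument showing that $K$ is down-closed toward each axis.
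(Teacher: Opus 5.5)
Your proof is correct and follows the same reasoning the paper gives inline for each bullet. Your projection-to-the-axis argument for the fourth bullet is essentially the monotonicity argument the paper uses in the proof of Theorem~\ref{find-l}. You are also right to restrict the nonnegativity of $h_{ij}$ to $t\geq 0$: $h_{ij}$ can be negative for negative arguments, but only arguments $x_j/u\in[0,1]$ ever matter.
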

\vspace{-25pt}
Our algorithms will heavily rely on the following theorems.
\begin{theorem}\label{find-l}
    Let $K = \{ x \in [0,1]^n : \sum_{j = 1}^nf_{ij}(x_j) \leq b_i \quad \forall i\in [k]\}$ where each $f_{ij}: \R \to \R_{\geq 0}$ is nonnegative, nondecreasing, convex and rational.
    Let $L_o$ be the encoding length of $\vol(K)$. There exists an algorithm FIND\_INTERCEPT that finds $\ell' \in \R_{\geq 0}$ such that 
    \[\frac 12\ell(K) \leq \ell' \leq \ell(K)\]
    using $O(nL_o)$ arithmetic operations on $O(LL_o)$-bit numbers. 
\end{theorem}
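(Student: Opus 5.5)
The plan is to compute each axis-intercept $\ell_j(K)$ to within a factor of $2$ by a doubling/halving search, and then take the minimum. By Observation~\ref{obs} we may assume $f_{ij}(0)=0$ and $b_i\ge 0$. Then the point $te_j$ lies in $K$ if and only if $t\in[0,1]$ and $f_{ij}(t)\le b_i$ for all $i\in[k]$. Each $f_{ij}$ is nondecreasing, so the set $\{t\in[0,1]: te_j\in K\}$ is an interval $[0,\ell_j(K)]$; it is closed because $K$ is closed. This gives a monotone membership test for a single scalar: evaluate $f_{ij}$ at the rational point $t$ for each $i$ and compare with $b_i$. We may also assume $\ell(K)>0$, since otherwise the volume is $0$ and there is nothing to do.

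For each $j\in[n]$, test $t=2^{-m}$ for $m=0,1,2,\dots$ and stop at the first $m$ with $2^{-m}e_j\in K$. Then $2^{-m}\le \ell_j(K)<2^{-m+1}$ (for $m=0$ we get $\ell_j(K)=1$ exactly), so $\ell'_j:=2^{-m}$ satisfies $\tfrac12\ell_j(K)\le\ell'_j\le\ell_j(K)$. Output $\ell'=\min_j\ell'_j$. Since the minimum of factor-$2$ lower approximations is a factor-$2$ lower approximation of the minimum, $\tfrac12\ell(K)\le\ell'\le\ell(K)$.

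The main obstacle is bounding the number of halvings $m$ by $O(L_o)$, which is what yields the $O(nL_o)$ operation count. Each test costs $O(k)$ evaluations, and $k$ is treated as fixed. The idea is to relate $\ell(K)$ to $\vol(K)$. By convexity and the restriction to the box $[0,1]^n$, $K$ is contained in the slab $\{x: 0\le x_j\le \ell_j(K)\}$ up to a factor depending on $n$. Indeed, every $x\in K$ has $f_{ij}(x_j)\le \sum_{j'} f_{ij'}(x_{j'})\le b_i$, because the functions are nonnegative. Hence $x_je_j\in K$, so $x_j\le\ell_j(K)$.

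This shows $K\subseteq[0,\ell_1(K)]\times\dots\times[0,\ell_n(K)]\subseteq\{x\in[0,1]^n: x_j\le\ell(K)\}$, where $j$ is the minimizing coordinate. Consequently $\vol(K)\le\ell(K)$. The volume is a positive rational of encoding length $L_o$, so $\vol(K)\ge 2^{-L_o}$ and therefore $\ell(K)\ge 2^{-L_o}$. The search thus terminates for every $j$ within $L_o+1$ steps; in the worst case we can run all $n$ searches with the cap $m\le L_o+1$ without missing anything. This gives $O(nL_o)$ tests in total.

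The bit-length bound follows from the paper's assumption on rational functions. The inputs $2^{-m}$ have encoding length $O(L_o)$, so each $f_{ij}(2^{-m})$ has encoding length $O(LL_o)$. The comparisons with $b_i$, and the subtraction of $f_{ij}(0)$ from Observation~\ref{obs}, stay within $O(LL_o)$-bit numbers.
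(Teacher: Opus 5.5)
Your proposal is correct and follows the paper's proof essentially step for step. Both run a halving search along each axis to get a factor-$2$ lower approximation $\ell'_j$ and output the minimum, then bound the number of halvings by $O(L_o)$ using $\vol(K)\le\ell(K)$ together with $\vol(K)\ge 2^{-L_o}$. Your justification of $\vol(K)\le\ell(K)$ (every $x\in K$ has $x_je_j\in K$, so $K$ lies in the slab $x_{j^*}\le\ell(K)$) is the same containment the paper derives from monotonicity, and your bracket $2^{-m}\le\ell_j(K)<2^{-m+1}$ is a slightly tighter form of the paper's.
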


\begin{proof}[Proof of Theorem \ref{find-l}]
    To find an approximate minimum axis-intercept $\ell'$ of $Q$, we will first find an approximate $x_j$-axis-intercept $\ell'_j$ for each $j \in [n]$, and then take the minimum over each approximate intercept.
    
    For fixed $j \in [n]$, we perform a repeated halving on $(0,1]$.
    Starting with $z = 1$, we will evaluate $f_{ij}(z)$, and terminate if this value does not exceed $b_i$ for all $i \in [k]$. Otherwise we take $z \from z/2$ and repeat.
    In other words, at iteration $m$, we compute $f_{ij}\big(1/2^{m-1})$ until this term is $\leq b_i$ for every $i$, or $e_j/2^{m-1} \not\in K$. We take ${\ell_j}' = 1/2^{m-1}$.

    After repeating this process for all $j \in [n]$, we return $\ell' = \min_{j \in n} \ell'_j$.
    
    We first confirm we this $\ell'$ satisfies the conditions desired. Note that either $\ell'_j = \ell_j(K) = 1$, or $e_j \not\in K$. In the latter case, the last iteration $m$ satisfies $e_j/2^{m-1} \in K$, but $e_j/2^{m-2} \not\in K$. It must follow in either case that $1/2^{m-1} \leq \ell_j(K) \leq 1/2^{m-2}$, and thus \[\frac12{\ell_j(K) \leq \frac12 \cdot \frac1{2^{m-2}}} = \ell'_j = \frac1{2^{m-1}} \leq \ell_j(K).\]
    Consequently, by taking $\ell' = \min_{j \in [j]} \ell'_j$, we have $\ell' \leq \ell'_j \leq \ell_j(K)$ for every $j \in [n]$, and thus $\ell' \leq \min_{j \in [n]}\ell_j(K) = \ell(K)$. Similarly,
    $\ell(K)/2 \leq \ell_j(K)/2 \leq \ell'_j$ for every $j \in [n]$, and thus $\ell(K)/2 \leq \min_{j \in [n]}\ell'_j = \ell'$. Thus we have the proposed condition
    \[\frac12 \ell(K) \leq \ell' \leq \ell(K).\]
    
    Now we analyze the runtime of the algorithm. 
    Note that we repeat our process of picking $\ell'_j$ $O(n)$ times, once for each $j$. 

    Consider the true minimum axis intercept $\ell(K) = \ell_{j^*}(K)$ for some ${j^*} \in [n]$. Note that for any $y > \ell(K)$, we have $ye_{j^*} \not\in K$, and thus by monotonicity of each function, it follows that for any $x \geq ye_{j^*}$, $x \not \in K$. Consequently, any $x \in K$ must have $x_{j^*} \leq \ell(K)$, thus $K$ is contained in the hyperrectangle with length $\ell(K)$ in dimension $x_{j^*}$, and length $1$ in every other dimension. Note that this hyperrectangle has volume $\ell(K)$, and thus $\vol(K) \leq \ell(K)$. It follows that 
    \[ \frac12\vol(K) \leq \frac12 \ell(K) \leq \frac12\ell_j(K) \leq \ell'_j \qquad \forall j \in [n]\]
    Also note that the encoding size, $L_o$, of $\vol(K)$ is at least $\log(1 / \vol(K))$, since $\vol(K) \leq 1$.
    Thus, for each $j \in [n]$, after $m = L_o + 1 = \lceil\log(1 / \vol(K)) + 1 \rceil + 1 \geq \log(1/\vol(K)) + 1$ iterations, we have $1/2^{m-1} \leq \vol(K) \leq \ell'_j$, and thus our search needs at most $O(L_o)$ iterations. In each iteration, and for each of $i \in [k]$, we compute function value $f_{ij}(x_j)$ where $x_j$ uses at most $L_o$-bits and thus $f_{ij}(x_j)$ uses at most $LL_o$-bits. 
    
    Thus we perform $O(nL_o)$ arithmetic operations on $O(LL_o)$-bit numbers.
\end{proof}

\begin{prop}\label{Cube containment}
Let $b \in R_{\geq 0}$, $u \in \Z_+$, and $f_{i}: \R\to \R_{\geq 0}$ be nonnegative, nondecreasing, and convex functions for $i\in [k]$. 
Let 
$K = \{x\in[0,u)^n : f_{i}(x/u) \leq b_i \quad \forall i\in [k]\} $.
Consider the axis-aligned partition of $[0,u)^n$ into unit cubes, and let $C$ be the subset of these cubes that intersect $K$ (See figure \ref{fig:cubes}).
Let $\ell = \ell(\cl(K))$ be the minimum axis-intercept of $\cl(K)$. 
Then 
\[
\vol(K) \leq \vol(C) \leq \left(1 + \frac{2n\sqrt{n}}{\ell}\right)^n \vol(K).
\]
\end{prop}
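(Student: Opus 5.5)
The lower bound is immediate: the cubes in $C$ cover $K$, so $\vol(K)\le\vol(C)$. For the upper bound I would pass through two geometric inclusions. Every unit cube meeting $K$ has diameter $\sqrt n$, so $C\subseteq K+B(\sqrt n)$. I would then show that $K+B(\sqrt n)$, intersected with the nonnegative orthant where all cubes of $C$ live, sits inside a dilate $(1+c/\ell)K'$ of $\cl(K)$, up to the cap at $u$. Since $\vol(rK)=r^n\vol(K)$ and $\vol(\cl(K))=\vol(K)$ for a convex body, this gives the claimed factor $\left(1+\frac{2n\sqrt n}{\ell}\right)^n$.

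The heart of the proof is the inclusion $\big(\cl(K)+B(\sqrt n)\big)\cap\R^n_{\ge0}\subseteq\big(1+\tfrac{2n\sqrt n}{\ell}\big)\cl(K)$. By Observation~\ref{obs} I may assume each constraint vanishes at $\0$, so $\0\in\cl(K)$. The $n$ points $\ell e_j$ lie in $\cl(K)$ by the definition of $\ell$ as the minimum axis-intercept. By convexity, $\cl(K)$ then contains the simplex $\Delta=\mathrm{conv}\{\0,\ell e_1,\dots,\ell e_n\}=\{z\ge 0:\sum_j z_j\le\ell\}$. This simplex contains every nonnegative vector of Euclidean norm at most $\ell/\sqrt n$, and therefore every nonnegative vector of norm at most $\ell/(n\sqrt n)$ with room to spare.

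Next take $y\ge 0$ with $y=x+w$, where $x\in\cl(K)$ and $\|w\|\le\sqrt n$. The step I expect to be the main obstacle is that $w$ need not be nonnegative, so I cannot simply absorb it into $\Delta$. I would handle this with monotonicity. Let $x'=\min(x,y)$ coordinatewise. Then $0\le x'\le x$, and because the constraints are nondecreasing, $x'\in\cl(K)$. Also $y-x'=w^+$, the positive part of $w$, which is nonnegative with $\|w^+\|\le\sqrt n$.

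Now set $r=1+\frac{2n\sqrt n}{\ell}$ and $t=1/r$. I write
\[
\frac{y}{r}=t\,x'+(1-t)\,\frac{w^+}{r-1}.
\]
Here $\frac{w^+}{r-1}=\frac{\ell\,w^+}{2n\sqrt n}$ is nonnegative with norm at most $\frac{\ell}{2n}\le\frac{\ell}{\sqrt n}$, so it lies in $\Delta\subseteq\cl(K)$. Convexity then gives $y/r\in\cl(K)$, i.e. $y\in r\,\cl(K)$. The constant $2n\sqrt n$ is far more generous than this argument needs, which is fine since only an upper bound is required.

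Finally, the half-open domain $[0,u)^n$ only matters through closures. $C$ lies in $[0,u]^n\cap(\cl(K)+B(\sqrt n))$, and the inclusion above already handles it without using the box constraint. So
\[
\vol(C)\le r^n\vol(\cl(K))=\left(1+\frac{2n\sqrt n}{\ell}\right)^n\vol(K).
\]
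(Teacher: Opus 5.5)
Your proof is correct. It shares the paper's skeleton: $K\seq C\seq K+B(\sqrt n)$, together with the simplex $\{z\ge 0:\sum_j z_j\le\ell\}\seq\cl(K)$. The key absorption step, however, is genuinely different.

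The paper inscribes the ball $B(z,\ell/2n)$, with $z_j=\ell/(2n)$, in the simplex. It then invokes the general convex-body fact that a body containing a ball of radius $r$ contains its own $\sqrt n$-neighbourhood after dilation by $1+\sqrt n/r$. That fact needs only convexity and covers all of $K+B(\sqrt n)$. Two caveats apply, though:
\begin{itemize}
\item The dilation is really about $z$, not the origin. As literally written, $(1+\sqrt n/r)\cl(K)$ cannot contain points of $K+B(\sqrt n)$ with negative coordinates; this is harmless for volume.
\item The paper's nearest-point/triangle-inequality ratio is only an informal justification. The rigorous version is a convex-combination identity like yours, centred at $z$.
\end{itemize}

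You instead dilate about the origin and restrict to the orthant, which is all that $C$ needs. You then use monotonicity a second time: replacing $x$ by $\min(x,y)$ discards the negative part of the displacement, and $w^+/(r-1)$ is absorbed directly into the simplex. This needs no inscribed ball, and every step is an exact identity.

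Your route is also sharper. Since $\|w^+\|_1\le\sqrt n\,\|w\|\le n$, already $r=1+n/\ell$ works, giving a factor $(1+n/\ell)^n$. Any inscribed-ball argument is capped by the simplex's inradius $\ell/(n+\sqrt n)$ and so loses an extra factor of about $\sqrt n$. Your bound would let Lemma~\ref{int-pts-vol} take $u$ of order $n^2/(\eps\ell)$, though $u$ only enters the running time through $\log u$.

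The one detail worth a line is why $\min(x,y)\in\cl(K)$. Either take $x\in K$ itself, where downward-closedness in $[0,u)^n$ is immediate from monotonicity, or note that this property passes to the closure.
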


\begin{proof}[Proof of Proposition \ref{Cube containment}]
Let $Z = K \cap \Z^n$ be the integer points in $K$, and let $C' = \cup_{z \in Z} C_z$ where $C_z = \{x \in \R^n : z \leq x < z + \1\}$. We will first show that $C' = C$. Clearly, $C'$ does not include any axis aligned cubes that do not intersect $K$, as each cube has \enquote{root} $z \in Z \seq K$. Now we show there is no point in $K$ that is not in some $C_z \seq C'$.
Consider any $x \in K$, and let $\bar{x}$ be such that $\bar{x}_i = \lfloor x_i\rfloor$ for all $i \in [n]$. Note that $\bar{x} \leq x < \bar{x} + \1$. Since $x \in [0,u)^n$, it follows that $\bar{x} \in [0,u-1]^n\seq[0,u]^n$. By the nondecreasing property of each $f_{i}$, we have $f_{i}({\bar{x}}/u) \leq f_{i}({x}/u) \leq b_i$ for all $i$. Finally, $\bar{x} \in \Z^n$ by construction, and consequently $\bar{x} \in Z$. Thus $x \in \cub_{\bar{x}} \seq \cub'$, and $\cub = \cub'$.

Let $r = {\ell}/{2n}$. We will prove that \[K \seq C \seq K + B(\sqrt{n}) \seq \left(1 + \frac{\sqrt{n}}{r}\right)\cl(K),\]
See Figure \ref{fig:cubes} for an example.

It is clear that $K \seq C$, since we include all cubes that intersect $K$. 

Next we see that $C \seq K + B(\sqrt{n})$:

    Consider any $x \in C$, and take $\bar{x} \in Z \seq K$ such that $x \in C_{\bar{x}}$. Note that since $C_{\bar{x}}$ has length $1$, the furthest distance within this cube is $\sqrt{n}$, and thus the distance between $x$ and $\bar{x}$ (a point in $K$) is at most $\sqrt{n}$. Thus $x \in K + B(\sqrt{n})$.

Now we argue that $K + B(\sqrt{n}) \seq (1 + {\sqrt{n}}/{r})\cl(K)$. It is clear that $K \seq (1 + {\sqrt{n}}/{r})K\seq (1 + {\sqrt{n}}/{r})\cl(K)$, so it remains to show this for all $x \in K + B({\sqrt{n}}) \m K$.

Recall that $\ell = \ell(\cl(K)) = \min_{j \in [n]} \ell_j(\cl(K))$ is the minimum axis-intercept. Observation \ref{obs} tells us that $K$ is convex, thus, for every $j \in [n]$, the entire $x_j$ axis between $0$ and $\ell$ lies within $\cl(K)$, and moreover, $K$ contains the simplex $S := \{x : \sum_{j = 1}^n x_j \leq \ell\}$.
 Consider the ball $B(z,r)$ where $r = {\ell}/{2n}$ and $z_j = {\ell}/{2n}$ for all $j \in [n]$. See figure \ref{fig:containment} for these objects. Note that for any $y \in B(z,r)$, $y_j \leq \frac\ell{n}$ for all $j \in [n]$, and consequently 
    $\sum_{j= 1}^n y_j \leq \sum_{j= 1}^n {\ell}/n = \ell $.
    Thus $y \in S \seq \cl(K)$, and $B(z,r) \seq \cl(K)$.

    Now consider $x \in K + B({\sqrt{n}}) \m K$, and let $x'$ be the point in $\cl(K)$ that has smallest euclidean distance to $x$. Note that $x'$ will be a boundary point of $\cl(K)$. By definition of $K + B({\sqrt{n}})$, $d(x,x') \leq \sqrt{n}$, where $d:\R^n \times \R^n \to \R_{\geq 0}$ is Euclidean distance. By triangle inequality \[d(x,z) \leq d(x',z) + d(x,x') \text{ and thus }\frac{d(x,z)}{d(x',z)} \leq \frac{d(x',z) + \sqrt{n}}{d(x',z)} = 1 +  \frac{\sqrt{n}}{d(x',z)} \leq 1 + \frac{\sqrt{n}}{r}.\]
    where the last inequality comes from the fact that $B(z,r)$ lies fully within $K$, and thus boundary point $x' \in K$ is at least distance $r$ from $z$.
    See figure \ref{fig:distance} for an example.
    
    This allows us to conclude that dilating $K$ by a factor of $1 + {\sqrt{n}}/{r}$ will cover all points in $K + B(\sqrt{n})$.
    Thus, $K \seq C \seq (1 + {\sqrt{n}}/{r})\cl(K)$ and consequently 
    \[
    \vol(K) \leq \vol(C) \seq \left(1 + \frac{\sqrt{n}}{r}\right)^n\vol(\cl(K)) = \left(1 + \frac{2n\sqrt{n}}{\ell}\right)^n\vol(K).
    \]
\end{proof} 
\noindent
\begin{minipage}{0.3\linewidth}
\centering
\includegraphics[width=\linewidth]{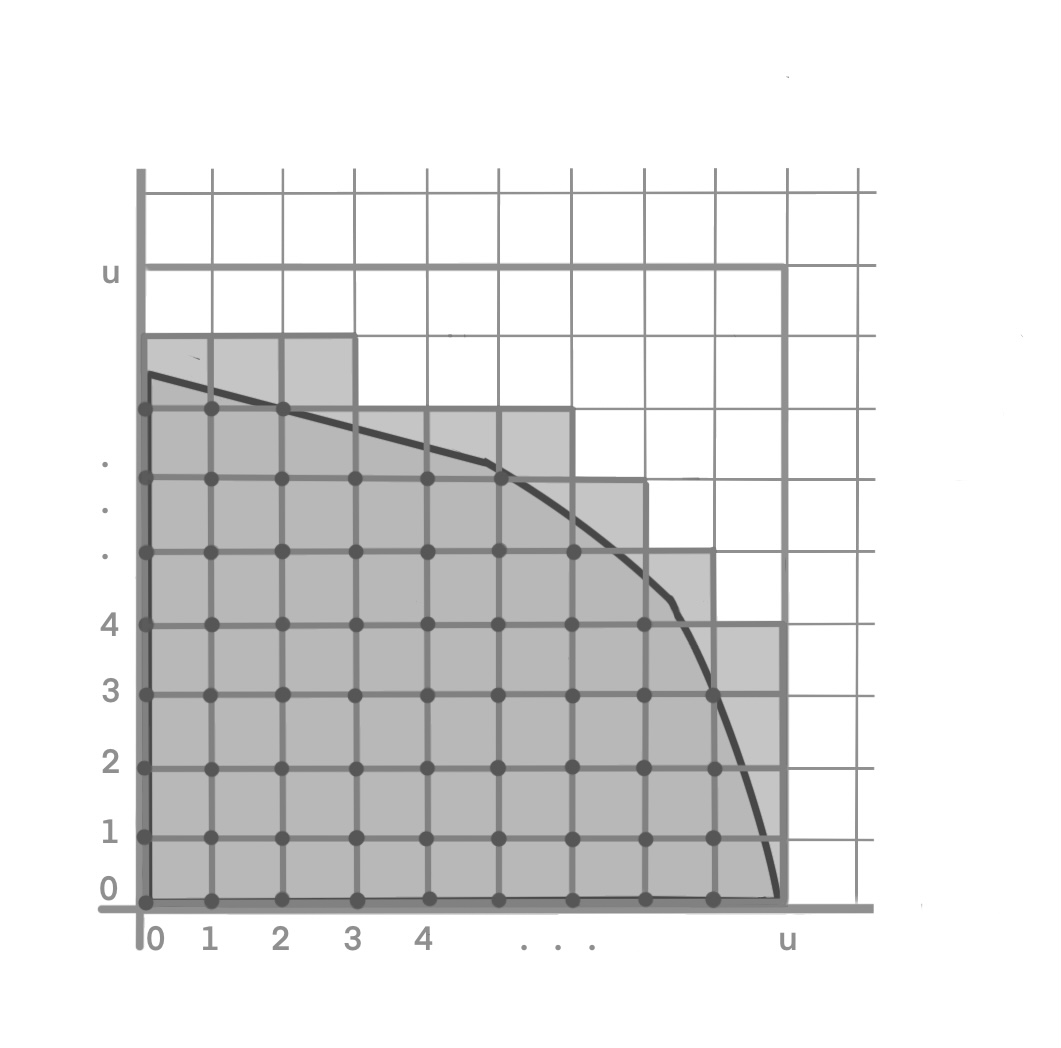}
\vspace{-20pt}
\captionof{figure}{}
\label{fig:cubes}
\end{minipage}\hfill
\begin{minipage}{0.3\linewidth}
\centering
\includegraphics[width=\linewidth]{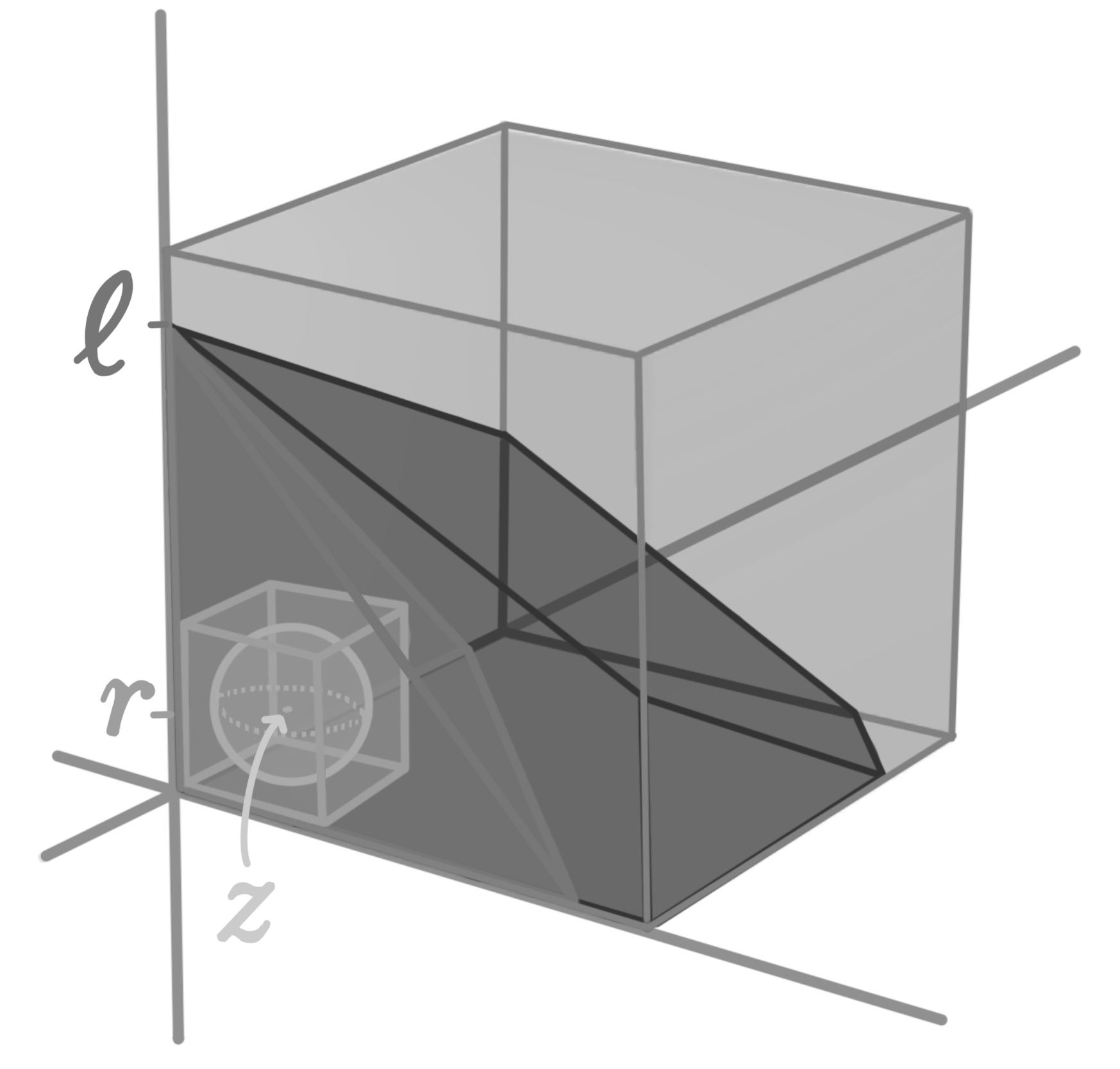}
\vspace{-20pt}
\captionof{figure}{}
\label{fig:containment}
\end{minipage}\hfill
\begin{minipage}{0.3\linewidth}
\centering
\includegraphics[width=\linewidth]{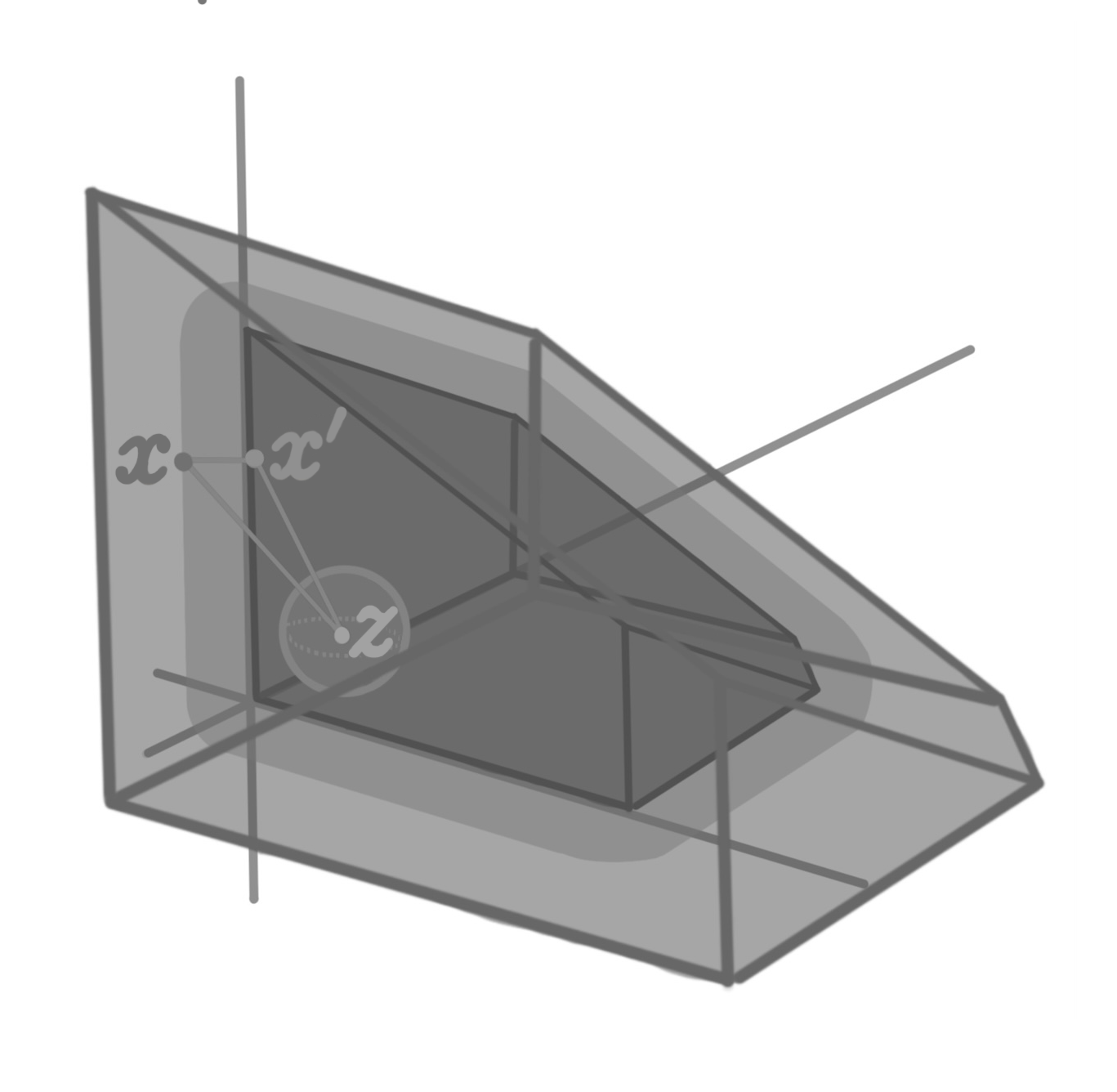}
\vspace{-20pt}
\captionof{figure}{}
\label{fig:distance}
\end{minipage}
\begin{lemma}\label{int-pts-vol}
Let $b \in \R^k$ and let $f_{i}: \R^n \to \R_{\geq 0}$ be nondecreasing, nonnegative and convex functions for $i\in [k]$.
Let 
\begin{align*}
K &=  \bigg\{x \in [0,1)^n : f_{i}({x}) \leq b_i \quad \forall i\in [k]\bigg\} \\
\ell &= \ell(\cl(K)), \,\qquad
u \geq \frac{9n^{2.5}}{\eps\ell}\\
K_u &=  \bigg\{x \in [0,u)^n : f_{i}(\frac1u{x}) \leq b_i \quad \forall i\in [k]\bigg\} \\
Z &= \Z^n \cap K_u.
\end{align*}  
Given any $\eps > 0$ and $Z'\in \R$ such that $|Z| \leq Z' \leq (1 + \frac\eps9)|Z|$, we have $\vol(K_u) \leq Z' \leq (1 + \eps)\vol(K_u)$.
\end{lemma}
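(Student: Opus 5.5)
The plan is to compare $|Z|$ with the volume of the union of unit cubes rooted at the integer points of $K_u$, and then use Proposition \ref{Cube containment} to compare that volume with $\vol(K_u)$. The first step is the identity $|Z| = \vol(C)$, where $C$ is the union of the unit cubes $C_z=\{x : z\le x< z+\1\}$ for $z\in Z$. These cubes are disjoint and each has volume $1$. The proof of Proposition \ref{Cube containment} already shows that $C$ equals the union of the axis-aligned unit cubes in $[0,u)^n$ that meet $K_u$, and that construction only uses monotonicity of the $f_i$. So I would apply Proposition \ref{Cube containment} to $K_u$, with the functions written as $f_i(x/u)$, and obtain
\[
\vol(K_u) \le |Z| \le \left(1+\frac{2n\sqrt n}{\ell(\cl(K_u))}\right)^n \vol(K_u).
\]

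Second, I would relate the axis-intercepts. Since $K_u = uK$, up to boundary effects at the half-open faces, we have $\cl(K_u)=u\,\cl(K)$. Therefore $\ell(\cl(K_u)) = u\ell \ge 9n^{2.5}/\eps$. Substituting this gives
\[
\left(1+\frac{2n^{1.5}}{u\ell}\right)^n \le \left(1+\frac{2\eps}{9n}\right)^n \le e^{2\eps/9}.
\]

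Third, I would combine the bounds. The lower bound is immediate: $Z'\ge |Z| \ge \vol(K_u)$. For the upper bound, $Z'\le (1+\eps/9)e^{2\eps/9}\vol(K_u)$, and I need this to be at most $(1+\eps)\vol(K_u)$. For $\eps\le 1$ this holds, because $e^{2\eps/9}\le 1+\tfrac{2\eps}{9}e^{2/9}\le 1+0.28\eps$, and so $(1+\eps/9)(1+0.28\eps)\le 1+\eps$.

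The main obstacle I anticipate is the regime of large $\eps$, where the exponential factor could exceed $1+\eps$. If the lemma is meant for all $\eps>0$, I would fall back on the looser bound $(1+x/n)^n\le 1+2x$ for small $x$, or simply note that the algorithms only invoke the lemma with $\eps\le 1$, since one can replace $\eps$ by $\min(\eps,1)$. A secondary point to check is the boundary bookkeeping: the closure of the half-open $K_u$ must equal $u\,\cl(K)$, and the axis-intercept must scale linearly. I expect both to follow directly from the definitions and Observation \ref{obs}.
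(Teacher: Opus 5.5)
Your argument is correct and is essentially the paper's proof: you get $|Z|=\vol(C)$ from the disjoint unit cubes rooted at $Z$, apply Proposition~\ref{Cube containment} to $K_u$ with $\ell(\cl(K_u))=u\ell\ge 9n^{2.5}/\eps$, and finish with an elementary bound on $(1+2n^{1.5}/(u\ell))^n$. The only difference is that you bound this factor by $e^{2\eps/9}$ and state $\eps\le 1$ explicitly, whereas the paper's bound $(1+x)^n\le 1+2nx$ tacitly needs bounded $\eps$ too; the restriction is genuinely necessary (for $n=1$, $K=[0,0.09]$, $u=1$, $\eps=100$, $Z'=(1+\eps/9)|Z|\approx 12.1$ exceeds $(1+\eps)\vol(K_u)=9.09$), so your second fallback of capping $\eps$ at the algorithm level is the right fix, not the looser Taylor bound.
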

\begin{proof}[Proof of Lemma \ref{int-pts-vol}]
    Consider points $Z$ which are all the integer points in $K_u$.
    For any $z \in Z$, we will consider the length $1$ cube rooted at this point: $\cub_z = \{x : z\leq x < z+\1\}$.

    Note that $\cub = \cup_{z \in Z}\cub_z$ is a subset of the partition of $[0,u]^n$ into length $1$ cubes with integer vertices. Thus, 
    \[
    \vol(\cub) = \vol(\cup_{z \in Z} \cub_{z}) = \sum_{z \in Z}\vol(C_z)= \sum_{z \in Z}(1^n) = |Z|.
    \]
Note that $u\ell = u\ell(\cl(K)) = \ell(\cl(K_u))$ is the minimum axis-intercept of $\cl(K_u)$.
    We can apply Prop. \ref{Cube containment} to see
\[\vol(K_u) \leq \vol(C) \leq \left(1 + \frac{2n\sqrt{n}}{u\ell}\right)^n\vol(K_u)\]
From this, we see that for $Z'$ given in the lemma statement, 
\[\vol(K_u) \leq \vol(C) =|Z| \leq Z' \leq (1 + \frac\eps9)|Z| =(1 + \frac\eps9)\vol(C) \leq (1 + \frac\eps9)\left(1 + \frac{2n\sqrt{n}}{u\ell}\right)^n\vol(K_u)\]

Note that $u\ell \geq \frac{9n^{2.5}}{\eps \ell}\ell = \frac{9n^{2.5}}{\eps}$, and thus
\begin{align*}
    Z' &\leq  (1 + \frac\eps9)\left(1 + \frac{2n \sqrt{n}}{u\ell}\right)^n\text{vol}(K_u) \\ 
    &\leq  (1 + \frac\eps9)\left(1 + 2n\frac{2n \sqrt{n}}{u\ell}\right)\text{vol}(K_u) &&\text{By Taylor's series since} \frac{2n^{1.5}}{u\ell} < 1\\
    &\leq  \left(1 + \frac\eps9 + \frac{4n^{2.5} }{\frac{9n^{2.5}}{\eps}} + \frac{4\eps n^{2.5}}{9 \frac{9n^{2.5}}{\eps}} \right)\text{vol}(K_u) \\
    &\leq (1  + \eps )\text{vol}(K_u)
\end{align*} 
Thus $ \vol(K_u) \leq Z' \leq (1 + \eps) \vol(K_u) $.
\end{proof}

\section{Hypercube Clipped by a Single Constraint}
In this section we will prove Theorems \ref{thm:one-plane} and \ref{thm:one-convex}. 

In both the case of linear and convex constraints, our algorithm acts as follows:
We begin by performing affine transformations on the given convex body $K$ to achieve $K_u$ which is the $[0,u]^n$ cube (for some integer $u$) intersected by a constraint $\sum_{i = 1}^ng_j(x_j) \leq b$ where each $g_j : \R \to \R_{\geq 0}$ is nondecreasing and convex on $[0,1]$. We then approximately count the number of integer points in this new body using either COUNT\_KNAPSACK or ROUND\_ROBP. We prove that this provides an approximate number, and thus volume, of axis aligned unit cubes that intersect $K_u$. We then show that this approximation also serves as a good approximation of $\vol(K_u) = u^n \vol(K)$.  

\subsection{Truncating Halfspace}\label{sec: single-knap}
Consider a hypercube $[0,1]^n$ and an intersecting hyperplane $a^\top x = b$. Let polytope $P$ be the portion of the cube clipped by this hyperplane: $P=\{x\in [0,1]^n : a^\top x \leq b\}$.

In order to prove Theorem \ref{thm:one-plane} we will rely on the following propositions:

\begin{prop}
\label{single-knapsack}[Canonical Position]
    There exists an affine transformation that transforms body $P = [0,1]^n \cap \{x : \sum_{j = 1}^n a_j x_j \leq b\}$ to body $R= [0,1]^n \cap \{x : \sum_{j = 1}^n|a_j| x_j \leq  b + \sum_{\ind \in [n] : a_\ind < 0}|a_\ind|\}$ and maintains volume, i.e. $\vol(P) = \vol(R)$.
\end{prop}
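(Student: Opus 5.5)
The plan is to use the coordinate-wise reflection $T:\R^n\to\R^n$ given by $T(x)_j = 1 - x_j$ for $j\in N := \{j\in[n] : a_j<0\}$ and $T(x)_j = x_j$ otherwise. It is affine, $T(x) = Dx + \is{N}$, where $D$ is diagonal with entry $-1$ in coordinates of $N$ and $+1$ elsewhere. Since $|\det D| = 1$, $T$ preserves Lebesgue measure. It is also an involution, so it is a bijection of $\R^n$.

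The key steps are as follows.

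First, check that $T$ maps the cube $[0,1]^n$ onto itself. This holds because $t\mapsto 1-t$ is a bijection of $[0,1]$.

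Second, compute how the constraint transforms. Take $x$ and set $y = T(x)$. For $j\in N$ we have $x_j = 1-y_j$, so
\[
a_jx_j = a_j - a_j y_j = -|a_j| + |a_j|y_j .
\]
For $j\notin N$ we have $a_jx_j = |a_j|y_j$. Summing over $j$,
\[
\sum_j a_jx_j = \sum_j |a_j|y_j - \sum_{j\in N}|a_j| .
\]
Hence $a^\top x\le b$ holds if and only if $\sum_j |a_j|y_j \le b + \sum_{j\in N}|a_j|$.

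Third, combine the first two steps to get $T(P) = R$ exactly. Then $\vol(R) = |\det D|\,\vol(P) = \vol(P)$ by the change of variables formula for affine maps.

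There is no real obstacle here. The only point needing care is the sign bookkeeping in the second step, so that the right-hand side picks up exactly $\sum_{j : a_j<0}|a_j|$. It is also worth remarking that $R$ has nonnegative normal $|a|$. If the new right-hand side is negative, then $R$ is empty, and so is $P$, which is consistent with the identity.
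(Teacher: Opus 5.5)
Your proof is correct and is the same argument as the paper's: the paper uses the identical reflection $x \mapsto Ax + \is{J}$ (your $Dx + \is{N}$). It checks that the cube is preserved and that the constraint becomes $\sum_j |a_j| y_j \le b + \sum_{j : a_j<0}|a_j|$, then concludes from $|\det A| = 1$.
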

\begin{prop}\label{single-int-knap}
     Consider sets $S = \{x \in \R^{n\times \log u} : \sum_{i = 1}^n \sum_{j = 1}^{\log u} 2^{j-1} a_i x_{ij} \leq ub , \,\,\, x_{ij} \in \{0,1\}\} $ and $Z= \Z^n \cap [0,u)^n \cap \{x: a^\top x \leq ub\} =  \left\{x \in \R^n: a^\top x \leq ub, \,\,\, x_i \in \{0,1,\dots,u-1\} \right\}$ where $a,b,u \geq 0$ and $u$ is a power of $2$. Then $|S| = |Z|$.
\end{prop}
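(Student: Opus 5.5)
The plan is to exhibit an explicit bijection between $Z$ and $S$ via binary expansion. Write $u = 2^m$ with $m = \log u$, and index the binary variables of $S$ as $x_{ij}$ for $i \in [n]$, $j \in [m]$. Define $\phi : \{0,1\}^{n \times m} \to \{0,1,\dots,u-1\}^n$ by
\[
\phi(x)_i = \sum_{j=1}^{m} 2^{j-1} x_{ij}.
\]
The map $\phi$ is a bijection between $\{0,1\}^{n\times m}$ and $\{0,1,\dots,u-1\}^n$. Each $\phi(x)_i$ is an integer in $[0, 2^m - 1] = [0,u-1]$. Conversely, every integer in $\{0,\dots,u-1\}$ has a unique binary representation with exactly $m$ bits, so coordinatewise uniqueness of binary expansions gives both injectivity and surjectivity.

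Next we show that $\phi$ preserves the constraint. For any $x \in \{0,1\}^{n\times m}$, linearity gives
\[
\sum_{i=1}^n \sum_{j=1}^{m} 2^{j-1} a_i x_{ij} = \sum_{i=1}^n a_i \phi(x)_i = a^\top \phi(x).
\]
Hence $x \in S$ if and only if $a^\top \phi(x) \le ub$. Combined with $\phi(x) \in \{0,\dots,u-1\}^n$, this says $x \in S$ if and only if $\phi(x) \in Z$. The restriction of $\phi$ to $S$ is therefore a bijection onto $Z$, and $|S| = |Z|$. The paper's description of $Z$ as $\Z^n \cap [0,u)^n \cap \{a^\top x \le ub\}$ agrees with the second description, since integer points of $[0,u)$ are exactly $\{0,\dots,u-1\}$.

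There is no real obstacle. The one point needing care is that $u$ is a power of $2$: this guarantees that $m$ bits represent exactly the range $\{0,\dots,u-1\}$, with neither overflow to values $\ge u$ nor missing values, which would break surjectivity or well-definedness. Nonnegativity of $a$ and $b$ is not needed for the counting identity itself; it only matters so that $S$ is a genuine knapsack instance to which Theorem \ref{SVV} applies.
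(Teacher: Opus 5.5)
Your proposal is correct and follows essentially the same route as the paper: a one-to-one correspondence between $Z$ and $S$ via binary expansion of each coordinate, with linearity showing the knapsack constraint is preserved. Packaging it as a single map $\phi$ that is a bijection on the whole cube and preserves the constraint is slightly cleaner than the paper's two separate directions, and it makes explicit that the two maps are mutually inverse.
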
 
Using the above statements, we can prove the main theorem of this section.

\begin{proof}[Proof of Theorem \ref{thm:one-plane}]
Our algorithm first performs affine transformations on polytope $P$ to achieve polytope $R= [0,1]^n \cap \{x : w^\top x \leq c\}$ where $w_\ind = |a_\ind| \geq 0$ for all $\ind$ and $c = b + \sum_{\ind \in [n] : a_\ind < 0} w_\ind$. Proposition \ref{single-knapsack} tells us that $\vol(P) = \vol(R)$. Thus we can approximate the volume of polytope $R$, which has nonnegative coefficients, instead. From here on we can assume WLOG that $P = [0,1]^n \cap \{x : a^\top x \leq b\}$ with $a_\ind,b \geq 0$ for all $\ind \in [n]$. Note that regardless of the change in the coefficients, the bit complexity of the whole set remains similar: given input encoding length $L$, the new capacity now has encoding length at most $\log(n)+ L$ as it has value at most $b + \sum_{i = 1}^na_j \leq (n+1)(\max\{\max_j a_j, b\})$. 

Let $u$ be the power of $2$ such that \[\frac{9n^{2.5}}{\eps} \max\bigg\{1,{\displaystyle \max_{j \in [n]}\frac{a_j}{b}}\bigg\} \leq u < 2\cdot\frac{9n^{2.5}}{\eps} \max\bigg\{1,{\displaystyle \max_{j \in [n]}\frac{a_j}{b}}\bigg\}.\]

Consider $P_u$ which is the polytope $P$ scaled up by a factor of $u$ in each dimension, such that it now lies in $[0,u]^n$ with hyperplane $a^\top x \leq ub$. We can find $P_u$ by multiplying $P$ with transformation matrix $u I_n$. The volume of the new polytope is exactly \[ \text{vol}(P_u) = det(uI_n) \cdot \text{vol}(P) = u^n\cdot \text{vol}(P) \]
Thus it suffices to develop an FPTAS for the volume of $P_u$. 

Consider the integer points in $P_u':= \{x \in [0,u)^n : a^\top x \leq ub \}$: \[Z = \Z^n \cap [0,u)^n \cap \{x: a^\top x \leq ub\} = \Z^n \cap P_u'\]
Note that $\min\{1,b/a_j\}$ is the $x_j$-axis-intercept of $P$, thus \[u \geq \frac{9n^{2.5}}{\eps} \max\left\{1,{\displaystyle \max_{j \in [n]}\frac{a_j}{b}}\right\} = \frac{9n^{2.5}}{\eps} \frac1{\min\Big\{1,{\displaystyle \min_{j \in [n]}\frac{b}{a_j}}\Big\}} =  \frac{9n^{2.5}}{\eps\ell (P)} .\] Thus we can apply Lemma \ref{int-pts-vol} which tells us that given an $\frac{\eps}9$-approximation $Z'$ for $|Z|$, then $Z'$ serves as an $\eps$-approximation for $\vol(P_u') =\vol(\cl(P_u')) = \vol(P_u)$.

It remains to find an FPTAS that approximately counts the number of integer points in $P_u'$. This is exactly the number of integer knapsack solutions under the knapsack constraint $a^\top x \leq ub$ with $x \in \{0,1,\dots, u-1\}$.

Since $u$ is a power of $2$, Prop \ref{single-int-knap} gives us a set \[S = \bigg\{x \in \R^{n\times\log u} : \sum_{i = 1}^n \sum_{j = 1}^{\log u} 2^{j-1} a_i x_{ij} \leq ub ,\quad x_{ij} \in \{0,1\}\bigg\} \] such that $|S| = |Z|$. Taking matrix $x \in \R^{n \times \log u}$ as a length $n\log u$ vector instead, $S$ can be viewed as a 0-1 knapsack set. 

Note that counting knapsack solutions is a well known $\#P$ hard problem, thus an approximate counting is the best we can do in polynomial time (unless $P = \#P$). 
We rely on the Dynamic Programming FPTAS developed by \cite{SVV10} for approximately counting $0-1$ knapsack solutions in $S$.

Theorem \ref{SVV} tells us that this FPTAS has runtime $O(\frac{m^3}\del \log (\frac{m}\del))$, however now our input parameter $m = n\log u$ and our error parameter $\delta = \frac\eps9$. 
Recall that $L \geq \log\{\max_{j \in n}a_j\}$ and \[u \leq 2\cdot \frac{9n^{2.5}}\eps\max\bigg\{1,{\displaystyle \max_{j \in [n]}\frac{a_j}{b}}\bigg\} \leq 2 \cdot \frac{9n^{2.5}}\eps\max_{j \in [n]}{a_j},\] and consequently
$\log (u) = O(\log\frac{n}\eps + L)$. Thus our FPTAS uses
$O\left(\frac{n^3}\eps (\log\frac{n}\eps + L)^3 (\log (\frac {n}\eps)  + \log L )\right) = O\left(\frac{n^3}\eps (\log\frac{n}\eps + L)^4\right)$ arithmetic operations. We need not compute values greater than $\max\{u(b + \sum_{j \in [n]}|a_j|), \max_{j \in n}a_ju\}$, and thus all operations are on $O(L +\log(n/\eps))$-bit numbers.

This algorithm gives us $Z'$ such that
\[|Z| = |S| \leq Z' \leq (1 + \frac\eps9)|S| = (1 + \frac\eps9)|Z| \] 
and thus
\[\vol(P_u) \leq Z'  \leq (1 + \eps) \vol(P_u)\] 
and 
\[\vol(P) \leq \frac{Z'}{u^n} \leq (1 + \eps) \vol(P)\]
Thus we have an $\eps$ approximation of the volume of our original polytope.

We explicitly state the algorithm below:
\begin{tcolorbox}[colback=gray!10, colframe=gray!60]
\textbf{FPTAS for a Single Truncating Halfspace}

\textbf{ Input:} $a \in \Z^n$, $b \in \Z$, $\eps > 0$
\begin{enumerate}
    \item Set $c \from b + \sum_{j \in [n] : a_j < 0}|a_j|, \quad u \from 2^{\left\lceil \log_2\left(\frac{9n^{2.5}}{\eps} \max\{1,{ \max_{j \in [n]}\frac{|a_j|}{c}}\}\right)\right\rceil}$
    \item $S \from \{x \in \R^{n\log u} : \sum_{i = 1}^n \sum_{j = 1}^{\log u} 2^{j-1} |a_i| x_{i,j} \leq uc , x_{ij} \in \{0,1\}\}$
    \item Call COUNT\_KNAPSACK($\eps$) on set $S$, obtain $Z'$ such that $|S|\leq Z' \leq (1+\eps) |S|$.
    \item Return $\frac{Z'}{u^n}$
\end{enumerate}
\end{tcolorbox}
\end{proof}
Now we prove the propositions stated at the beginning of the section
\begin{proof} [Proof of Proposition \ref{single-knapsack}]
Recall that $P = [0,1]^n \cap \{x : a^\top x \leq b\}$ and $R= [0,1]^n \cap \{x : |a|^\top x \leq b + \sum_{\ind \in [n] : a_\ind < 0} |a_\ind|\}$.

Let $J = \{\ind \in [n]: a_\ind < 0\}$. Let $A$ be the diagonal matrix with $A_{\ind,\ind} = -1$ $\forall \ind \in J$ and $A_{\ind,\ind} = 1$ $\forall \ind \not\in J$. Note that $\det(A) \in \{1,-1\}$. Let $\is{J}$ be the indicator vector of the subset $J$. 
Let polytope $Q = \{Ax + \is{J} : x \in P\}$ and notice that $\vol(Q) = |\det(A)| \vol(P) = \vol(P)$.

It remains to show that $Q = R$, or $x \in P \iff Ax + \is{J} \in Q \iff Ax + \is{J} \in R$. 
We prove $x \in [0,1]^n \iff Ax + \is{J}\in [0,1]^n$, and then we prove $x \in \{x : a^\top x \leq b\} \iff Ax +\is{J} \in \{x : |a|^\top x \leq b + \sum_{\ind \in [n] : a_\ind < 0} |a_\ind|\}$, and consequently the claim follows.

Note that 
\[
(Ax+\is{J})_j = \begin{cases}
    -x_j + 1 \quad j \in J \\
    x_j \quad\quad j \not\in J
\end{cases}
\]
and thus $(Ax+\is{J})_j \in [0,1] \iff x \in [0,1]$. 

Next, we see that 
\begin{align*}
    b &\geq \sum_{j = 1}^n a_jx_j 
    = \sum_{j \in J}a_jx_j + \sum_{j \not\in J} a_jx_j 
    = \sum_{j \in J}-|a_j|x_j + \sum_{j \not\in J} |a_j|x_j \iff
    \\
    b + \sum_{j \in J}|a_j| & \geq \sum_{j \in J}-|a_j|x_j + \sum_{j \not\in J} |a_j|x_j + \sum_{j \in J} |a_j| = \sum_{j \in J}|a_j|(-x_j + 1) + \sum_{j \not\in J} |a_j|x_j = \sum_{j = 1 }^n |a_j|(Ax + \is{J})_j 
\end{align*}
So \[x \in \{x : a^\top x \leq b\} \iff Ax +\is{J} \in \{x : |a|^\top x \leq b + \sum_{\ind \in [n] : a_\ind < 0} |a_\ind|\}\] and we can conclude that $R =Q$ and $\vol(R) = \vol(Q) = \vol(P)$.
\end{proof}
\begin{proof}[Proof of Proposition \ref{single-int-knap}]
    We can construct a one-to-one correspondence between points in $Z$ and $S$ as follows:
\\
    Consider $x^*\in Z$. Let $\bar{x} \in \{0,1\}^{n \times \log u}$ such that $\{\bar{x}_{i1},\bar{x}_{i2},\dots, \bar{x}_{i(\log u)}\}$ is the unique binary representation of $x^*_i$: $\sum_{j = 1}^{\log u } 2^{j-1}\bar{x}_{ij} = x^*_i$. It follows that 
    \[ \sum_{i = 1}^{n }\sum_{j = 1}^{\log u} 2^{j-1} a_i\bar{x}_{ij} = \sum_{i = 1}^{n } a_i\sum_{j = 1}^{\log u} 2^{j-1}\bar{x}_{ij} = \sum_{i = 1}^{n } a_i x^*_{i} \leq ub \]
and thus $\bar{x} \in S$.

Now consider $\bar{x} \in S$. 
Let $x^*$ be such that $x^*_i = \sum_{j = 1}^{\log u } 2^{j-1}\bar{x}_{ij}$. 
Note that $0 \leq \bar{x}_{ij} \leq 1$, and thus \[0 \leq \sum_{j = 1}^{\log u } 2^{j-1}\bar{x}_{ij} \leq \sum_{j = 1}^{\log u} 2^{j-1} = 2^{\log u - 1 + 1} - 1 = u - 1\]
Thus $x^* \in  \{0,\dots,u-1\}^n$. It also follows that 
\[ \sum_{i = 1}^{n } a_i x^*_{i} = \sum_{i = 1}^{n } a_i\sum_{j = 1}^{\log u } 2^{j-1}\bar{x}_{ij} = \sum_{i = 1}^{n }\sum_{j = 1}^{\log u} 2^{j-1} a_i\bar{x}_{ij}  \leq ub \]
and thus $x^* \in Z$. We can now conclude that $|Z| = |S|$.
\end{proof}

\subsection{Truncating Convex Constraint}\label{sec: single-conv-v1}
Consider a hypercube $[0,1]^n$ and an intersecting constraint $f(x) \le b$, where $f(x) = \sum_{j = 1}^n f_j(x_j)$, and each function $f_j : \R_{\geq 0} \to \R_{\geq 0}$ is nondecreasing, convex, and rational on $[0,1]$. Let $Q$ be the intersection of the unit cube with this constraint: $Q = \{x \in [0,1]^n : f(x) \leq b\}$.

\begin{proof}[Proof of Theorem \ref{thm:one-convex}]
First we use theorem \ref{find-l} to compute $\ell'$ such that $\frac12\ell(Q) \leq \ell'\leq \ell(Q)$. Let
\[u = \left\lceil\frac{9n^{2.5}}{\eps\ell'}\right\rceil,\qquad  g_{j}(x_j) = f_{j}({x_j}/u),\qquad g(x) = \sum_{j = 1}^n g_{j}(x_j).\]

Note that ${9n^{2.5}}/{\eps\ell(Q) \leq u \leq 18n^{2.5}}/{\eps\ell(Q)} + 1$.

Our algorithm then transforms convex body $Q$ to achieve $Q_u = [0,u]^n \cap \{x \in \R^n : g(x) \leq b\}$. We can do this through multiplying $Q$ with transformation matrix $u I_n$, thus $\vol(Q_u) = u^n\vol(Q)$.

Consider the integer points in $Q_u':= \{x \in [0,u)^n : g(x) \leq b \}$: \(Z = \Z^n \cap [0,u)^n \cap \{x: g(x) \leq b\} \)
Lemma \ref{int-pts-vol} tells us that given an $\frac{\eps}9$-approximation $Z'$ for $|Z|$, then $Z'$ serves as an $\eps$-approximation for $\vol(Q_u') =\vol(\cl(Q_u')) = \vol(Q_u)$.

Note that the number of integer points in $Q_u'$ is exactly the number of integer solutions $x \in \{0,1,\dots, u-1\}$ under the constraint $\sum_{j = 1}^n g_j(x_j) \leq b$.
We can now apply Theorem \ref{single-conv-ROBP} to give us a $\frac\eps9$ approximation of the integer points. This FPTAS has runtime $O(n^5 (\log u)^3 (L + \log n) / \del^2)$ where $\del = \frac\eps9$.
Recall that $u \leq 18n^{2.5}/\eps\ell + 1$ where $\ell = \ell(Q)$ is the smallest axis-intercept of $Q$. As seen in the proof of theorem \ref{find-l}, $\vol(Q) \leq \ell$ and $L_o \geq \log\frac1{\vol(Q)}$, and thus $\log(u) = O(\log(\frac{n}{\eps}) + \log (\frac1\ell)) = O(\log(\frac{n}{\eps}) + L_o)$.
Thus we perform
$O(n^5 (L_o+\log(\frac{n}\eps))^3 (L + \log n) / \eps^2)$ arithmetic operations. 
In both algorithms used, FIND\_INTERCEPT and ROUND\_ROBP, we operate on at most $O(L\log(\frac{n}{\eps}) + \log n + LL_o)$-bit numbers. 

We explicitly state the algorithm below:
\begin{tcolorbox}[colback=gray!10, colframe=gray!60]
\textbf{FPTAS for a Single Truncating Convex Constraint}

\textbf{ Input:} $f_{j} : \R \to \R_{\geq 0}$ for $j \in [n]$, $b\in \R_{\geq 0}$, $\eps >0$
\begin{enumerate}
    \item Call FIND\_INTERCEPT($\{x \in [0,1]^n : \sum_{j = 1}^n f_j(x_j) \leq b\}$), obtain $\ell'$.
    \item Compute $u \from \lceil{9n^{2.5}}/{\eps\ell'}\rceil$
    \item $Z \from \{x \in \Z^n : \sum_{j = 1}^nf_j( \frac{x_j}u) \leq b, x \in [0,u-1]^n\}$
    \item Call ROUND\_ROBP($\eps$) on set $Z$, obtain $V$ such that $|Z| \leq Z' \leq (1 + \eps)|Z|$.
    \item Return $\frac{Z'}{u^n}$
\end{enumerate}
\end{tcolorbox}
\end{proof}
\section{Hypercube Clipped by Multiple Constraints}
In this section we prove Theorems \ref{thm:multi-plane} and \ref{thm:multi-conv}. 

As before, both algorithms first dilate our convex body by some factor $u$. We then approximately counting the number of integer points in this new convex body using an FPTAS developed in \cite{DBLP:journals/corr/abs-1008-3187}. This algorithm constructs small width Read-Once Branching Programs (ROBPs) to approximately count the number of points in such a set. We prove that this provides an approximate number, and thus volume, of unit cubes whose smallest coordinate is an integer point $x$ in the dilated body. We then show that an $\frac\eps9$ approximation of the volume of these cubes is an $\eps$ factor approximation of the volume of our body. 

\subsection{Multiple Nonnegative Linear Constraints}\label{sec:multi-knap}
Consider the hypercube $[0,1]^n$ and a collection of $k$ intersecting hyperplanes each described by $a_i^\top x = b_i$ where $a_{ij}, b_i\geq 0, \, \, \,\forall i \in [k], \, \, \,j \in [n]$. Let polytope $P$ be the intersection of the cube $[0,1]^n$ truncated by these planes, i.e., $P = \{x \in [0,1]^n: a_i^\top x \leq b_i \, \,\,\forall i \in [k]\}$.

\begin{proof}[Proof of Theorem \ref{thm:multi-plane}]
    Our algorithm first transforms convex body $P = [0,1]^n \cap \{x : A x \leq b\}$ to 
    $P_u = [0,u]^n \cap \{x : A x \leq ub\}$ through dilating by a factor of $u = \frac{9n^{2.5}}\eps\max\{1, \max_{i \in [k]}\max_{j \in [n]}\frac{a_{ij}}{b_i}\}$. It follows that $\vol(P_u) = u^n \vol(P)$.

    Consider the integer points in $P_u':= \{x \in [0,u)^n : Ax \leq ub \}$: \[Z = \Z^n \cap [0,u)^n \cap \{x: Ax \leq ub\} = \Z^n \cap P_u'\]

Take $f_{ij}(x) = a_{ij}x_j$ which is clearly nonnegative, nondecreasing, and convex. Note that $\min\{1,\min_{i \in k}b_i/a_{ij}\}$ is the $x_j$-axis-intercept of $P$, thus
\[
u \geq \frac{9n^{2.5}}\eps\max\{1, \max_{i \in [k]}\max_{j \in [n]}\frac{a_{ij}}{b_i}\} = \frac{9n^{2.5}}\eps \frac1{\min\{1,\min_{i \in [k]}\min_{j \in n}\frac{b_i}{a_{ij}}\}} = \frac{9n^{2.5}}{\eps\ell}.
\]
We can apply Lemma \ref{int-pts-vol} which tells us that given an $\frac{\eps}9$-approximation $Z'$ for $|Z|$, then $Z'$ serves as an $\eps$-approximation for $\vol(P_u') =\vol(\cl(P_u')) = \vol(P_u)$.

It remains to find an FPTAS that approximately counts the number of integer points in $P_u'$. This is exactly the number of integer solutions under the integer knapsack constraints $ a_i^\top x \leq b_i$, $i \in [k]$ with $x \in \{0,1,\dots, u-1\}$.

Theorem \ref{multi-ROBP} gives us such an FPTAS with runtime $O(n^{O(k^2)}(\log u/\eps)^{O(k)}(L + \log n))$. 
Recall \[u = \frac{9n^{2.5}}\eps\max\bigg\{1, \max_{i \in [k]}\max_{j \in [n]}\frac{a_i}b\bigg\}\leq \frac{9n^{2.5}}\eps\max_{i \in [k]}\max_{j \in [n]}{a_i},\] and thus $\log(u) = O(\log\frac{n}\eps + L)$.
Thus we use
$n^{O(k^2)}(\log \frac{n}\eps +L)^{O(k)}/\eps^{O(k)}$ arithmetic operations on $O(L(\log(n /\eps) + L))$-bit numbers. 

We explicitly state the algorithm below: 
\begin{tcolorbox}[colback=gray!10, colframe=gray!60]
\textbf{FPTAS for Multiple Truncating Halfspaces}

\textbf{ Input:} $A \in \Z_{\geq 0}^{n \times k}$, $b \in \Z_{\geq 0}^k$, $\eps >0$
\begin{enumerate}
    \item Compute $u \from \frac{9n^{2.5}}\eps\max\{1, \max_{i \in [k]}\max_{j \in [n]}\frac{a_{ij}}{b_i}\}$
    \item $Z_i \from \{x \in \Z^n : \sum_{j = 1}^na_{ij}x_j \leq ub_i, x \in [0,u-1]^n\}\quad Z \from \bigcap_{\,i = 1}^{\,k} Z_i$
    \item Call ROUND\_ROBPS($\eps$) on set $Z$, obtain $Z'$ such that $|Z| \leq Z' \leq (1 + \eps)|Z|$.
    \item Return $\frac{Z'}{u^n}$
\end{enumerate}
\end{tcolorbox}

\end{proof}

\subsection{Multiple Convex Constraints}\label{sec:multi-conv}
Let $f_i(x) = \sum_{j = 1}^n f_{ij}(x_j)$ where functions $f_{ij} : \R_{\geq 0} \to \R_{\geq 0}$ are nondecreasing and convex on $[0,1]$. Let 
\[
Q = \{x \in [0,1]^n: f_{i}(x) \leq b_i \quad \forall i \in [k]\}.
\]
\begin{proof}[Proof of Theorem \ref{thm:multi-conv}]
First we use theorem \ref{find-l} to compute $\ell'$ such that $\frac12\ell(Q) \leq \ell'\leq \ell(Q)$. Let \[u = \frac{9n^{2.5}}{\eps\ell'}, \quad g_{ij}(x_j) = f_{ij}(\frac{x_j}u), \quad g_i(x) = \sum_{j = 1}^n g_{ij}(x_j).\]
Note that ${9n^{2.5}}/{\eps\ell(Q) \leq u \leq 18n^{2.5}}/{\eps\ell(Q)}$.

    Our algorithm then transforms convex body $Q = [0,1]^n \cap \{x : f_i( x) \leq b_i$ $\forall i \in [k]\}$ to 
    $Q_u = [0,u]^n \cap \{x : g_i( x) \leq b_i$ $\forall i \in [k]\}$ through dilating by a factor of $u$. We do this by multiplying with transformation matrix $uI_n$, thus $\vol(Q_u) = u^n \vol(Q)$.

    Consider the integer points in $Q_u':= \{x \in [0,u)^n : g_i( x) \leq b_i$  $\forall i \in [k] \}$: 
    \[ Z = \Z^n \cap [0,u)^n \cap \{ x : g_i(x) \leq b_i \quad \forall i \in [k]\} = \Z^n \cap Q_u' \]

    Lemma \ref{int-pts-vol} tells us that given an $\frac{\eps}9$-approximation $Z'$ for $|Z|$, then $Z'$ serves as an $\eps$-approximation for $\vol(Q_u') =\vol(\cl(Q_u')) = \vol(Q_u)$.

It remains to find an FPTAS that counts the number of integer points in $Q_u'$. This is exactly the number of integer solutions $x \in \{0,1,\dots, u-1\}$ under the constraints $\sum_{j = 1}^n g_{ij}(x) \leq b_i\,\,\, \forall i \in [k]$.

We can now use the adapted FPTAS developed by \cite{DBLP:journals/corr/abs-1008-3187} to find the above. 
Theorem \ref{multi-ROBP} gives us such an FPTAS with runtime 
$O(n^{O(k^2)}(\log u/\eps)^{O(k)}(L + \log n))$.
Recall that $u \leq 18n^{2.5}/\eps\ell$ where $\ell = \ell(Q)$ is the smallest axis-intercept of $Q$. As seen in the proof of theorem \ref{find-l}, $\vol(Q) \leq \ell$ and $L_o \geq \log\frac1{\vol(Q)}$, and thus $\log(u) = O(\log(\frac{n}{\eps}) + \log (\frac1\ell)) = O(\log(\frac{n}{\eps}) + L_o)$.
Thus we use 
$n^{O(k^2)}(\log \frac{n}\eps + L_o)^{O(k)}(L + \log n)/\eps^{O(k)}$ arithmetic operations on $O(L(\log (n/\eps) + L_o))$-bit numbers.

\begin{tcolorbox}[colback=gray!10, colframe=gray!60]
\textbf{FPTAS for Multiple Truncating Convex Constraints}

\textbf{ Input:} $f_{ij} : \R \to \R_{\geq 0}$ for $i \in [k]$, $j \in [n]$, $b\in \R^n_{\geq 0}$, $\eps >0$
\begin{enumerate}
    \item Call FIND\_INTERCEPT($\{x \in [0,1]^n : \sum_{j = 1}^n f_{ij}(x_j) \leq b_i, \, \,\, \forall i \in [k]\}$), obtain $\ell'$.
    \item Compute $u \from {9n^{2.5}}/{\eps\ell'}$
    \item $Z_i \from \{x \in \Z^n : \sum_{j = 1}^nf_{ij}( \frac{x_j}u) \leq b_i \, \,\, \forall i \in [k]\}, x \in [0,u-1]^n\}\quad Z \from \cap_{i = 1}^k Z_i$
    \item Call ROUND\_ROBPS($\eps$) on set $Z$, obtain $Z'$ such that $|Z| \leq Z' \leq (1 + \eps)|Z|$.
    \item Return $\frac{Z'}{u^n}$
\end{enumerate}
\end{tcolorbox}
\end{proof}

\section{Hardness of Approximately Counting solutions to Two Linear Inequalities}
\begin{proof}[Proof of Theorem \ref{thm: hardness}]
We prove that this problem is hard by reducing to a variant of the Subset-Sum Problem.
Consider a set $S = \{s_1, \dots, s_n\}$, where each $s_i \in \Z$, and target $T = 0$. The problem of deciding if there is a non-empty subset $I \seq [n]$ such that $\sum_{i \in I} s_i = 0$ is NP-hard.

This problem has a solution if and only if there exists an $x \in \{0,1\}^n \m \0$ such that $\sum_{i = 1}^n s_ix_i = 0$. Here $x_i$ can be though of as the indicator variable for whether or not $i \in I$.

Note that the sum $\sum_{i = 1}^n s_ix_i$ always has integral value for $x \in \{0,1\}^n$. Thus, if $\sum_{i = 1}^n s_i x_i \neq 0$, it must be true that $\sum_{i = 1}^n s_i x_i \geq 1$ or $\sum_{i = 1}^n s_i x_i \leq -1$. Similarly, any solution $x \in \{0,1\}^n$ satisfying $-1 <\sum_{i = 1}^n s_i x_i < 1$ must have sum exactly equal to $0$. 

Let $a\in \R^n$ be the vector such that $a_{i} = 2s_i$ and let $b = 1$.
Consider set \[Z = \{x \in \{0,1\}^n: -b \leq a^\top x \leq b \}.\]
There exists solution $x \in Z$ with $x \neq \0$ if and only if there is $x \in \{0,1\}^n\m \0$ such that $-1 \leq \sum_{i = 1}^n a_ix_i \leq 1$, or equivalently $-1< -\frac12 \leq \sum_{i = 1}^n s_ix_i \leq \frac12 < 1$. Thus, determining if there is a non-zero point in $Z$ is as hard as solving the subset sum variant. 
\end{proof}
\paragraph{Acknowledgement} The author is especially grateful to \sv for introducing this problem to them, and for their guidance, encouragement, and many insightful discussions. The author also thanks \tlo for their helpful comments and suggestions.

\pagebreak
\bibliographystyle{alpha}
\bibliography{citations,acg}
\pagebreak
\appendix
\section{Appendix} In this section we provide background information and proofs of theorems \ref{single-conv-ROBP}, \ref{Dyer}, and \ref{multi-ROBP}.
\subsection{ROBP Preliminaries}
In this section we provide background information relevant to the proofs of \ref{single-conv-ROBP} and \ref{multi-ROBP}.
First, we note some definitions regarding Read-Once Branching Programs (ROBPs):
\begin{definition}[ROBPs]\label{robp}
For $u=(u_1,\ldots,u_n)\in\mathbb{Z}_+^n$, $S,T\in\mathbb{Z}_+$, an $(S,u,T)$-ROBP, $M$, is a layered multi-graph with a layer for each $ \el \in \{0,1,\dots, T\}$, and at most $S$ states in each layer. The first layer has a single (start) vertex $s$, and each vertex in the last layer is labeled accepting or rejecting. Each vertex $v$ in layer $\el -1$ has exactly $u_{\el}+1$ edges, labeled $\{0,1,\ldots,u_{\el}\}$, to layer $\el$.
\end{definition}
We also introduce the following notation
\begin{itemize}
    \item $\bm{L(M,\el)}$ denotes the set of vertices in layer $\el$ of $M$
    \item For a string $z \in \{0,1,\dots,u_{\el+1}\} \times \dots \times \{0,1,\dots,u_j\}$ and a vertex $v \in L(M,\el)$, $\bm{M(v,z)}$ denotes the state in layer $j$ reached by starting from $v$ and following edges labeled $z$.
    \item For $z \in \{0,1,\dots,u_1\} \times \dots \times \{0,1,\dots,u_n\}$, let $\bm{M(z)} = 1$ if $M(s, z)$ is an accept state, or \enquote{accepting}, and $\bm{M(z)} = 0$ otherwise.
    \item For $v \in L(M,\el)$, $\bm{A_M(v)} = \{z \in \{0,1,\dots,u_{\el + 1}\} \times \dots \times \{0,1,\dots,u_n\} : M(v,z)$ is accepting $ \}$ 
    \item For $v \in L(M,\el)$, $\bm{P_M(v)} = \frac{|A_M(v)|}{\prod_{j = \el}^n \big(u_j+1\big)}$ is the fraction of suffixes $z \in \{0,1,\dots,u_{\el + 1}\} \times \dots \times \{0,1,\dots,u_n\}$ that lead to an accepting state. This is equivalently the probability that a suffix sampled uniformly at random leads to an accepting state.
    \item The \textbf{width} of layer $\el$ of $M$ is exactly $|L(M, \el)|$, the number of states in the layer. Input $S$ is referred to as the \textbf{width} of $M$, and it upper bounds the width of each layer. 
\end{itemize}
We use this notation to define an important class of ROBPs, introduced by \cite{DBLP:journals/corr/abs-0910-4122} and \cite{DBLP:journals/corr/abs-1008-3187}, which will be the crux of our proofs.
\begin{definition}[Interval ROBPs]\label{interval-robp}
An $(S,u,T)$-interval ROBP, $M$, is an ROBP such that
there exists a total order $\prec$ on the vertices of layer $L(M,\el)$ such that, for $w,v \in L(M,\el)$ with $w \prec v$, then $A_M(w) \seq A_M(v)$. In addition, vertices $v$ in layer $\el -1$ have edges to layer $\el$ labeled $\{0,1,\ldots,u_{\el}\}$ that respect the ordering $\prec$: if $M(v,k)$ denotes the $k$th neighbor of $v$ for $k\le u_i$, then
$
M(v,u_{\el})\preceq M(v,u_{\el}-1)\preceq \cdots \preceq M(v,0).
$

An interval ROBP defines a function
$
M:\{0,\dots,u_1\}\times\{0,\dots,u_2\}\times\cdots\times\{0,\dots,u_n\}\to\{0,1\}
$
where on input $x$, we begin at the start vertex $s$ and output the label of the final vertex reached when traversing edges of $M$ according to labels in $x$.
\end{definition}

Note that given an $(S,u,T)$-interval ROBP, $M$, and a vertex $v\in L(M,\el-1)$, the edges out of $v$ can be described succinctly by a subset of at most $S$ edges irrespective of how large $u$ is. For each $w \in L(M,\el)$, let $
E(v,w)=\{k \in \{0, 1, \dots, u_{\el}\} : M(v,k)=w\}
$
which is the set of edges from $v$ to $w$. Then $E(v,w)$ is an interval, meaning $
E(v,w)=\{\ell_{v,w},\ldots,r_{v,w}\}
$
for some $\ell_{v,w},r_{v,w}\in\mathbb{Z}_+$. 

Thus, to describe $E(v,w)$ we only need to know $\ell_{v,w}$ and $r_{v,w}$. This allows us to succinctly describe an interval ROBP ${M}$ by storing just the end points of the edge sets $E(v,w)$ for $v,w\in {M}$. This bounds the number of edge sets by the number of vertices in the subsequent layer, which is at most width $S$.
\subsection{Read Once Branching Program for a single read-once convex constraint}\label{sec:sinlge-robp}
\begin{proof}[Proof of Theorem \ref{single-conv-ROBP}]
We first state the related theorem in \cite{DBLP:journals/corr/abs-1008-3187}:
\begin{theorem*}[integer knapsack]
    Given a knapsack instance $\mathrm{KNAP}(a,b,u) = \{x \in \Z : \sum_{j = 1}^n a_j x_j \leq b, 0 \leq x_j \leq u_j \quad \forall j \in [n] \}$ with weight
$W = \sum_i a_i u_i + b$, $U = \max_j u_j$ and $\eps > 0$,
there is a deterministic
$
O\!\left(n^5 (\log U)^2 (\log W) / \del ^2\right)
$
algorithm that computes an $\del$-relative error approximation for
$|\mathrm{KNAP}(a,b,u)|$.
\end{theorem*}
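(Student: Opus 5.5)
We prove this directly by a rounded dynamic program over the \emph{remaining capacity}, viewed as an interval ROBP in the sense of Definition \ref{interval-robp}. Throughout, $\eta>0$ is a per-layer accuracy parameter to be fixed later.

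\textbf{Step 1 (exact interval ROBP).} Process variables in the order $x_1,\dots,x_n$. A state in layer $\el$ is the remaining capacity $c = b - \sum_{j\le \el} a_j x_j$. It is accepting at the end iff $c \ge 0$. For $c\in\R$ define
\[
F_\el(c) = \big|\{ z \in \{0,\dots,u_{\el+1}\}\times\cdots\times\{0,\dots,u_n\} : \textstyle\sum_{j>\el} a_j z_j \le c\}\big|.
\]
Then $F_\el$ is a nondecreasing step function of $c$, with $F_n(c)=\mathbbm{1}[c\ge 0]$ and the recursion
\[
F_{\el-1}(c)=\sum_{x=0}^{u_\el} F_\el(c-a_\el x).
\]
Larger capacity accepts a superset of suffixes, so ordering states by $c$ makes this an interval ROBP, and edge labels respect the order. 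The target quantity is $|\KNAP(a,b,u)| = F_0(b)$. The width of this exact program is up to $W$, so the work is to compress each layer.

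\textbf{Step 2 (rounding each layer).} Process layers backwards from $\el=n$. We maintain a nondecreasing step function $\tilde F_\el$ with at most $S$ steps satisfying
\[
F_\el \le \tilde F_\el \le (1+\eta)^{n-\el} F_\el .
\]
Given $\tilde F_\el$, let $G(c)=\sum_{x=0}^{u_\el}\tilde F_\el(c-a_\el x)$. Since $\tilde F_\el$ has $S$ steps, for any fixed $c$ the set of $x$ landing in each step is an interval of labels. So $G(c)$ is computable in $O(S)$ arithmetic operations by summing (interval length)$\times$(step value). This is exactly the succinct edge-set description of interval ROBPs noted after Definition \ref{interval-robp}.

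$G$ is nondecreasing, and its nonzero values lie in $[1,\prod_j(u_j+1)]\subseteq[1,(U+1)^n]$. We therefore select thresholds $c_0<c_1<\cdots$ in $[-1,W]$ such that $G$ grows by at most a factor $1+\eta$ on each $[c_t,c_{t+1})$. Each threshold is found by binary search over the integer range $[0,W]$, which costs $O(\log W)$ evaluations of $G$. We then set $\tilde F_{\el-1}(c)=G(c_{t+1}^-)$, rounding up to the right end of the bucket, for $c\in[c_t,c_{t+1})$. The number of buckets is $S=O(n\log U/\eta)$.

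\textbf{Step 3 (error and running time).} By monotonicity, the rounded function only overestimates, by at most a factor $1+\eta$ per layer. Hence $F_0(b)\le \tilde F_0(b)\le(1+\eta)^n F_0(b)$. Choosing $\eta=\delta/(2n)$ gives relative error at most $\delta$ (for $\delta\le1$), and then $S=O(n^2\log U/\delta)$.

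Per layer we do $S$ binary searches. Each takes $O(\log W)$ steps, and each step evaluates $G$ in $O(S)$ time. The total is therefore
\[
O(n\,S^2\log W)=O\!\left(n^5(\log U)^2\log W/\delta^2\right),
\]
as claimed.

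\textbf{Main obstacle.} The delicate point is the one-sided bucketing argument. We must check two things. First, rounding up within a bucket keeps $\tilde F_{\el-1}$ monotone, so it still defines an interval ROBP. Second, the multiplicative error composes through the averaging recursion: if $\tilde F_\el\le(1+\eta)^{m}F_\el$ pointwise, then $G\le(1+\eta)^m F_{\el-1}$ pointwise. The second fact holds because the recursion is a positive linear combination of values of $\tilde F_\el$.

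A second concern is the bucket count. The zero region must be handled as its own bucket, and this is what keeps the count at $O(n\log U/\eta)$ rather than depending on $W$. If binary search over real capacities is awkward, I would note that all breakpoints are integers because $a$ and $b$ are integral, so the search can be done over integers.
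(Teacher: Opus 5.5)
Your proposal is correct and takes essentially the same route as the paper's replica of the Gopalan et al.\ argument: backward layer-by-layer rounding of the interval ROBP to geometrically spaced breakpoints, with each state rounded to the more-accepting end of its bucket so the error is one-sided and compounds to $(1+\eta)^n$ with $\eta=\delta/2n$, breakpoints found by binary search, and acceptance values evaluated through interval edge sets. The differences are cosmetic: you use remaining capacity and counts where the paper uses used weight and acceptance probabilities, and the integrality of the states replaces the paper's $1/2^{2L'}$ encoding-gap bookkeeping.
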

In our theorem, we substitute $\KNAP(a,b,u)$ with $Z(f,b,u) = \{x \in \Z : \sum_{j = 1}^nf_j(x_j) \leq b, 0 \leq x_j \leq u-1 \quad \forall j \in [n] \}$ where each $f_j$ is nonnegative, nondecreasing, convex, and rational. These functions obey all the properties used in the proof of the theorem in \cite{DBLP:journals/corr/abs-1008-3187}.

We now provide a detailed replica of the proof with the appropriate substitutions made.

We will construct an approximate branching program for our set $Z(f,b,u)$. We will do so by selecting a subset of vertices in the exact ROBP of $Z(f,b,u)$ to remain, and alter the edge sets accordingly. Note that throughout the proof we will assume that, without loss of generality, $f_j(0) = 0$ for all $j \in [n]$, which follows from observation \ref{obs}.

Let $M$ denote the exact branching program for $\sum_{j = 1}^n f_j(x_j) \leq b$, which consists of $n+1$ layers numbered $0$ to $n$. 
Layer $0$ has single start state $s$. For $\el \leq n$, layer $L(M,\el)$ has a state for every partial sum $\sum_{j = 1}^{\el} f_j(x_j)$ where each $x_j \in \{0,1,\dots,u-1\}$. 
Given a vertex $v$ in layer $\el$, and suffix $x_{\el+1} \in \{0,1, \dots, u-1\}$, the $x_{\el+1}^{th}$ neighbor of $v$ is $M(v,x_{\el+1}) = v + f_{\el + 1}(x_{\el+1})$. A vertex $v \in L(M,n)$ is accepting if partial sum $v \leq b$, and rejecting otherwise.

Note that this program is clearly read once, as each layer corresponds to exactly one variable. It is also interval with total ordering $\prec$ defined as $>$. We can see this because for any $v,w\in L(M,\el)$ with $v > w$ and any suffix $z \in \{0,1, \dots, u-1\}^{n - \el}$, we have $v + \sum_{j = \el + 1}^n f_j(z_j) > w + \sum_{j = \el + 1}^n f_j(z_j)$. Thus if $z \in A_M(v)$, or $v + \sum_{j = \el + 1}^n f_j(z_j) \leq b$, then clearly $w + \sum_{j = \el + 1}^n f_j(z_j) \leq b$, thus $z \in A_M(w)$.
Consequently, $A_M(v) \seq A_M(w)$. Also, for any vertex $v \in L(M,\el)$ and pair of suffixes $0 \leq \ell < r \leq u-1$, we have $M(v,\ell) = v + f_{\el+1}(\ell) \leq v + f_{\el+1}(r) = M(v,r)$ since each $f_{\el+1}$ is nondecreasing. Thus, $M(v,\ell) \succeq M(v,r)$.

By the construction of the states, the width of layer $L(M,\el)$ is bounded by the total number of partial sums between our minimum and maximum possibilities: $[0,\sum_{j  = 1}^{\el}f_j(u-1)]$. There can be up to $u^n$ vertices in each layer. Since this is exponential, we will construct a new ROBP with smaller width. A small width program will allow us to evaluate the number of accepting solutions in polynomial time.

Note that since we do not have access to which values $\sum_{j = 1}^{\el}f_j(x_j)$ actually exist in layer $\el$, we assume that $L(M,\el)$ has \emph{all} values in $[0,\sum_{j = 1}^{\el}f_j(u-1)]$ that have encoding length $L + \log u + \log n$.

Our algorithm will construct a series of interval ROBPs $\M_n=M,\M_{n-1},\ldots,\M_0$. Our final ROBP $\M := \M_0$ will have polynomially bounded width. We obtain $\M_j$ from $\M_{j+1}$ by \emph{rounding} the states in $L(\M_{j+1},j+1)$. More precisely, we set
$
L(\M_j,j+1)=\{\beta_1,\ldots,\beta_{N+1}\}\subseteq L(\M_{j+1},j+1) 
$
where the $\beta_i$’s are defined as follows: let $\beta_1= 0$, which is the minimum value state in layer $j + 1$, and equivalently the one with the most accepting suffixes. Given $\beta_i$, let
\[  \beta_{i+1}=\min\{v: v> \beta_i \text{ and } 0< P_{\M_{j+1}}(v) < P_{\M_{j+1}}(\beta_i)/(1+\eta)\}, \tag{A.1}
\]
for $\eta$ to be set later.
Now consider $\beta_N$, the last vertex assigned. Let $\beta_{N + 1} = \min \{v : P_{\M_{j + 1}}(v) = 0 \}$. Note that $\beta_1 \succ \beta_2 \succ \dots \succ \beta_{N} \succ \beta_{N + 1} $ as the probabilities decrease.
Also note that $N \leq n(\log u)/\eta$ as $P_{\M_j}(\beta_1)\leq 1$ and $P_{\M_j}(\beta_N)\ge u^{-n}$. Let $L' = L \lceil\log u\rceil + \log n$, and let
\[
I_1=[\beta_1,\beta_2- 1/2^{2L'}], I_2=[\beta_2,\beta_3-1/2^{2L'}],I_3=[\beta_3,\beta_4-1/2^{2L'}],\;\ldots,\; I_{N+ 1}=[\beta_{N+1},\sum_{\el = 1}^{j + 1}f_\el(u -1)].
\]
We refer to $\beta_1, \dots, \beta_{N+1}$ as breakpoints since they break our original vertices into intervals. Next we redirect the transitions going from level $j$ to level $j+1$. If we have an edge labeled $z\in\{0,\ldots,u-1\}$ entering a vertex $v\in I_i$, then we redirect the edge to vertex $\beta_i$. The redirection will be done implicitly: for any vertex $v$ in level $j$ and a breakpoint $\beta_i$, we only compute and store the end points of the interval $
E(v,\beta_i)=\{k \in \{0, 1, \dots, u-1\} : \M_j(v,k)=\beta_i\}.
$

Our branching programs have the following approximating properties:

\begin{claim}[Corresponding to the first half of Lemma A.2. of \cite{DBLP:journals/corr/abs-1008-3187}]\label{A.2.1}
    For $v \in L(\M_j,\el)$ and $0 \leq \ell < r \leq u-1$, we have $\M_{j}(v,\ell) \leq \M_j(v,r)$.
\end{claim}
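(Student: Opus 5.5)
We argue by downward induction on $j$, from $j = n$ (where $\M_n = M$) to $j = 0$. Claim \ref{A.2.1} says that inside every program $\M_j$ the transitions stay monotone in the label. In the state ordering this reads $\M_j(v,\ell) \succeq \M_j(v,r)$, i.e.\ in partial-sum values $\M_j(v,\ell) \le \M_j(v,r)$.

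\textbf{Base case.} For $\M_n = M$ the claim was checked when $M$ was built. There $M(v,\ell) = v + f_{\el+1}(\ell) \le v + f_{\el+1}(r) = M(v,r)$ for $\ell < r$, since $f_{\el+1}$ is nondecreasing.

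\textbf{Inductive step, untouched layers.} $\M_j$ is obtained from $\M_{j+1}$ by changing only the edges from layer $j$ into layer $j+1$. So for $v$ in any layer other than $j$, the transitions of $\M_j$ equal those of $\M_{j+1}$, and the inequality carries over from the inductive hypothesis.

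\textbf{Inductive step, layer $j$.} Fix $v \in L(\M_j,j)$ and labels $\ell<r$. Write $w_\ell = \M_{j+1}(v,\ell)$ and $w_r = \M_{j+1}(v,r)$. By the inductive hypothesis, $w_\ell \le w_r$ as partial-sum values. In $\M_j$ the edge into $w$ is redirected to the breakpoint $\beta_i$ with $w \in I_i$. The intervals $I_1, I_2, \dots, I_{N+1}$ appear in increasing order along the value line, since $\beta_1 < \beta_2 < \cdots < \beta_{N+1}$. Hence the map $w \mapsto \beta_{i(w)}$ is nondecreasing: $w_\ell \le w_r$ forces $i(w_\ell) \le i(w_r)$, and so $\M_j(v,\ell) = \beta_{i(w_\ell)} \le \beta_{i(w_r)} = \M_j(v,r)$.

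\textbf{Main obstacle: the $I_i$ must cover every state.} Every possible state $w$ in layer $j+1$ has to lie in some $I_i$, and the $I_i$ must be disjoint. The gaps of length $1/2^{2L'}$ between consecutive intervals must contain no actual states. I would settle this by the encoding-length convention: all states are rationals of encoding length at most $L' = L\lceil\log u\rceil + \log n$. Two distinct such rationals $p/q$ and $p'/q'$ differ by at least $1/(qq') \ge 2^{-2L'}$. Therefore no state lies strictly between $\beta_{i+1} - 2^{-2L'}$ and $\beta_{i+1}$, and the intervals partition the state set of layer $j+1$.

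A last technicality is states below $\beta_1 = 0$. None exist, because $f_j(0)=0$ and every $f_j$ is nonnegative.
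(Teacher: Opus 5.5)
Your proof is correct and follows essentially the paper's route: the unrounded children satisfy $v+f_{\el+1}(\ell)\le v+f_{\el+1}(r)$ because $f_{\el+1}$ is nondecreasing, and redirecting each child to the largest breakpoint below it preserves this order. The paper phrases the second step as a contradiction ($\beta_x>\beta_y$ would put $\beta_x$ strictly between consecutive breakpoints $\beta_y$ and $\beta_{y+1}$), argues directly for every layer $\el\ge j$ rather than by downward induction on $j$, and leaves the partition property of the $I_i$ to Claim \ref{encoding-gap}, which you check explicitly.
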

\begin{proof}
    Note that this claim is only interesting if $\el > j$, otherwise all vertices $v, \M_{j}(v,\ell), \M_j(v,r)$ have the same structure in the original interval ROBP $M$ and thus the claim holds by the interval property. 
    Thus we can assume that layer $\el + 1$ is rounded and consists of breakpoints. Take $\M_{j}(v,\ell) = \beta_{x}$ and $\M_j(v,r) = \beta_y$.

    Assume towards contradiction that $\beta_x > \beta_y$.
    
    When creating edge labeled $\ell$ from vertex $v$, we found the value of $v_{\ell} = v + f_{\el + 1}(\ell)$, and similarly for edge labeled $r$, we found the value of $v_r = v + f_{\el + 1}(r)$. We then found 
    breakpoints such that $\beta_{x}\leq v_{\ell} < \beta_{x+1}$ and $\beta_y\leq v_r < \beta_{y+1}$. 
    Since $\ell < r$ and $f_{\el + 1}$ is nondecreasing, we know that $v_{\ell} = v + f_{\el+1}(\ell) \leq  v + f_{\el+1}(r) = v_r$. Thus we see that $\beta_{y+1} > v_r \geq v_{\ell} \geq \beta_x > \beta_y$, which implies that there is a breakpoint ($\beta_x)$ between (and not equal to) consecutive breakpoints $\beta_{y+1}$, $\beta_y$ which is a contradiction, and so the claim holds.
\end{proof}
\begin{claim}[Corresponding to the second half of Lemma A.2. of \cite{DBLP:journals/corr/abs-1008-3187}]\label{A.2.2}
    Let $v,v' \in L(\M_j,\el)$, $v \leq v'$. For suffix $z$, $\M_j(v,z) \leq \M_j(v',z)$.
\end{claim}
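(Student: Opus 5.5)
I will argue by induction on the length of the suffix $z$, reducing everything to a single-step monotonicity statement: if $v \le v'$ are in layer $\el$ of $\M_j$ and $c \in \{0,\dots,u-1\}$, then $\M_j(v,c) \le \M_j(v',c)$. Once this is established, the claim follows immediately. Write $z = (c, z')$; then $w := \M_j(v,c) \le w' := \M_j(v',c)$, both in layer $\el+1$. Applying the inductive hypothesis to $w, w'$ and the shorter suffix $z'$ gives $\M_j(v,z) = \M_j(w,z') \le \M_j(w',z') = \M_j(v',z)$. The base case, the empty suffix, is trivial.

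For the single-step statement I split into two cases, mirroring the proof of Claim \ref{A.2.1}. If layer $\el+1$ is not rounded in $\M_j$ (that is, $\el+1 \le j$, so the edges out of layer $\el$ are those of the exact program $M$), then $\M_j(v,c) = v + f_{\el+1}(c) \le v' + f_{\el+1}(c) = \M_j(v',c)$ directly.

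Otherwise, layer $\el+1$ consists of breakpoints $\beta_1 < \dots < \beta_{N+1}$, and an edge with exact target value $t$ is redirected to the breakpoint $\beta_i$ with $t \in I_i$. Let $v_c = v + f_{\el+1}(c)$ and $v'_c = v' + f_{\el+1}(c)$, so $v_c \le v'_c$. Suppose $\M_j(v,c) = \beta_x$ and $\M_j(v',c) = \beta_y$. Then $\beta_x \le v_c < \beta_{x+1}$ and $\beta_y \le v'_c < \beta_{y+1}$. Assume toward contradiction that $\beta_x > \beta_y$. Then $\beta_{y+1} > v'_c \ge v_c \ge \beta_x > \beta_y$, which places a breakpoint strictly between the consecutive breakpoints $\beta_y$ and $\beta_{y+1}$. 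This contradiction gives $\beta_x \le \beta_y$. Values beyond the last breakpoint all map to $\beta_{N+1}$, so the same inequality holds in that regime.

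The main obstacle is bookkeeping rather than mathematics. Two points need care. First, the intervals $I_i$ have right endpoints $\beta_{i+1} - 2^{-2L'}$, so I must check that every attainable value $v_c$ lies in some $I_i$ and that none falls in the gaps. This follows from the encoding-length assumption: states are multiples of a granularity coarser than $2^{-2L'}$. Second, the vertices $v, v'$ in layer $\el$ may themselves be breakpoints from an earlier rounding. This causes no difficulty, since the argument uses only their numerical values and the monotone update $t \mapsto t + f_{\el+1}(c)$.
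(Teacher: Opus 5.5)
Your proof is correct and follows the paper's argument. You reduce to one-letter suffixes by induction, handle the unrounded layers directly, and in the rounded case derive the contradiction $\beta_{y+1} > v'_c \ge v_c \ge \beta_x > \beta_y$ between consecutive breakpoints. Your boundary $\el+1 \le j$ for the unrounded case is in fact the precise one; the paper writes $\el > j$. For the gap bookkeeping, the fact you need is the pairwise separation of $L'$-bit rationals (Claim~\ref{encoding-gap}), not that states are multiples of a common granularity, which is false as stated.
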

\begin{proof}
    Note that we need only prove this claim for suffixes of length one, as the property for any longer suffix will follow inductively.
    
    Also note as in the previous theorem, this claim is only interesting if $\el > j$, otherwise vertices $v, v',\M_j(v,z), \M_j(v',z)$ have the same structure in the original interval ROBP $M$. Thus we can assume that layer $\el + 1$ is rounded and consists of breakpoints.
    
    Consider two vertices $v,v'$ in layer $\el$ of our partially rounded ROBP such that $v \leq v'$. We aim to show that suffixes maintain the ordering. Take $\M_{j}(v,z) = \beta_{x}$ and $\M_j(v',z) = \beta_y$.
    
    Assume towards contradiction hat $\beta_{x} > \beta_{y}$. 
    Take $v_{z} = v + f_{\el + 1}(z)$, $v'_{z} = v' + f_{\el + 1}(z)$ and note that since $v \leq v'$, we have $v_z \leq v'_z$.
    Similarly to the previous claim, we see that $\beta_{y+1}> v'_z \geq v_z \geq \beta_x > \beta_y$, which implies that there is a breakpoint ($\beta_x)$ between (and not equal to) consecutive breakpoints $\beta_{y+1}$, $\beta_y$ which is a contradiction, and so the claim holds. 
\end{proof}
By inductively applying the previous two claims to each layer, we see that each $\M_j$ remains an interval ROBP respecting the same ordering as $M$. 
We now analyze how the probability of acceptance (and thus the total number of solutions) in our new ROBP $\M$ compares to that of the original ROBP $M$. 
\begin{claim}[Corresponding to the first half Lemma A.3. of \cite{DBLP:journals/corr/abs-1008-3187}]\label{A.3.1}
For $v\in \M_j$, we have $A_{\M_{j+1}}(v)\subseteq A_{\M_j}(v)$. 
\end{claim}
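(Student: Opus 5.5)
The claim compares $\M_{j+1}$ and $\M_j$, which agree on every layer except that the transitions from layer $j$ into layer $j+1$ are redirected to breakpoints. I would therefore argue by cases on the layer $\el$ containing $v$, using the monotonicity structure from Claims \ref{A.2.1} and \ref{A.2.2}.

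\textbf{Case $\el \geq j+1$.} The layers from $\el$ onward and all edges out of them are identical in $\M_{j+1}$ and $\M_j$. Hence $A_{\M_{j+1}}(v) = A_{\M_j}(v)$ trivially.

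\textbf{Case $\el = j$.} This is the key step. Fix $z = (z_{j+1}, z')$ in $A_{\M_{j+1}}(v)$, and let $w = \M_{j+1}(v, z_{j+1})$, which is the value $v + f_{j+1}(z_{j+1})$ in layer $j+1$ of $\M_{j+1}$. So $z' \in A_{\M_{j+1}}(w)$. In $\M_j$, the same edge is redirected to the breakpoint $\beta_i$ of the interval $I_i$ containing $w$, and $\beta_i \leq w$.

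Layer $j+1$ and everything after it are identical in both programs, and that part is an interval ROBP whose order $\prec$ is $>$. So $\beta_i \leq w$ means $w \preceq \beta_i$, and the interval property gives $A_{\M_{j+1}}(w) \subseteq A_{\M_{j+1}}(\beta_i) = A_{\M_j}(\beta_i)$. Concretely, the suffix $z'$ from the larger partial sum $w$ lands in layer $n$ on a state $\leq b$. Claim \ref{A.2.2}, applied to $\beta_i \leq w$, says the suffix from $\beta_i$ lands on a state no larger, hence also accepting. Therefore $z \in A_{\M_j}(v)$.

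One point needs checking. The vertex $w$ must actually lie in some $I_i$, including the possibility that $w$ exceeds the last breakpoint. This is handled because $I_{N+1}$ extends to the maximum partial sum. A second possibility is that $w$ falls in a gap of width $1/2^{2L'}$ between consecutive intervals. To rule this out I would invoke the encoding-length assumption: all states are values of encoding length at most $L'$, so distinct states differ by more than $1/2^{2L'}$, and no state can fall in such a gap.

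\textbf{Case $\el < j$.} Write $z = (y, z'')$, where $y$ covers the layers $\el+1, \dots, j$. The edges along $y$ are unchanged, so $y$ reaches the same vertex $v'$ in layer $j$ of both programs. The previous case gives $A_{\M_{j+1}}(v') \subseteq A_{\M_j}(v')$, so $z'' \in A_{\M_{j+1}}(v')$ implies $z'' \in A_{\M_j}(v')$. Hence the inclusion holds.

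I expect the main obstacle to be the well-definedness of the redirection in the $\el = j$ case, i.e. the gap argument above. The inclusion itself is simply monotonicity: rounding a partial sum down can only help acceptance.
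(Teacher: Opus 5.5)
Your proposal is correct and follows the paper's proof essentially step for step. The paper uses the same case split on $\el$ relative to $j$. Its key step at $\el=j$ is the same as yours: the redirected child satisfies $\beta_i \le w$, so $A_{\M_{j+1}}(w)\subseteq A_{\M_{j+1}}(\beta_i)=A_{\M_j}(\beta_i)$ by the interval ordering. The paper deals with the $2^{-2L'}$-gap issue separately, via Claim~\ref{encoding-gap}. One small correction: invoke Claim~\ref{A.2.2} for $\M_{j+1}$ rather than $\M_j$, since $w$ need not be a vertex of $\M_j$.
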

\begin{proof}
    Note that for any vertex $v\in L(\M_j,\el)$ where $\el > j$, both ROBPs make identical transitions starting from $v$, and thus $A_{\M_{j+ 1}}(v) = A_{\M_j}(v)$. 
    
    Let $\el = j$. 
    Let $\beta_z = \M_j(v,z)$ be the child of $v$ along edge $z \in \{0,1,\dots,u-1\}$. Note that since $\M_j$ is obtained from $\M_{j + 1}$ by rounding layer $j + 1$, these children are breakpoints, and the \enquote{true} children of $v$ are $v_z = M_{j + 1}(v,z)$ in layer $i + 1$ of $\M_{j + 1}$ such that $v_z \geq \beta_z$. 
    
    Since each $\M_j$ remains an interval ROBP and the structure below $\beta_z$ is the same in $\M_j$ and $\M_{j + 1}$, we have $A_{\M_{j+1}}(v_z) \seq A_{\M_{j+1}}(\beta_z)=A_{\M_{j}}(\beta_z)$. Thus, taking the union over each suffix $z$ placed ahead of each string in $A_{\M_{j+1}}(v_z)$ and $A_{\M_{j}}(\beta_z)$, we maintain this containment, and $A_{\M_{j+1}}(v) \seq A_{\M_j} (v)$.
    
    Now for any $\el < j$, the claim follows because $\M_j$ and $\M_{j + 1}$ make identical transitions up to layer $j$. Thus we can take the union over suffixes used up to layer $j$ and achieve the same property. 
\end{proof}  
\begin{claim}[Corresponding to the second half of Lemma A.3. of \cite{DBLP:journals/corr/abs-1008-3187}]\label{A.3.2}
For any $v\in L(\M_j,\el)$ where $\el\leq j$, we have
$
P_M(v)\leq P_{\M_j}(v)\leq P_M(v)(1+\eta)^{\,n-j}.
$
\end{claim}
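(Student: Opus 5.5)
We prove the claim by backward induction on $j$, from $j=n$ down to $j=0$, with the number of layers rounded so far as the induction parameter. For the base case $j=n$ we have $\M_n = M$, so $P_{\M_n}(v) = P_M(v)$ and both inequalities are trivial. For the inductive step we assume that $P_M(v)\le P_{\M_{j+1}}(v)\le P_M(v)(1+\eta)^{n-j-1}$ holds for every $v\in L(\M_{j+1},\el)$ with $\el\le j+1$. The goal is to transfer this bound to $\M_j$ for layers $\el\le j$, at the cost of one extra factor of $(1+\eta)$.

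\textbf{Lower bound.} This follows directly from Claim \ref{A.3.1}. That claim gives $A_{\M_{j+1}}(v)\seq A_{\M_j}(v)$, hence $P_{\M_{j+1}}(v)\le P_{\M_j}(v)$. Chaining this with the inductive hypothesis, or equivalently iterating Claim \ref{A.3.1} from $\M_n=M$, gives $P_M(v)\le P_{\M_j}(v)$.

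\textbf{Upper bound, layer $j$.} Take $v\in L(\M_j,j)$ and an edge label $z$. In $\M_{j+1}$ this edge goes to the true child $v_z$. In $\M_j$ it is redirected to the breakpoint $\beta_i$ whose interval $I_i$ contains $v_z$, so $\beta_i\le v_z<\beta_{i+1}$. We need $P_{\M_{j+1}}(\beta_i)\le(1+\eta)P_{\M_{j+1}}(v_z)$, and we split into two cases.
\begin{itemize}
\item If $P_{\M_{j+1}}(v_z)=0$, then by the interval ordering every state at or above $\beta_{N+1}$ has probability $0$. So $v_z$ lies in $I_{N+1}$ and is rounded to $\beta_{N+1}$, which also has probability $0$.
\item Otherwise $v_z$ is strictly between $\beta_i$ and $\beta_{i+1}$ and has positive probability. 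By the minimality in (A.1), $v_z$ was not chosen as a breakpoint, so $P_{\M_{j+1}}(v_z)\ge P_{\M_{j+1}}(\beta_i)/(1+\eta)$.
\end{itemize}
Since the structure below layer $j+1$ is identical in $\M_j$ and $\M_{j+1}$, we have $P_{\M_j}(\beta_i)=P_{\M_{j+1}}(\beta_i)$. Averaging over the $u$ edge labels then gives $P_{\M_j}(v)\le(1+\eta)P_{\M_{j+1}}(v)\le(1+\eta)^{n-j}P_M(v)$, where the last step uses the inductive hypothesis at layer $j$.

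\textbf{Upper bound, layers $\el<j$.} The transitions of $\M_j$ and $\M_{j+1}$ agree on layers $\el$ through $j$. So $P_{\M_j}(v)$ is an average over prefixes of $P_{\M_j}$ at the layer-$j$ states reached. We apply the layer-$j$ bound to each such state, and apply the inductive relation between $P_{\M_{j+1}}$ and $P_M$ at those same states. Because the layer-$j$ states of $\M_j$, $\M_{j+1}$ and $M$ coincide under common transitions on layers above $j$, the bound extends by averaging. This requires care: for layers strictly above $j$, the states of $\M_{j+1}$ must be compared with those of $M$ along the same transitions. The averaging argument needs to be set up so that the layer-$j$ inequality can be pushed upward through these common transitions.

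\textbf{Main obstacle.} The delicate step is the second case of the layer-$j$ argument: showing that no state strictly between consecutive breakpoints has probability below $P(\beta_i)/(1+\eta)$. Here we need (A.1) together with the fact that the ordering by value coincides with monotonically decreasing $P$, which follows from Claims \ref{A.2.1} and \ref{A.2.2}. This ensures the minimum defining $\beta_{i+1}$ is exactly the first state that violates the ratio. The discretization of the state space to encoding length $L'$ means this ``first state'' is well defined.
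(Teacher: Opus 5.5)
Your proposal is correct and follows the paper's argument. The lower bound comes from iterating Claim~\ref{A.3.1}. The upper bound comes from the one-step estimate $P_{\M_j}(v)\le(1+\eta)P_{\M_{j+1}}(v)$ at layer $j$ via the minimality in (A.1), extended to layers $\el<j$ by averaging over the unrounded layers $0,\dots,j$ and then telescoped; your backward induction is just this telescoping written out. One thing you add is an explicit treatment of the case $P_{\M_{j+1}}(v_z)=0$, where (A.1) says nothing and one instead uses $v_z\ge\beta_{N+1}$, a case the paper's ``or else $v_z$ could form a new breakpoint'' glosses over. The hedging in your last two paragraphs is unnecessary, since layers $0,\dots,j$ and the edges among them are literally identical in $\M_j$, $\M_{j+1}$ and $M$.
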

\begin{proof}
    Note that since $A_{\M_{j+1}}(v) \seq A_{\M_j}(v)$, it is clear that $P_{\M_{j + 1}}(v) \leq P_{\M_j}(v)$ for all $v$, and thus $P_{M}(v) = P_{\M_n}(v) \leq P_{\M_{n-1}}(v)\leq \dots \leq P_{\M_j}(v)$.

    We will show that for every $j<n$ and $v\in \M_j$, $P_{\M_j}(v) \leq P_{\M_{j+1}}(v)(1+\eta)$, and thus $P_{\M_j}(v) \leq P_{\M_{n}}(v)(1+\eta)^{n-j}$ as needed.

Let $v \in L(\M_j,\el)$.
The above is trivial when $\el \geq j+1$, since
$A_{\M_j}(v)=A_{\M_{j+1}}(v)$ for such $v$.
Indeed, it suffices to consider the case when $\el=j$, since for $\el<j$,
$\M_j$ and $M$ are identical up to layer $j$.
Hence we can express both $P_{\M_{j+1}}(v)$ and $P_{\M_j}(v)$ as the same convex
combination of acceptance probabilities of vertices in layer $j$.

Let $\el=j$. Fix a vertex $v \in L(\M_j,j)$.
For any $z\in\{0,1, \dots, u-1\}$, let $v_z = \M_{j + 1}(v,z)$ be the \enquote{true} children of $v$ and let 
$\beta_{x_z}=\M_j(v,z)$
breakpoint children. 
Recall that $\M_j$ is obtained from $\M_{j+1}$ by rounding layer $j$, and $\beta_{x_z} \leq v_z < \beta_{x_z + 1}$. 

Note that $P_{\M_{j+1}}(v_z) \geq P_{\M_{j+1}}(\beta_{x_z})/(1+\eta)$ or else $v_z$ could form a new breakpoint, thus $P_{\M_{j+1}}(\beta_{x_z}) \leq (1+\eta) P_{\M_{j+1}}(v_z)$. Also, recall that $\beta_{x_z}$ exists with the same structure in both $\M_{j + 1}$ and $\M_j$. Thus, $P_{\M_{j}}(\beta_{x_z}) = P_{\M_{j+1}}(\beta_{x_z})$, and
\[
P_{\M_j}(v)
= \tfrac1u\left(\sum_{z = 0}^{u-1 }P_{\M_j}(\beta_{x_z})\right)
\le (1+\eta)\frac1u{\sum_{z = 0}^{u-1 }P_{\M_{j+1}}(v_z)}
= (1+\eta)P_{\M_{j+1}}(v).
\]
\end{proof}
We next show that $\M$ can be constructed efficiently.

\begin{claim}[Corresponding to Lemma A.4 of \cite{DBLP:journals/corr/abs-1008-3187}]\label{A.4}
    Each vertex $v_j \in L(\M_j,j+1)$ can be computed in time $O(n(\log u)(\log b + L\log u + \log n )/\eta)$.
\end{claim}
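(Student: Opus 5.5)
We compute the breakpoints $\beta_1,\beta_2,\ldots,\beta_{N+1}$ of layer $j+1$ one at a time, each by a binary search over candidate partial-sum values. Throughout, $\M_{j+1}$ is an interval ROBP, so $P_{\M_{j+1}}(v)$ is nonincreasing in the value $v$; this is exactly the interval property established via Claims \ref{A.2.1} and \ref{A.2.2}. Hence, given $\beta_i$, the set in (A.1) is an up-closed set of values intersected with $\{P>0\}$, and its minimum can be located by bisection. The bisection runs over the range $[\beta_i,\sum_{\el\le j+1} f_\el(u-1)]$, discretized at granularity $2^{-2L'}$. The number of bisection steps is logarithmic in (range)/(granularity), which is $O(\log b + L\log u + \log n)$. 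Here $\log b$ enters because any value exceeding $b$ has acceptance probability $0$, so we may cap the range at $b$. The encoding-length assumption bounds the granularity term by $O(L\log u+\log n)$. The same bisection, with the test $P=0$, finds $\beta_{N+1}$.

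Each bisection step requires evaluating $P_{\M_{j+1}}(v)$ for a single value $v$ in layer $j+1$. Below layer $j+1$, $\M_{j+1}$ is already rounded: layers $j+2,\ldots,n$ consist of breakpoints, at most $N\le n(\log u)/\eta$ per layer, and we store the acceptance probability of every breakpoint. To evaluate $P_{\M_{j+1}}(v)$ we use
\[
P_{\M_{j+1}}(v)=\frac1u\sum_{i}|E(v,\beta'_i)|\,P_{\M_{j+1}}(\beta'_i),
\]
where the $\beta'_i$ are the breakpoints of layer $j+2$. The sizes $|E(v,\beta'_i)|$ are obtained from the endpoints of the edge intervals. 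By Claim \ref{A.2.1} these sets are intervals ordered by label, and each endpoint is the largest $z$ with $v+f_{j+2}(z)<\beta'_{i+1}$, found by bisection over $z\in\{0,\ldots,u-1\}$ in $O(\log u)$ evaluations of $f_{j+2}$, using monotonicity.

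The key accounting step is to avoid paying $\log u$ separately for every neighbor. Since the edge intervals are consecutive, we can sweep the $N$ next-layer breakpoints in order and charge one $O(\log u)$-step search per breakpoint. This gives $O(N\log u)$ work per evaluation of $P$. Combining the counts, each bisection step costs $O(N)$ up to the $\log u$ factor, so one breakpoint costs $O(n(\log u)/\eta)$ times the number of bisection steps, which is the bound $O(n(\log u)(\log b+L\log u+\log n)/\eta)$ claimed.

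I expect the main obstacle to be this bookkeeping of the $\log u$ factors. The plan is to absorb the per-edge endpoint searches either by precomputing, once per layer, the thresholds $f_{j+2}^{-1}$-style cut points for all breakpoints, or by noting that the $O(\log u)$ interval search is dominated by the value-bisection term. Secondary points to verify are that the bisection over values is valid despite not knowing which partial sums are realized (we treat all values of bounded encoding length as states, as the paper stipulates) and that the arithmetic stays on $O(L\log u+\log n)$-bit numbers.
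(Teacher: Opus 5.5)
\textbf{The gap is in your final accounting step, and the paper's own proof drops the same factor.}

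Your plan is the paper's plan. You bisect over $L'$-bit candidate values in $[0,b]$, which takes $O(\log b + L\log u + \log n)$ steps. Each step evaluates $P_{\M_{j+1}}(v)$ as the weighted sum over the at most $1+n(\log u)/\eta$ stored breakpoints of layer $j+2$. The endpoints of each $E(v,\beta'_m)$ are found by binary search over labels. (Bisecting for $\beta_{N+1}$ is a harmless variant; the paper simply sets $\beta_{N+1}=b+2^{-2L'}$ using $f_\el(0)=0$ and Claim~\ref{encoding-gap}.)

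\textbf{Where the count breaks.} By your own count, one evaluation of $P_{\M_{j+1}}(v)$ costs $O(N\log u)=O(n(\log u)^2/\eta)$. Sweeping the breakpoints and ``charging one $O(\log u)$-step search per breakpoint'' is exactly paying $\log u$ per neighbour. Multiplying by the number of value-bisection steps gives $O(n(\log u)^2(\log b+L\log u+\log n)/\eta)$, one factor of $\log u$ more than the claim. The sentence ``each bisection step costs $O(N)$ up to the $\log u$ factor'' simply drops that factor.

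\textbf{Why your two fixes fail.} The right endpoint of $E(v,\beta'_m)$ is $\max\{k: f_{j+2}(k)<\beta'_{m+1}-v\}$. This is a rank query whose threshold moves with the candidate $v$ at every bisection step. So these cut points cannot be precomputed once per layer. Also, the label search is nested inside each value-bisection step. The two costs therefore multiply; neither is dominated by the other. For an arbitrary nondecreasing $f_{j+2}$ accessed only by evaluation, nothing you list locates these $N$ ranks with $O(1)$ evaluations each.

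\textbf{The paper has the same hole.} It invokes binary search over labels for each $w$. It then asserts that $P_{\M_{j+1}}(v)$ takes $O(n(\log u)/\eta)$ time merely because the sum has at most $1+n(\log u)/\eta$ terms. That $O(N)$ figure comes from the linear case of \cite{DBLP:journals/corr/abs-1008-3187}. There each endpoint is a single division, $\lfloor(\beta'_{m+1}-v)/a_{j+2}\rfloor$, i.e.\ $O(1)$ arithmetic operations.

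\textbf{What is actually provable.} To prove the claim as stated, you need an extra assumption: an $O(1)$-operation oracle returning $\max\{k: f_{j+2}(k)<t\}$, i.e.\ a generalized inverse of $f_{j+2}$. Without it, the honest statement carries one more factor of $\log u$. That is still polynomial, but it propagates and turns the $(\log u)^3$ in Theorem~\ref{single-conv-ROBP} into $(\log u)^4$.
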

\begin{proof}
The proof is by induction: we maintain the invariant that for every $i$, we know the vertices
$\beta_z$ of $L(\M_j,j+1)$ as well as their acceptance probabilities $P_{\M_j}(\cdot)$, and the encoding length of each $\beta_z$ is at most $L' = L 
\lceil\log u\rceil + \log n$.

Note that if $j = n-1$, we round layer $n$ by creating breakpoints $\beta_1 = 0$ and $\beta_2 = b+1/2^{2L'}$. Clearly for any $v \in I_1 = [\beta_1, \beta_2 -1/2^{2L'}] = [0,  b+1/2^{2L'} - 1/2^{2L'}] = [0,b]$, it follows that $v$ accepts with probability $1$. Otherwise, $v \geq b + 1/2^{2L'}$, and thus accepts with probability $0$. 
For all other layers we choose $\beta_1 = 0$ to begin with.

Suppose we have the above invariant for $l>j$ and we have computed $\beta_1,\dots,\beta_i \in L(\M_j,j+1)$. In order to find the next breakpoint, we must have access to probabilities of vertices in layer $j + 2$.

We show that for a given $v \in L(\M_{j+1},j+1)$, $P_{\M_{j+1}}(v)$ can be computed in time
$O(n(\log u)/\eta)$.
Let $L(\M_{j+1},j+2)=\{\beta_1< \beta_2< \cdots < \beta_{N + 1}\}$ and
$E(v,\beta_i)=\{k \in \{0, 1, \dots, u-1\}  : \beta_i \leq v+f_{j+2}(k) \leq \beta_{i+1}\}$ where $\beta_{N + 2}$ is $v_W$ and thus
\[
P_{\M_{j+1}}(v)
=
\sum_{w\in L(\M_{j+1},j+2)}
\frac{|E(v,w)|}{u}\,
P_{\M_{j+1}}(w).
\]
Note that the minimum and maximum edge labels that describe $E(v,w)$ can be determined for each $w$ using binary search through possible edge labels.
Thus, we can compute $P_{\M_{j+1}}(v)$ in time $O(n(\log u)/\eta)$ as
$|L(\M_{j+1},j+2)| \le 1 + n(\log u)/\eta$.
To find $\beta_{i+1}$,
we can do binary search on values in $[0,b]$, computing their associated $P_{\M_{j+1}}(v)$ each time. This range is acceptable because
we require $0 < P_{\M_{j + 1}}(v)$, so we need only search through vertices in the range $[0,b]$ since any value larger than $b$ can never lead to an accepting suffix due to nonnegativity of each $f_j$. We will only consider breakpoints with bit size $L'$ which requires
$O(\log b + L') = O(\log b + L \log u + \log n)$ iterations. 
Finally notice that $P_{\M_{j + 1}}(b + 1/2^{2L'}) = 0$ and $P_{\M_{j + 1}}(b ) > 0$ because the suffix string of $0$s accepts: $b + \sum_{j = i + 2}^n f_j(0) = b + 0 \leq b$.
Thus we can set $\beta_{N + 1} = b + 1/2^{2L'}$.
Once we have computed $L(\M_j,j+1)$ we store these vertices and their probabilities of acceptance.
\end{proof}
It is important to note that computing breakpoints to $L'$ bit accuracy is sufficient. Observe that, in the true ROBP $M$, we only evaluate $f_{j}(x_j)$ for integers $x_j\in\{0,...,u-1\}$, which have encoding length at most $\lceil \log u \rceil$. Since $f_j$ is a rational function, it follows that the encoding length of $f_{j}(x_j)$ is at most $L\lceil \log u \rceil$. This means that each partial sum value $\sum_{j = 1}^nf_{j}(x_j)$ has encoding length at most $\log n + L\lceil \log u \rceil = L'$. Thus, all the \enquote{true} vertices of the ROBP are considered in out binary search.  

In addition, notice that we construct the last breakpoint, with $0$ probability, as $b + 1/2^{2L'}$. We use the following claim to verify that there is no true child strictly between $b$ and $b + 1/2^{2L'}$. 
\begin{claim}\label{encoding-gap}
    Given a value $v>0$ with encoding length at most $L'$, there is no positive rational number with encoding length at most $L'$ that lies strictly between $v$ and $v + 1/2^{2L'}$.
\end{claim}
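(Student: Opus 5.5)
The plan is to use the standard separation property of rationals with bounded denominators. Two distinct rationals whose denominators are both less than $2^{L'}$ differ by at least the reciprocal of the product of their denominators. That product is strictly less than $2^{2L'}$, so the gap between them is strictly larger than $1/2^{2L'}$.

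First I will turn the encoding-length bound into a denominator bound. Write $v = p/q$ in lowest terms with $p,q \geq 1$ (since $v>0$). Suppose $w = p'/q'$ is a positive rational in lowest terms with encoding length at most $L'$. By the definition of encoding length, $\lceil\log(|p|+1)\rceil + \lceil\log(q+1)\rceil \leq L'$. Because $p \geq 1$, the numerator term is at least $1$, so $\lceil \log(q+1)\rceil \leq L'-1$. This gives $q+1 \leq 2^{L'-1}$, and in particular $q < 2^{L'}$. The same argument gives $q' < 2^{L'}$. The weaker bound $q,q' < 2^{L'}$ is all that is needed.

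Next I will assume for contradiction that $v < w < v + 1/2^{2L'}$. Then $w - v = (p'q - pq')/(qq')$. The numerator $p'q - pq'$ is an integer, and it is positive because $w > v$, so it is at least $1$. Hence $w - v \geq 1/(qq')$. Since $qq' < 2^{L'}\cdot 2^{L'} = 2^{2L'}$, we get $w - v > 1/2^{2L'}$. This contradicts $w < v + 1/2^{2L'}$, which proves the claim.

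I do not expect a real obstacle here. The only point needing care is converting the paper's encoding-length convention (ceilings of $\log(|p|+1)$ and $\log(q+1)$, base $2$) into the strict bound $q < 2^{L'}$, so that $qq' < 2^{2L'}$ holds strictly and the final inequality is strict as required. If one wants the claim to cover non-reduced representations as well, note that reducing a fraction only shrinks the numerator and denominator, so the bound on $q$ still holds.
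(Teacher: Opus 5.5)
Your proof is correct and takes essentially the same approach as the paper: bound both denominators by $2^{L'}$ using the encoding length, then note that the nonzero integer numerator $p'q - pq'$ forces a gap of at least $1/(qq')$. Your strict bound $q < 2^{L'}$ is a harmless refinement. The paper's non-strict bound $q, t \le 2^{L'}$ gives $w - v \ge 1/2^{2L'}$, which already contradicts $w - v < 1/2^{2L'}$.
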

\begin{proof}
Take the fraction representing $v$: $p/q$ such that $p,q \in \Z_{+}$. Consider another $L'$-bit rational number $y = s/t$ such that $y \neq v$. Note that $L'\geq \lceil \log p + 1\rceil + \lceil \log q + 1 \rceil \geq \log q$ and  $L'\geq \lceil \log s + 1\rceil + \lceil \log t + 1 \rceil \geq \log t$. Thus, $q\leq 2^{L'}$ and $t \leq 2^{L'}$. Thus, 
\[\left| \frac{p}{q} - \frac{s}t\right| = \left| \frac{pt - sq}{qt} \right| \geq \frac{1}{qt} \geq \frac{1}{2^{2L'}} \]
where the first inequality comes from the fact that $y \neq v$, and thus the difference is $>0$, but $p,t,s,q$ are integers. Thus, it cannot be true that two $L'$ bit rational inputs have distance less than $1/2^{2L'}$.
\end{proof}
This tells us that there are no true children with $0$ probability that we accidentally assigning to a breakpoint with nonzero probability.

Thus we can construct $\M = \M_0$ from $M$ in time $O(n^3(\log u)^2(\log b + L\log u + \log n)/\eta^2)$ where $\log b \leq L$ as it is an input.
We can now finish the proof of our counting result:

We set $\eta=\delta/2n$ and use the above arguments to construct the branching
program $\M$ and compute the value of $P_{\M}(s)$ where $s$ is the start state.
We now apply Lemma A.3 to conclude that
\[
P_M(s) \le P_{\M}(s) \le P_M(s)(1+\eta)^n \le (1+\delta)P_M(s)
\]
where the last inequality holds for small enough $\delta$.
Finally, note that the number of integer solutions is precisely
$u^nP_M(s)$.
Hence we output $u^nP_{M^0}(s)$.
 \end{proof}

\subsection{Dyer's Rounding for Multiple Sets}\label{sec:dyer}
\begin{proof}[Proof of theorem \ref{Dyer}]
Let $U_n = \{0,\dots,u_1\}\times \dots \times \{0,\dots,u_n\}$.
Consider set $Z = \cap_{i \in [k]} Z_i$ where $Z_i =\{ x \in U_n : \sum_{j = 1}^nf_{ij}(x_j) \leq b_i \}$ and each $f_{ij} : \R \rightarrow \R_{\geq 0}$ is nondecreasing, convex, and rational. 
We will assume without loss of generality that $f_{ij}(0) = 0$ for all $j \in [n]$, $i \in [k]$, which follows from observation \ref{obs}.

We expand on the results in \cite{10.1145/780542.780643} to find an approximate counting of $|Z|$. We create new sets $S_i = \{ x \in U_n : \sum_{j = 1}^nh_{ij}(x_j) \leq c_i \}$ where each $c_i$ is polynomially bounded, and the size of $S = \cap_{i \in [k]}S_i$ is bounded between the size of $Z$, and some multiple of the size of $Z$.

Specifically, we construct rounded set \[S = \bigcap_{i =1}^k S_i\text{, where }S_i = \{ x \in U_n : \sum_{j = 1} ^n h_{ij}( x_j) \leq 2n^2\}\] such that the rounded functions are \[h_{ij}(x_j) = \lfloor \frac{2n^2f_{ij}(x_j)}{b_i} \rfloor \quad \text{ for } 0 \leq x_j \leq u_j, j \in [n], i \in [k]\]
Observe that for any $x \in Z_i$, we have that 
\[\sum_{j = 1} ^n h_{ij}( x_j) = \sum_{j = 1} ^n \lfloor \frac{2n^2f_{ij}(x_j)}{b_i}\rfloor \leq \sum_{j = 1} ^n  \frac{2n^2f_{ij}(x_j)}{b_i} = \frac{2n^2}{b_i} \sum_{j = 1} ^n  f_{ij}(x_j) \leq \frac{2n^2}{b_i} b_i = 2n^2, \] and thus $x \in S_i$, and thus $Z_i \seq S_i$.

Now let $C_i = \{x : f_{ij}(x_j) \leq \frac{b_i}{n}\text{ } 
\forall j \in [n]\}$. Note that for all $x \in C_i$, it follows that $\sum_{j = 1}^n f_{ij} (x_j) \leq \sum_{j = 1}^n \frac{b_i}{n} = n\frac{b_i}{n} = b_i$, and thus $C_i \seq Z_i \seq S_i$

For $x \in S \m Z$, let $I(x) = \{ i : x \in Z_i \m S_i \}$. For each $i \in I(x)$, there must exist some $p_i^x \in [n]$ such that $f_{i{p_i^x}}(x_{p_i^x}) > \frac{b_i}n$, else $x \in C_i \seq Z_i$. Note $h_{i{p_i^x}}(x_{p_i^x}) = \lfloor \frac{2n^2f_{i{p_i^x}}(x_{p_i^x})}{b_i}\rfloor \geq \lfloor \frac{2n^2b_i /n}{b_i}\rfloor = \lfloor \frac{2n^2}{n}\rfloor = 2n$.

Construct $g(x): S \rightarrow U_n$ as follows: $g(x) = \begin{cases}
    x &\text{ if }x \in S\\
    y &\text{ where $y_j = x_j$ for $j \neq {p_i^x}$, and }y_{p_i^x} = \lfloor\frac{x_{p_i^x}}2\rfloor
\end{cases}$
By convex and nondecreasing properties of each $f$ it follows that $f_{ij}(\lfloor\frac{x_{j}}2\rfloor) \leq f_{ij}(\frac{x_j}2) \leq \frac12 f_{ij}(x_j)$.
Thus, for all $x \not\in Z$, $y = g(x)$, we have 
\begin{align*}
    \sum_{j = 1}^n f_{ij}(y_j) &= \frac{b_i}{2n^2}\sum_{j = 1}^n \frac{2n^2f_{ij}(y_j)}{b_i} \\
    &= \frac{b_i}{2n^2}\bigg(\sum_{j \neq {p_i^x}}( \frac{2n^2f_{ij}(x_j)}{b_i}) + \frac{2n^2f_{i{p_i^x}}(\lfloor \frac{x_{p_i^x}}2 \rfloor )}{b_i}\bigg) \\
    &\leq \frac{b_i}{2n^2}\bigg(\sum_{j \neq {p_i^x}} (h_{ij}(x_j) + 1) + \frac12 \cdot \frac{2n^2f_{i{p_i^x}} (x_{p_i^x})  }{b_i}\bigg) &\text{By convexity of $f_{ij}$}\\
    &\leq \frac{b_i}{2n^2}\bigg(\sum_{j \neq {p_i^x}} (h_{ij}(x_j)) + n-1 + \frac12 (h_{i{p_i^x}}(x_{p_i^x}) + 1)\bigg) \\
    &\leq \frac{b_i}{2n^2}\bigg(\sum_{j \neq {p_i^x}} (h_{ij}(x_j)) + n-1 + h_{i{p_i^x}}(x_{p_i^x}) - n + \frac12 \bigg) &\text{By bound on } h_{i{p_i^x}}(x_{p_i^x})\\
    &\leq  \frac{b_i}{2n^2}\bigg( 2n^2 - \frac12 \bigg) < b_i
\end{align*}
Thus, $g(S_i) = Z_i$ for every $i$, and for each $y \in Z_i$, $|g^{-1}(y) \cap S_i| \leq 2n+1$ since $y \in g^{-1}(y)$ and for any $1 \leq {p_i^x} \leq n$, there are at most two possible values of $x_{p_i^x}$ ($2y_{p_i^x}$ and $2y_{p_i^x} + 1$).

We may have taken $y_{p_i^x} = \lfloor \frac{x_{p_i^x}}2 \rfloor$ for multiple $i$ if each of these constraints are violated by $x$. In this case, the inverse mapping changes some set of coordinates $P$ with $0 \leq |P| \leq k$, and thus 
\[ |g\inv (y)| \leq 1 + 2n + 2 \binom{n}{2} + \cdots + 2 \binom{n}{k} \leq 2n^k \]
and therefore we have $|S| \leq 2n^k|Z|$. Also note that since each $Z_i \seq S_i$, it follows that $Z \seq S$, and \[ |Z| \leq |S| \leq 2n^k |Z| \]
\end{proof}

\subsection{Read Once Branching Program for multiple read-once convex constraint}\label{sec:multi-robp}\label{ROBP}
In this section we prove theorem \ref{multi-ROBP}, which is an extension of work in \cite{DBLP:journals/corr/abs-1008-3187}, which gives us a proof of statement when $u = 2$. 

We will develop a deterministic approximate counting algorithm for set $Z = \cap_{i = 1}^kZ_i$, where $Z_i= \{x\in \{0,1,\dots, u-1\} : \sum_{j = 1}^n f_{ij}(x_j) \leq b_i\}$. Note that throughout the proof we will assume without loss of generality that $f_j(0) = 0$ for all $j \in [n]$, which follows from observation \ref{obs}.

We will again rely on Interval-ROBPs (\ref{interval-robp}). We introduce one more key concept, developed by \cite{10.1145/1132516.1132613}:
\begin{definition}[small-space sources, Kamp et al.]
    A width $w$ small-space source is described by a $(w,u,n)$-branching program $D$ with an additional probability distribution $p_v$ on the outgoing edges associated with vertices $v \in D$. Samples from the source are generated by taking a random walk on $D$ according to the $p_v$'s and outputting the labels of the edges traversed.
\end{definition}
We will often abuse notation and denote both the distribution generated by a small-space source and the source itself by $D$. Also, we will assume that the distribution $D$ is given to us explicitly as a small-space source. 
The following claim is directly from \cite{DBLP:journals/corr/abs-1008-3187}:

\begin{claim}[Equivalent to Claim 2.4 in \cite{DBLP:journals/corr/abs-1008-3187}]\label{2.4}
    Given a $(W,n)$-ROBP $M$, the uniform distribution over $M$’s accepting inputs, $\{x : M(x) = 1\}$ is a width $W$ small-space source.
\end{claim}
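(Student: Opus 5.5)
We need to prove Claim \ref{2.4}: the uniform distribution over the accepting inputs of a $(W,n)$-ROBP $M$ is a width-$W$ small-space source. The plan is to use $M$ itself as the underlying branching program. We put transition probabilities on its edges that are proportional to the number of accepting completions, and then check that the resulting random walk outputs each accepting string with the same probability.

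First, restrict attention to vertices $v$ with $A_M(v)\neq\emptyset$. Vertices with no accepting suffix can be kept but will receive probability zero, so the width stays at most $W$. If the start state has $|A_M(s)|=0$ the distribution is undefined, and we may assume this case away.

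For a vertex $v\in L(M,\el-1)$ with $|A_M(v)|>0$, and an edge label $k\in\{0,\dots,u_\el\}$, define
\[
p_v(k)=\frac{|A_M(M(v,k))|}{|A_M(v)|}.
\]
This is a probability distribution on the outgoing edges of $v$. The reason is the identity
\[
|A_M(v)|=\sum_{k=0}^{u_\el}|A_M(M(v,k))|,
\]
which holds because every accepting suffix from $v$ is obtained by choosing a first label $k$ and then an accepting suffix from the child $M(v,k)$, and different first labels give different suffixes. For vertices with $A_M(v)=\emptyset$, which the walk never reaches, any distribution can be used.

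Now fix an accepting input $z$, and let $v_0=s, v_1,\dots,v_n$ be the states visited along $z$. Every $v_\el$ has a nonempty accepting set, since the remainder of $z$ is an accepting suffix from it. The probability that the walk outputs $z$ is
\[
\prod_{\el=1}^n p_{v_{\el-1}}(z_\el)=\prod_{\el=1}^n\frac{|A_M(v_\el)|}{|A_M(v_{\el-1})|}=\frac{|A_M(v_n)|}{|A_M(s)|}.
\]
The product telescopes. At the last layer, $v_n$ is accepting and the suffix is empty, so $|A_M(v_n)|=1$. The probability is therefore $1/|A_M(s)|$, which is the uniform probability over accepting inputs. A rejecting input either passes through an edge of probability zero or ends at a rejecting state, where the base case gives $|A_M(v_n)|=0$; either way it gets probability zero.

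The only step needing care is the base case and the convention for dead vertices: define $|A_M(v)|$ as $1$ or $0$ at the final layer according to whether $v$ accepts. The rest is the telescoping computation. Explicitness also follows: the counts $|A_M(v)|$ can be computed backward layer by layer using the recursion above, and for interval ROBPs each edge set is an interval, so this takes polynomial time in $W$, $n$, and the bit sizes.
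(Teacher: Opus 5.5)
Your proof is correct and is the standard argument behind this claim, which the paper imports from the cited work without giving a proof. You set each edge probability to $p_v(k)=|A_M(M(v,k))|/|A_M(v)|$ and telescope; this is the same recipe the paper uses, computing the counts backward layer by layer, when it builds the source $D$ from the rounded sets $S_i$. Keep your per-label normalization: the paper's construction also divides by $|E(v,D(v,d))|$, and since $|A_D(v)|=\sum_w |E(v,w)|\,|A_D(w)|$, its probabilities sum to one only when every surviving edge interval out of $v$ has length one.
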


Let $D$ denote a small space generator of width at most $S$. We use the following notation:
\begin{itemize}
    \item For $A \subseteq U_n$ we use $\bm{D(A)}$ to denote the measure of $A$ under $D$.
    \item $\W^1,\ldots,\W^n$ are the sets of vertices in $D$, with $\W^{\el}$ being the $\el^{\text{th}}$ layer of $D$.
    \item For a vertex $w \in \W^{\el}$, we let $\bm{D^w}$ be the distribution over $\{0,\dots, u_{{\el}+1}\} \times \dots \times\{0,\dots,u_n\}$ induced by taking a random walk in $D$ starting from $w$.
    \item Given $v \in L(M,{\el})$ and $w \in \W^{\el}$, we let $\bm{P_{M,w}(v)}$ denote the probability of accepting if we start from $v$ and make transitions in $M$ according to a suffix sampled from distribution $D^w$.
\end{itemize}  

In this algorithm, our small space source will be constructed from the small width rounded knapsack instances built in \ref{Dyer}. 
Explicitly we will build $D$ by creating layers $0,1,\dots,n$. There will be a single start vertex $w_0$, and each vertex in layer $\el-1$ will have edges labeled $0,1,\dots, u-1$ to vertices in layer $\el $. Given prefix $x \in \{0,1,\dots,u-1\}^{\el}$, let $w_x$ be the vertex in layer $\el$ reached through traversing edge labels according $x$, starting from $w_0$.
For every such $x$, we have $w_x = (v_1, v_2, \dots v_k)$, which is the partial sum tuple for each constraint, i.e. $v_i = \sum_{j = 1}^{\el} h_{ij}(x_j)$. We can see that for $w_x = (v_1, v_2, \dots v_k)$ in layer $\el-1$, the $d^{th}$ child of $w_x$ has partial sum tuple $(v_1 + h_{1\el}(d), v_2+ h_{2\el}(d), \dots v_k+ h_{k\el}(d))$. 
Within each layer, we contract all vertices that have identical partial sum tuples.
Vertex $w = (v_1, \dots, v_k)$ in layer $n$ will be labeled accepting if $v_i \leq 2n^2$ for all $i \in [k]$, and will be labeled rejecting otherwise.

In each layer, we will delete all partial sum tuples $(v_1,\dots,v_k)$ where $v_i > 2n^2$ for some $i \in [k]$. This will not affect the probability distribution on the edges entering or leaving this vertex, as the nonnegativity of our functions guarantee it will lead to a rejecting vertex.

Recall that each function $h_{ij}(x_j)$ takes integer values, and each constraint has polynomial capacity $2n^2$. This implies that each partial sum can have value at most $2n^3$: there are at most $n$ items each with function value at most $2n^2$. If the function value exceeds this amount, there exists $d \in \{0,1,\dots,u\}$ such that $h_{i\el}(d) > 2n^2$, and thus we can simply delete all edges with labels $d, \dots, u_1$, as they surely lead to a failing vertex. Thus, the width of this ROBP is at most $(2n^3)^k$.

We can also bound the number of edges by this value through creating intervals: consider the edge set $E(v,w) = \{d \in \{0,1,\dots, u-1\} : D(v,d) = w)\}$. This is an interval because if any $\ell< m < r$ satisfies $h_{i\el}(\ell) = h_{i\el}(r)$, $\forall i$, then $h_{i\el}(\ell) = \lfloor \frac{2n^2a_{i\el}\ell}{b_i} \rfloor \leq \lfloor \frac{2n^2a_{i\el}m}{b_i} \rfloor \leq \lfloor \frac{2n^2a_{i\el}r}{b_i} \rfloor = h_{i\el}(r)$, and so $h_{i\el}(\ell) = h_{i\el}(m) = h_{i\el}(r)$. Thus to describe $E(v,w)$, we need only store $\ell_{vw}, r_{vw}$.

Lastly, we will compute probabilities on the edge intervals. Recall that vertices in layer $n$ are accepting (labeled $1$) if every element in the tuple is at most $2n^2$, and is rejecting (labeled $0$) otherwise. We will remove all rejecting vertices and edges to them, since they will have probability $0$. Any remaining edge from $v$ in layer $n-1$ to $w$ in layer $n$ represents $|E(v,w)|$ accepting children of $v$. The total number of accepting children of $v$, which we refer to as $A_D(v)$, will be the number of edges out of $v$, since all non-solution children have been deleted. To find this, we need only compute $\sum_{w \in L(D,n)}|E(v,w)| = \sum_{w \in L(D,n)} (l_{vw} - r_{vw} + 1)$. For any edge labels $d\in \{0,1,\dots,u-1\}$ (not deleted) we will have probability $p_v(d) = {1}/{|A_D(v)|}$. This clearly gives us the distribution over accepting solutions below each $v$ in layer $n-1$.

We will calculate values $p_v$ and $|A_D(v)|$ for vertices in layers above iteratively. We assume we have computed these values for layers $n, n-1, \dots, \el+1$. For any vertex $v$ in layer $\el$, we repeat a similar process. Any child $w \in L(D,\el + 1)$ of $v\in L(D, \el)$ with no children (i.e. all its children and outgoing edges have been deleted) clearly has $0$ accepting suffixes, or $|A_D(w)| = 0$, and so we delete $w$, along with $E(v,w)$. For the remaining children, we can compute $|A_D(v)| = \sum_{w \in L(D,\el + 1)}|E(v,w)||A_D(w)| = \sum_{w \in L(D,\el + 1)} (\ell_{vw} - r_{vw} + 1)|A_D(w)|$ as the number of accepting suffixes, and we compute $p_v(d) = {|A_D(D(v,d))|}/{|E(v, D(v,d))||A_D(v)|}$ for each $d \in \{0,1,\dots,u-1\}$ which provides the distribution over accepting solutions below $v$. Note that the probability of going to vertex $w$ from $v$ is exactly ${|A_D(D(v,d))|}/{|A_D(v)|}$ as expected, and we divide by a factor of $|E(v, D(v,d))|$ so that each individual label in $E(v, D(v,d))$ only counts its own probability, not the probability of taking the interval edge.

We can generate samples from this source by taking a random walk on $D$ according to $p_v$ and outputting the labels of the edges traversed.

Now we build the rounded ROBPs for each constraint using this source.
Recall that section \ref{sec:sinlge-robp} tells us that the ROBP $M^i$ exactly computing the indicator function for each $Z_i$ is an interval ROBP. Recall that each layer $\el$ of $M^i$ has at most $u^n$ states. We represent state $v$ by partial sum $\sum_{j = 1}^{\el}f_{ij}(x_j)$ where $x$ is the string of edge labels used to reach $v$ from start state $s$. 
We will use the following theorems to build an approximate ROBP $\M^i$ for each $Z_i$ individually.
\begin{theorem} \label{mainthm}
Given a $(W,n)$-interval ROBP $M^i$ for set $Z_i(f_i,b_i,u)$, $\delta > 0$, and a small-space distribution $D$ over $\{0,1,\dots,u-1\}^n$ of width at most $s = n^{O(k)}$, there exists an $(O(ns\log u/\delta),n)$-width interval ROBP $\M^i$ such that for all $z$, $M^i(z) \leq \M^i(z)$ and \[\Pr_{x \from D}[M^i(z) = 1] \leq \Pr_{x \from D}[\M^i(z) = 1] \leq (1 + \delta )\Pr_{x \from D}[M^i(z) = 1]\] Moreover, given an implicit description of $M^i$ and an explicit description of $D$, $\M^i$ can be constructed in deterministic time $O(n^{O(k)}(\log u)^2\log b/\delta^2)$
\end{theorem}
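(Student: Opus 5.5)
I would prove Theorem \ref{mainthm} by repeating the layer-by-layer rounding of Section \ref{sec:sinlge-robp}, with the uniform measure replaced by the small-space source $D$. Acceptance probabilities are now indexed by pairs: a state $v\in L(M^i,\el)$ and a source vertex $w\in\W^{\el}$, through $P_{M,w}(v)$. As before, I build interval ROBPs $\M_n=M^i,\M_{n-1},\dots,\M_0=\M^i$, where $\M_j$ is obtained from $\M_{j+1}$ by keeping only a set of breakpoints in layer $j+1$. The states are still ordered by partial sum $v=\sum_{t\le \el} f_{it}(x_t)$, and smaller partial sums are better, as in Section \ref{sec:sinlge-robp}.

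The breakpoints must now control the acceptance probability for every $w\in\W^{j+1}$ simultaneously. So I take the union, over all $w$, of the breakpoint sets defined by rule (A.1) with $P_{\M_{j+1}}$ replaced by $P_{\M_{j+1},w}$. The number of distinct nonzero probability levels along a chain needs care, because under $D$ a probability can be much smaller than $u^{-n}$. I will bound it by noting that each probability is a ratio of counts: $D$ is uniform over the accepting inputs of the width-$(2n^3)^k$ rounded program from Lemma \ref{Dyer}, by Claim \ref{2.4}. The minimum nonzero value of $P_{M,w}(v)$ is therefore at least $u^{-n}$ times a factor polynomial in the source. This gives $O(n\log u/\eta)$ breakpoints per $w$, hence width $O(sn\log u/\eta)$, and choosing $\eta=\delta/2n$ gives the stated width.

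The correctness steps copy Claims \ref{A.2.1}--\ref{A.3.2}. Monotonicity of edges and suffixes is unchanged, since it depends only on the partial sums and not on the distribution. Redirecting an edge to the breakpoint just below its true child only enlarges the set of accepting suffixes. This gives $M^i(z)\le\M^i(z)$ pointwise, and hence the lower bound under $D$.

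For the upper bound, fix $v$ in layer $j$ and $w\in\W^j$, and condition on the first edge label $d$ drawn from $p_w$, which moves the source to $w'=D(w,d)$. The true child $v_d$ and its breakpoint $\beta$ are consecutive in the breakpoint set computed for $w'$, so $P_{\M_{j+1},w'}(\beta)\le(1+\eta)P_{\M_{j+1},w'}(v_d)$. Averaging over $d$ gives $P_{\M_j,w}(v)\le(1+\eta)P_{\M_{j+1},w}(v)$. Telescoping over the $n$ layers gives the factor $(1+\eta)^n\le 1+\delta$ at the start pair $(s,w_0)$. The key point is that the source state after reading $d$ is determined by $d$, so the per-$w'$ guarantee applies.

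For the running time, I compute each $P_{\M_{j+1},w}(v)$ by the recursion over the interval edge sets $E(v,\beta)$, using binary search over edge labels, and over the transitions of $D$, which are also intervals. Each breakpoint is found by binary search on partial-sum values of encoding length $L'$, with Claim \ref{encoding-gap} guaranteeing that no true state is missed. Multiplying the number of pairs $(v,w)$ by the $O(\log b+L')$ search steps gives the $n^{O(k)}(\log u)^2\log b/\delta^2$ bound.

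I expect the main obstacle to be the breakpoint counting. For a fixed $w$, (A.1) still yields a chain of geometrically decreasing probabilities. Bounding its length requires a lower bound on the smallest positive $P_{\M_{j+1},w}(\cdot)$ under the nonuniform $D$. I will try to get this from the uniform-over-accepting structure of $D$: every positive probability is at least $1/|A_D(w)|\ge u^{-n}$.
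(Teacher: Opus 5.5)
Your proposal follows the paper's proof essentially step for step. You build per-$w$ breakpoint chains from $P_{\M_{j+1},w}$ and take their union over $w\in\W^{j+1}$ as the new layer. You redirect edges upward, which preserves the interval order, so accepting sets only grow and $M^i\le\M^i$ pointwise. You get the one-step bound by conditioning on the first label $d$, which moves the source deterministically to $w'=D(w,d)$, and then telescope. Your justification of $P_{\M_{j+1},w}(v)\ge 1/|A_D(w)|\ge u^{-n}$, from the fact that $D^w$ is uniform over accepting suffixes of the rounded program, is more careful than the paper. The paper just asserts the $u^{-n}$ bound, which does not hold for an arbitrary small-space source.

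Two steps need fixing.

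\emph{Width arithmetic.} With $\eta=\delta/2n$, your count gives $O(sn\log u/\eta)=O(sn^2\log u/\delta)$, not the stated $O(ns\log u/\delta)$. The paper has the same factor-$n$ mismatch: its own count is $|B^{\ell}|=O(n^2s\log u/\eta)$. So this argument proves one of two things:
\begin{itemize}
\item width $O(ns\log u/\delta)$ with error factor $(1+\delta)^n$, or
\item error factor $1+\delta$ with width $O(n^2s\log u/\delta)$.
\end{itemize}
The second is harmless downstream, since the extra factor is absorbed into $n^{O(k^2)}$. Similarly, your $O(\log b+L')$ search steps give $(\log u)^2(\log b+L\log u+\log n)$ rather than $(\log u)^2\log b$, which again matches what the paper actually derives.

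\emph{The one-step bound.} The redirect target $\beta$ is the element of the union $B^{j+1}$ with the largest partial sum not exceeding $v_d$. It need not lie in $B^{j+1}(w')$, so $\beta$ and $v_d$ are not ``consecutive'' in $w'$'s chain. You need a sandwich: let $\beta'$ be the element of $B^{j+1}(w')$ with the largest partial sum not exceeding $v_d$. Since $B^{j+1}(w')\subseteq B^{j+1}$, you have $\beta'\le\beta\le v_d$. Monotonicity of $P_{\M_{j+1},w'}$ then gives $P_{\M_{j+1},w'}(\beta)\le P_{\M_{j+1},w'}(\beta')\le(1+\eta)P_{\M_{j+1},w'}(v_d)$. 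This is exactly what the paper does with its $\beta_3,\beta_4$.
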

\begin{proof}
 We will construct a new small width ROBP $\M^i$ by iteratively \enquote{rounding} each layer of $M^i$, starting from the last one (layer $n$).   
As we did in section \ref{sec:sinlge-robp}, we start from the exact branching program $M^i$ and construct a sequence of programs $\M^i_n = M^i, \ldots, \M^i_0 = \M^i$, where $\M^i_j$ is obtained from $\M^i_{j+1}$ by rounding layer $(j+1)$. Again, we assume that $L(M,\el)$ has \emph{all} values in $[0,\sum_{j = 1}^{\el}f_j(u-1)]$ that have encoding size at most $L' = L + \log u + \log n$.

    We do the rounding in such a way that the acceptance probabilities are well approximated under each of the possible distributions on suffixes $D^w$. The program $\M^i$ will have polynomial width.
    
    We construct each new ROBP $\M^i_{j}$ as follows:\\
    Consider the ordered partial sums in $L(\M^i_{j + 1},j + 1)$, which is the set of vertices we wish to round in this iteration. Fix a vertex $w \in \W^{j + 1 }$
    . We wish to select a subset of $L(\M^i_{j + 1},j + 1)$ to remain in $\M^i_{j}$.
    We will do so by contracting vertices with \enquote{similar} acceptance probabilities. We create groups of \enquote{similar} vertices by finding breakpoints where acceptance probabilities change by a factor of $1/{(1 + \delta)}$. Vertices between breakpoints will be contracted to the neighboring breakpoint with higher acceptance probability. However, unlike before, we will calculate probability by sampling over accepting suffixes for each $w \in \W^{j + 1}$.
    
    Formally, we define a set $B^{j + 1}(w) = \{\beta^w_{1}, \dots,\beta^w_{N_w +1} \} \seq L(\M^i_{j + 1},j + 1)$ of breakpoints for $w$ as follows. We start with $\beta^w_{1} =  0$, the vertex with highest acceptance probability (and lowest partial sum), and given $\beta^w_{\el}$, define $\beta^w_{\el+1}$ by 
    \[\beta^w_{\el+1} = \max \big\{ v : v \prec \beta^w_{\el} \text{ and } 0 < P_{\M^i_{j},w}(v) <  P_{\M^i_{j},w}(\beta^w_{\el}) / (1 + \delta)\big\},\]
    which is the smallest partial sum whose acceptance probability decreases by a factor of at least $1/{(1 + \delta)}$ from the previous breakpoint. Now consider $\beta^w_{N_w}$, the last breakpoint assigned. Let $\beta^w_{N_w + 1} = \min \{v \in L(\M^i_{j + 1}, j + 1) : P_{\M^i_{j + 1},w}(v) = 0 \}$. Note that $\beta^w_1 \succ \beta^w_2 \succ \dots \succ \beta^w_{N_w} \succ \beta^w_{N_w + 1} $ as the probabilities decrease.
    We set $L(\M^i_{j}, j + 1 ) = \{\beta_1 , \dots , \beta_{N}\} = B^{j + 1}:=  \cup_{w \in \W^{j + 1}}B^{j + 1}(w)$ which is the union of breakpoints over all $w$. 
    Now we have intervals $I_1 = [\beta_1,\beta_2 -1/2^{2L'}], I_2 = [\beta_2,  \beta_3 -1/2^{2L'}],\dots, I_{N} = 
    [\beta_N , v_W]$.
    
The vertices in all other layers stay the same as in $\M^i_{j + 1}$, as do all the edges except those from layer $j$ to $j + 1$. We round these edges \emph{upward} as follows: for $v \in L(\M^i_{j}, j)$ and suffix $z \in \{0,1,\dots,u_{j + 1}\}$, we redirect the edge labeled $z$ to the breakpoint $\beta_m \in L(\M^i_{j}, j + 1)$ such that $\beta_{m+1} \prec v_z \preceq \beta_m$ where $v_z = v + f_{i(j + 1)}(z)$ is the true child of $v$. Again, we will represent edges $E(v,v')$ between vertices $v, v'$ only by the extremes $\ell_{vv'}, r_{vv'}$ such that the only suffixes $z \in \{0,1, \dots, u_{j}\}$ that satisfy $\M^i_{j + 1}(v,z) = v'$ are $\ell_{vv'} \leq z \leq r_{vv'}$. 

We will not explicitly build these ROBPs, we will just use them to assist in our proofs, which use induction over the ROBPs $\M^i_{n}, \M^i_{n-1},\dots, \M^i_0$, and thus prove the claim for $\M^i_0 = \M^i$ as needed. 
    \\
    We prove that each $\M^i_j$ remains an interval ROBP through the following claims:
\end{proof}
\begin{claim}
    For $v \in L(\M^i_j,\el)$ and $0 \leq \ell < r \leq u_{\el + 1}$, $\M^{i}_j(v,\ell) \succeq \M^i_j(v,r)$.
\end{claim}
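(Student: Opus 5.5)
We argue exactly as in Claim~\ref{A.2.1} for the single-constraint program, adapting only to the fact that the rounded layer now consists of the union of breakpoints $B^{j+1}=\cup_{w\in\W^{j+1}}B^{j+1}(w)$. The ordering $\succ$ on partial sums is the reverse of the numerical order, so the statement $\M^i_j(v,\ell)\succeq\M^i_j(v,r)$ says that the vertex reached by the smaller label has numerically no larger partial sum.

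The first step is to dispose of the layers that are not rounded. If $\el+1\neq j+1$, then the edges out of $v\in L(\M^i_j,\el)$ are the same as in $\M^i_{j+1}$. By downward induction on $j$ they are therefore the same as in $M^i$ for $\el\ge j+1$, or as in an already rounded program for $\el<j$, and the claim holds there by the interval property of $M^i$ established in Section~\ref{sec:sinlge-robp} or by the induction hypothesis. So it suffices to take $\el=j$, where edges into layer $j+1$ are redirected to breakpoints.

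For $\el=j$, we compute the true children $v_\ell=v+f_{i(j+1)}(\ell)$ and $v_r=v+f_{i(j+1)}(r)$. Since $f_{i(j+1)}$ is nondecreasing and $\ell<r$, we have $v_\ell\le v_r$ numerically. Let $\M^i_j(v,\ell)=\beta_x$ and $\M^i_j(v,r)=\beta_y$, where the breakpoints $\beta_1<\beta_2<\dots<\beta_N$ list $B^{j+1}$ in numerical order and the rounding rule maps a child $v_z$ to the largest breakpoint not exceeding it, i.e.\ $\beta_m\le v_z<\beta_{m+1}$. Suppose for contradiction that $\beta_x>\beta_y$. Then
\[\beta_{y+1}>v_r\ge v_\ell\ge\beta_x>\beta_y,\]
so $\beta_x$ is a breakpoint lying strictly between the consecutive breakpoints $\beta_y$ and $\beta_{y+1}$, which is impossible. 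Hence $\beta_x\le\beta_y$, which is $\M^i_j(v,\ell)\succeq\M^i_j(v,r)$.

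The main obstacle is checking that the rounding map $v_z\mapsto\beta_m$ is well defined and monotone when $B^{j+1}$ is a union over different $w$, and over the discretized grid of encoding length $L'$. Concretely, one has to verify three things:
\begin{enumerate}
\item $\beta_1=0$ lies in every $B^{j+1}(w)$, so every child $v_z\ge0$ has some breakpoint at or below it;
\item the final interval $[\beta_N,v_W]$ absorbs all large values;
\item by Claim~\ref{encoding-gap}, the half-open gaps $[\beta_m,\beta_{m+1}-1/2^{2L'}]$ cover every true child.
\end{enumerate}
Once this is in place, the map is simply ``round down to the nearest element of a sorted finite set'', which is monotone, and the argument above goes through regardless of which $w$ contributed each breakpoint.
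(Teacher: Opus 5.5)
Your proof is correct and is essentially the paper's argument. You reduce to the layer whose outgoing edges were just redirected, observe that redirection means ``round down to the nearest element of $B^{j+1}=\bigcup_{w}B^{j+1}(w)$,'' and derive the same consecutive-breakpoint contradiction as in Claim~\ref{A.2.1}; your extra well-definedness checks are a useful addition that the paper leaves implicit.

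There is one harmless slip in the non-rounded case: you swapped the two sub-cases. Edges out of layers $\el\ge j+1$ are the previously redirected ones, and edges out of layers $\el<j$ coincide with $M^i$. Since the induction hypothesis on $\M^i_{j+1}$ covers every $\el\neq j$, the argument still goes through.
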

\begin{proof}
    Note that this claim is only interesting if $\el > i$, otherwise all vertices $v, \M^{i}_j(v,\ell), \M^i_j(v,r)$ exist in original interval ROBP $M^i$ and thus the claim holds. Thus we can assume that layer $\el + 1$ is rounded and consists of breakpoints. Take $b_{x_\ell} = \M^i_j(v, \ell)$ and $b_{x_r} = \M^i_j(v, r)$

    Note that for any \enquote{true} child $v_z = \M^i_{j + 1}(v,z)$, we replace the child through edge $z$ with breakpoint $\beta_{x_z}$ such that $\beta_{x_z}\succeq v \succ \beta_{x_z}$. This is exactly the the maximum partial sum breakpoint with lower partial sum than $v_z$. Thus, we can follow identical logic to claim \ref{A.2.1} to prove the claim.
\end{proof}
\begin{claim}
    Let $v,v' \in L(\M^i_j,\el)$, $v \succeq v'$. For suffix $z$, $\M^i_j(v,z) \succeq \M^i_j(v',z)$.
\end{claim}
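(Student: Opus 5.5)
We follow the proof of Claim \ref{A.2.2} almost verbatim, now in the setting of the multi-constraint rounding where each rounded layer $j+1$ of $\M^i_j$ is the union $B^{j+1}=\cup_{w\in\W^{j+1}}B^{j+1}(w)$ of breakpoints. As there, it suffices to treat suffixes $z$ of length one; the statement for longer suffixes follows by induction on the length. Indeed, if $v\succeq v'$ in layer $\el$, then their children along the first symbol of $z$ satisfy the same relation in layer $\el+1$, and we apply the length-one case again.

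For a single symbol $z$ we split into cases according to whether layer $\el+1$ has been rounded in $\M^i_j$. If $\el+1>j+1$ in the sense that the transitions out of layer $\el$ were never redirected (i.e.\ $\el\neq j$ and layer $\el+1$ is still a layer of the exact program $M^i$ or was rounded earlier with edges untouched in this step), then $v,v',\M^i_j(v,z),\M^i_j(v',z)$ all have the same structure as in $M^i$ or in $\M^i_{j+1}$. The claim then follows from the interval property of $M^i$, or from the inductive hypothesis that $\M^i_{j+1}$ is an interval ROBP respecting $\prec$, which we carry along the induction over $\M^i_n,\dots,\M^i_0$.

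The interesting case is when layer $\el+1$ consists of breakpoints. Here $\prec$ is reverse order on partial sums, so $v\succeq v'$ means $v\le v'$ as numbers. Let $v_z=v+f_{i(\el+1)}(z)$ and $v'_z=v'+f_{i(\el+1)}(z)$ be the true children; then $v_z\le v'_z$. By the upward rounding rule, $\M^i_j(v,z)=\beta_x$ is the breakpoint with $\beta_x\le v_z<\beta_{x+1}$, and similarly $\M^i_j(v',z)=\beta_y$ with $\beta_y\le v'_z<\beta_{y+1}$, where $\beta_1<\beta_2<\cdots$ are consecutive elements of $B^{j+1}$ in numerical order. Suppose for contradiction that $\beta_x>\beta_y$. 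Then
\[
\beta_{y+1}>v'_z\ge v_z\ge\beta_x>\beta_y,
\]
so $\beta_x$ is a breakpoint strictly between the consecutive breakpoints $\beta_y$ and $\beta_{y+1}$, which is a contradiction. Hence $\beta_x\le\beta_y$, i.e.\ $\M^i_j(v,z)\succeq\M^i_j(v',z)$.

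The main point needing care is that the breakpoint set in layer $j+1$ is now a union over $w$, and the rounding must be done with respect to this merged, totally ordered set rather than the per-$w$ sets $B^{j+1}(w)$. Otherwise ``consecutive'' would not be well defined. We must also make sure that vertices with partial sum beyond $b_i$ (those mapped to the zero-probability breakpoint, separated from the others by the gap from Claim \ref{encoding-gap}) fit into the same ordering. Once we fix that every edge is redirected to the largest element of $B^{j+1}$ not exceeding the true child, the sandwich argument above goes through unchanged.
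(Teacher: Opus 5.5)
Your proposal is correct and takes essentially the same route as the paper: reduce to length-one suffixes, dismiss the cases where the next layer's transitions are untouched, and use the sandwich contradiction $\beta_{y+1} > v'_z \ge v_z \ge \beta_x > \beta_y$ from Claim \ref{A.2.2} on the merged breakpoint set $B^{j+1}$. Your explicit observations are accurate and make precise what the paper leaves implicit: that $\prec$ is reverse numerical order, and that each edge goes to the largest element of $B^{j+1}$ not exceeding the true child. One wording fix in your case split: for $\el > j$ the edges out of layer $\el$ \emph{were} redirected, in an earlier rounding step, not ``never''; your appeal to the induction hypothesis on $\M^i_{j+1}$ still covers that case.
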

\begin{proof}
    Note that we need only prove this claim for suffixes of length one, as the property for any longer suffix will follow inductively.
    
    Also note as in the previous theorem, this claim is only interesting if $\el > j$, otherwise vertices $v, v',\M^i_j(v,z), \M^i_j(v',z)$ exist in original interval ROBP $M^i$. Thus we can assume that layer $\el + 1$ is rounded and consists of breakpoints.

    Again since edges are re-assigned to breakpoints in the same fashion as section \ref{sec:sinlge-robp}, we can follow identical logic to claim \ref{A.2.2} to prove the claim.
\end{proof}
We now analyze how the probability of acceptance (and thus the total number of solutions) in our new ROBP $\M^i$ compares to that of the original ROBP $M^i$. 
\begin{theorem} \label{subsetthm}
For $v \in L(\M^i_j,\el)$, we have $A_{\M^i_{j+ 1}}(v) \seq A_{\M^i_j}(v)$. 
\end{theorem}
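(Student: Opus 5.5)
We mirror the proof of Claim \ref{A.3.1}, splitting into cases by the position of layer $\el$ relative to the rounded layer $j+1$. The key structural facts are the following. First, $\M^i_j$ and $\M^i_{j+1}$ agree on every vertex and edge except the edges from layer $j$ to layer $j+1$. Second, $L(\M^i_j,j+1)\seq L(\M^i_{j+1},j+1)$. Third, by the two preceding claims each $\M^i_{j'}$ is an interval ROBP for the same order $\prec$ as $M^i$, so that $w\prec v$ implies $A(w)\seq A(v)$.

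\emph{Case $\el\geq j+1$.} No edge leaving a vertex of layer $\el$ or below has been redirected. Hence from any $v\in L(\M^i_j,\el)$ both programs follow identical transitions on every suffix, which gives $A_{\M^i_{j+1}}(v)=A_{\M^i_j}(v)$.

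\emph{Case $\el=j$.} Fix $v$ and a label $z$, and let $v_z=v+f_{i(j+1)}(z)$ be the true child of $v$, that is, $v_z=\M^i_{j+1}(v,z)$. By construction, $\M^i_j(v,z)=\beta_m$, where $\beta_{m+1}\prec v_z\preceq\beta_m$. In other words, $\beta_m$ is the breakpoint with the largest partial sum not exceeding $v_z$; such a breakpoint exists because $\beta_1=0$. Since $\beta_m$ is also a vertex of $\M^i_{j+1}$, and the interval property in $\M^i_{j+1}$ gives $v_z\preceq\beta_m$, we get $A_{\M^i_{j+1}}(v_z)\seq A_{\M^i_{j+1}}(\beta_m)$. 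Moreover, the sub-program below layer $j+1$ is identical in the two programs, so $A_{\M^i_{j+1}}(\beta_m)=A_{\M^i_j}(\beta_m)$. Finally,
\[
A_{\M^i_{j+1}}(v)=\bigcup_z \{z\}\times A_{\M^i_{j+1}}(v_z)\seq\bigcup_z\{z\}\times A_{\M^i_j}(\beta_m)=A_{\M^i_j}(v),
\]
where $\beta_m$ depends on $z$ throughout.

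\emph{Case $\el<j$.} The transitions from layer $\el$ down to layer $j$ coincide in the two programs. Every suffix from $v$ therefore reaches the same vertex $v'\in L(\cdot,j)$ after its first $j-\el$ symbols. Applying the case $\el=j$ at $v'$ and taking the union over these prefixes gives the containment.

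The main obstacle is the case $\el=j$, specifically verifying that the redirected target satisfies $v_z\preceq\beta_m$ in the order of $\M^i_{j+1}$. This holds because the rounding is upward: $\beta_m$ has a smaller or equal partial sum. The argument also needs the interval property of $\M^i_{j+1}$ rather than just that of $M^i$, since layers below $j+1$ may already be rounded. I would justify this by induction on $j$ using the two preceding claims. Note that the set of breakpoints is a union over $w\in\W^{j+1}$, but this does not affect the argument: only the ordering and the fact that breakpoints are genuine vertices of $\M^i_{j+1}$ are used.
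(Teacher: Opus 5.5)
Your proof is correct and follows the paper's argument essentially verbatim. You use the same three-way split on $\el$ versus $j$. In the case $\el=j$, the key step is the same: the upward-rounded target satisfies $v_z\preceq\beta_m$, so the interval property of $\M^i_{j+1}$ (from the two preceding claims) gives $A_{\M^i_{j+1}}(v_z)\seq A_{\M^i_{j+1}}(\beta_m)=A_{\M^i_j}(\beta_m)$, and you conclude by taking the union over $z$ and over the identical prefixes for $\el<j$.
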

\begin{proof}
    Note that if $\el > j$, both ROBPs make identical transitions from $v$, and $A_{\M^i_{j+ 1}}(v) = A_{\M^i_j}(v)$. 
    
    Assume $\el = j$. Let $\beta_z = \M^i_j(v,z)$ be the children of $v$ for each $z \in \{0,1,\dots,u_{\el + 1}\}$. Note that since $\M^i_j$ is obtained from $\M^i_{j + 1}$ by rounding layer $j + 1$, these vertices are breakpoints, and for the \enquote{true} children of $v$, there are vertices $v_z = \M^i_{j + 1}(v,z)$ in layer $j + 1$ of $\M^i_{j + 1}$ such that $v_z \preceq \beta_z$. Note that by our previous theorems, suffixes maintain ordering, and thus $A_{\M^i_{j+1}}(\M^i_{j+1}(v,z)) = A_{\M^i_{j+1}}(v_z) \seq A_{\M^i_{j}}(\beta_z) = A_{\M^i_{j+1}}(\M^i_{j}(v,z))$. Thus the set of accepting suffixes can only increase for any choice of $z$, and so the claim holds. 
    
    Otherwise, if $\el < j$ the claim follows because $\M^i_j$ and $\M^i_{j + 1}$ make identical transitions up to layer $j$, and thus the claim follows here as well. 
\end{proof}
\begin{theorem} \label{probthm}
For any $v \in L(\M^i_j,\ell)$ and $w \in \W^\ell$, we have that $P_{M^i,w}(v) \leq P_{\M^i_j,w}(v) \leq P_{M^i,w}(v)(1 + \delta)^{n - \ell}$.
\end{theorem}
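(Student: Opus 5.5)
The plan mirrors Claim~\ref{A.3.2}: induct backwards over the sequence $\M^i_n = M^i, \M^i_{n-1}, \dots, \M^i_0$, and prove two things. First, the lower bound. Second, a one-step inequality: for every $j<n$, every $v \in L(\M^i_j,\ell)$ and every $w \in \W^\ell$,
\[P_{\M^i_j,w}(v) \le (1+\delta)\,P_{\M^i_{j+1},w}(v).\]
Telescoping the one-step inequality from $\M^i_n$ down to $\M^i_j$ would naively give the exponent $n-j$. We want $n-\ell$ instead. For this, observe that rounding layer $j'+1$ only affects acceptance probabilities of vertices in layers $\ell \le j'$. So only the roundings with $j' \ge \ell$ contribute a factor, and there are at most $n-\ell$ of them.

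\textbf{Lower bound.} By Theorem~\ref{subsetthm}, $A_{\M^i_{j+1}}(v)\subseteq A_{\M^i_j}(v)$. Hence for any distribution $D^w$ on suffixes, the probability of landing in the accepting set can only grow. Iterating gives $P_{M^i,w}(v)\le P_{\M^i_j,w}(v)$.

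\textbf{Upper bound, case analysis.} We split by the layer $\ell$ of $v$.
\begin{itemize}
\item If $\ell > j$, then $\M^i_j$ and $\M^i_{j+1}$ agree on everything below $v$, so the two probabilities are equal.
\item If $\ell = j$, this is the real step. Write
\[P_{\M^i_j,w}(v)=\sum_{z} p_w(z)\, P_{\M^i_j,w_z}(\beta_{x_z}),\]
where $w_z$ is the child of $w$ in $D$ along edge $z$, $p_w(z)$ is the corresponding edge probability, and $\beta_{x_z}$ is the breakpoint to which the true child $v_z=v+f_{i(j+1)}(z)$ was redirected. The corresponding expression for $\M^i_{j+1}$ has $P_{\M^i_{j+1},w_z}(v_z)$ in place of $P_{\M^i_j,w_z}(\beta_{x_z})$. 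The layer below is untouched, so $P_{\M^i_j,w_z}(\beta_{x_z})=P_{\M^i_{j+1},w_z}(\beta_{x_z})$. It therefore suffices to show
\[P_{\M^i_{j+1},w_z}(\beta_{x_z})\le (1+\delta)\,P_{\M^i_{j+1},w_z}(v_z).\]
\item If $\ell<j$, the programs coincide down to layer $j$. Both probabilities are then the same convex combination, under the walk in $D$ from $w$, of layer-$j$ quantities indexed by pairs (vertex, source state). So the bound transfers from the case $\ell=j$.
\end{itemize}

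\textbf{Main obstacle.} The hard part is justifying the last displayed inequality, since breakpoints were chosen as the union $B^{j+1}=\bigcup_{w'}B^{j+1}(w')$, while we need the bound for the specific $w_z$. Let $\beta^{w_z}_m$ be the breakpoint of $B^{j+1}(w_z)$ with $\beta^{w_z}_{m+1}\prec v_z\preceq\beta^{w_z}_m$. The union is a refinement of each $B^{j+1}(w_z)$, so the actual breakpoint $\beta_{x_z}$ satisfies
\[v_z\preceq\beta_{x_z}\preceq\beta^{w_z}_m.\]
By monotonicity of acceptance sets in the interval order, this gives
\[P_{w_z}(v_z)\le P_{w_z}(\beta_{x_z})\le P_{w_z}(\beta^{w_z}_m),\]
where probabilities are taken in $\M^i_{j+1}$. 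The maximality in the definition of $\beta^{w_z}_{m+1}$ says that any $v\prec\beta^{w_z}_m$ with $P_{w_z}(v)<P_{w_z}(\beta^{w_z}_m)/(1+\delta)$ and positive probability lies at or below $\beta^{w_z}_{m+1}$. Since $v_z\succ\beta^{w_z}_{m+1}$, either $P_{w_z}(v_z)=0$ or $P_{w_z}(\beta^{w_z}_m)\le(1+\delta)P_{w_z}(v_z)$. Either way the required inequality holds.

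The zero-probability case needs separate care, as in Claim~\ref{encoding-gap}. We must check that a $v_z$ with probability $0$ is rounded to a breakpoint that also has probability $0$ under $w_z$. This uses the choice of $\beta^{w_z}_{N+1}$ as the first zero-probability state, together with the $2^{-2L'}$ gap argument. I expect this bookkeeping, that the union of breakpoints only refines each individual partition, to be the delicate point.
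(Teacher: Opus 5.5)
Your proposal is correct and follows essentially the paper's route: the lower bound via Theorem~\ref{subsetthm}, a one-step factor of $(1+\delta)$ proved with the same $\ell>j$, $\ell=j$, $\ell<j$ case split, and the same key comparison $v_z \preceq \beta_{x_z} \preceq \beta^{w_z}_m$ between the union breakpoint and the per-$w_z$ breakpoint (the paper's $\beta_3 \preceq \beta_4$). Two of your additions make precise points the paper glosses over: you observe that only the roundings of layers $\max(j,\ell)+1,\dots,n$ affect $v$, which gives exponent $n-\max(j,\ell)\le n-\ell$ where the paper's final chain simply equates $(1+\delta)^{n-j}$ with $(1+\delta)^{n-\ell}$; and you treat zero-probability children explicitly via $\beta^{w_z}_{N+1}$.
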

\begin{proof}
Note that since $A_{\M^i_{j+1}}(v) \seq A_{\M^i_j}(v)$, it is clear that $P_{\M^i_{j + 1},w}(v) \leq P_{\M^i_j,w}(v)$ for all $w$, and thus $P_{M^i, w}(v) = P_{\M^i_n,w}(v) \leq P_{\M^i_{n-1},w}(v)\leq \dots \leq P_{\M^i_j,w}(v)$
    
    Now we will prove the eight hand side inequality by induction on the ROBPs $\M^i_{n}, \M^i_{n-1},\dots, \M^i_0$, by showing that $P_{\M^i_j,w}(v) \leq P_{\M^i_{j+1},w}(v)(1 + \delta)$ for every $v \in L(\M^i_j,\ell)$, $w \in \W^\ell$.
    
    Note that when proving this claim for $\M^i_j$, any layer $\ell > j$ is identical in $\M^i_{j+1}$, and so the claim is obvious. Also note that if we prove the claim for vertices in layer $\ell = j$, the claim will follow for layers in $\ell < j$, because these layers are also identical in $\M^i_{j+1}$, and thus make identical transitions and consequently we can express both $P_{\M^i_j,w}(v)$ and $P_{\M^i_{j+1},w}(v)$ as the same convex combination of acceptance probabilities of vertices in layer $j$. 
    
    Let the layer $\ell = j$, and consider vertex $v \in L(\M^i_j , j)$. 
    We find that probability $P_{\M^i_j,w}(v)$ is equal to the sum, over edges $z \in \{0,1, \dots, u_{j+1}\}$ out of $v$, of the probability we take that same edge label $z$ out of $w$, multiplied with the probability of accepting if we start from $\M^i_j(v, z)$ and make transitions in $\M^i_j$ according to a suffix sampled from $D^{w_z}$. Here, $w_z = D(w,z)$ is the child of $w$ in $D$ through edge labeled $z$. Explicitly,  
    \[P_{\M^i_j,w}(v) = \sum_{z = 0}^{u_{j + 1}} p_w(z) P_{\M^i_j,w_z}(\M^i_j (v,z))\]
     We now aim to bound $P_{\M^i_j,w_z}(\M^i_j(v, z))$ for the children of $v$ taken through each edge label $z$. Let $\beta_{1}, \beta_4$ be the adjacent breakpoints in $B^{i + 1}(w_z)$ such that $\beta_1 \prec v + f_{i(j+1)}(z) \preceq \beta_4$, and let $\beta_{2}, \beta_3$ be the adjacent breakpoints in $B^{i + 1}$ such that $\beta_2 \prec v + f_{i(j+1)}(z) \preceq \beta_3$, and note that $\M^i_j(v,z) = \beta_3$ by the way we connected layer $j$ to $j + 1$ in $\M^i_j$. Also note that since both layer $j$ and $j + 1$ of $\M^i_{j + 1}$ are identical to $\M^i$, we have that $v + f_{i(j+1)}(z) = \M^i_{j + 1}(v,z)$.  \\
     Since $B^{i + 1}(w_z) \seq B^{i + 1 }$ we have $\beta_1 \preceq \beta_2 \prec \M^i_{j + 1}(v,z) \preceq \beta_3\preceq \beta_4$.
     By the way we chose breakpoints and assign edges when rounding layer $j + 1$, we have that 
     \[P_{\M^i_{j + 1}, w_z}(\beta_4) \leq (1 + \delta)P_{\M^i_{j + 1}, w_z}(\M^i_{j + 1}(v,z))\]
     Since $\M^i_{j + 1}$ is an interval ROBP, we have that $P_{\M^i_{j + 1}, w_z}(\beta_3) \leq P_{\M^i_{j + 1}, w_z}(\beta_4)$ and thus 
     \[P_{\M^i_{j + 1}, w_z}(\beta_3)  \leq (1 + \delta)P_{\M^i_{j + 1}, w_z}(\M^i_{j + 1}(v,z))\]
     Since the breakpoints are a subset of the vertices in layer $j + 1$ of $\M^i_{j + 1}$, they exist in both $\M^i_j$ and $\M^i_{j + 1}$, and thus $P_{\M^i_{j + 1}, w_z}(\beta_3) = P_{\M^i_{j}, w_z}(\beta_3)$. Recall that $\M^i_j(v,z) = \beta_3$, and thus 
     \[P_{\M^i_j, w_z}(\M^i_j (v,z)) \leq (1 + \delta)P_{\M^i_{j + 1}, w_z}(\M^i_{j + 1}(v,z)) \]
     Thus, we have 
    \[P_{\M^i_j,w}(v) \leq \sum_{z = 0}^{u_{j+1}} p_w(z) (1 + \delta )P_{\M^i_{j+1},w_z}(\M^i_{j+1}(v,z)) = (1 + \delta) P_{\M^i_{j+1},w}(v)\]
     Thus we have proven the claim, and it follows that 
     \[P_{\M^i_j,w}(v) \leq P_{\M^i_{j + 1},w}(v)(1 + \delta) \leq \dots \leq P_{\M^i_{n},w}(v)(1 + \delta)^{n- j} = P_{\M^i,w}(v)(1 + \delta)^{n- \ell}\] for every $v \in L(\M^i_j,\ell)$, $w \in \W^j$.
\end{proof}
The above implies that for every $v\in L(\M^i,\ell) = L(\M^i_0,\ell)$, and every $w \in \W^\ell$ we have  \[P_{\M^i,w}(v) \leq P_{M^i,w}(v)(1 + \delta)^{n}\]
Note that we have Taylor expansion $(1 + \delta)^n = 1 + n\delta + \frac{n(n-1)}2\delta^2 + \dots$, thus if we take $\delta = \Omega(\frac\eta {2n})$, which is small, the second term $n\delta$ is much larger than the remainder of the terms, and we have $(1 + \delta)^n \leq 1 + 2n\delta = 1 + 2n\frac\eta {2n} = 1 + \eta$, and thus \[P_{M^i,w}(v) \leq P_{\M^i,w}(v) \leq P_{M^i,w}(v)(1 + \eta)\]
for every $v$ and $w$ as we will use in the following theorems. 
\begin{theorem} \label{runtime}
We can construct ROBP $\M^i$ using $O({n^{O(k)} (\log u)^3  (L + \log n)} \log ({n^{O(k)} \log u}/\eta)/\eta^2 ) $ arithmetic operations on $O(L\log u + \log n)$-bit numbers
\end{theorem}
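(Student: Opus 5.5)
We would follow the runtime analysis of Claim \ref{A.4}, now with the extra layer of the small-space source $D$. The construction builds $\M^i_{n-1},\dots,\M^i_0$ layer by layer from the bottom. We maintain the invariant that, once layer $j+1$ has been rounded, we have stored the breakpoint set $B^{j+1}=L(\M^i_j,j+1)$. For every $\beta\in B^{j+1}$ and every $w\in\W^{j+1}$ we also store the value $P_{\M^i_j,w}(\beta)$. By the width bound of Theorem \ref{mainthm}, $|B^{j+1}|\le \sum_{w}(N_w+1)=O(s\,n\log u/\eta)$ with $s=|\W^{j+1}|\le(2n^3)^k=n^{O(k)}$. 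Each $N_w\le n\log u/\eta$ for the same reason as in Section \ref{sec:sinlge-robp}: probabilities lie in $[u^{-n},1]$, and consecutive breakpoints drop by a factor $1+\eta$. (The $u^{-n}$ lower bound should hold for $D^w$-probabilities because each nonzero path has $D^w$-mass at least $u^{-n}$.)

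The basic subroutine evaluates $P_{\M^i_{j+1},w}(v)$ for a candidate partial sum $v$ in layer $j+1$ and a fixed $w\in\W^{j+1}$. We use the recursion from the proof of Theorem \ref{probthm},
\[P_{\M^i_{j+1},w}(v)=\sum_{z} p_w(z)\,P_{\M^i_{j+1},w_z}(\M^i_{j+1}(v,z)).\]
Both the edge intervals of $D$ out of $w$ and the breakpoint intervals $E(v,\beta)$ out of $v$ are intervals of labels, so the sum splits into $O(s+|B^{j+2}|)$ pieces on which both $w_z$ and the breakpoint child are constant. On each piece $p_w(z)$ is constant, by the definition $p_v(d)=|A_D(D(v,d))|/(|E(v,D(v,d))||A_D(v)|)$. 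The endpoints of the $E(v,\beta)$ are found by binary search over $\{0,\dots,u-1\}$ using monotonicity of $f_{i(j+2)}$, which costs $O(\log u)$ function evaluations each. One evaluation therefore costs $O(|B^{j+2}|\log u)=O(n^{O(k)}\log^2 u/\eta)$ arithmetic operations, and it uses only stored values.

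To produce $B^{j+1}(w)$ for a fixed $w$, we find each successive breakpoint by binary search over candidate partial sums in $[0,b_i]$. The candidates are restricted to encoding length $L'=L\lceil\log u\rceil+\log n$, and the search relies on monotonicity of $P_{\cdot,w}$ in $v$ (the interval property). This takes $O(L'+\log b_i)=O(L\log u+\log n)$ probes, and the terminal zero-probability breakpoint $b_i+2^{-2L'}$ is justified by Claim \ref{encoding-gap}. Multiplying, the work per layer is
\[ s\cdot \frac{n\log u}{\eta}\cdot (L\log u+\log n)\cdot \frac{n^{O(k)}\log^2 u}{\eta},\]
and over $n$ layers this is $n^{O(k)}(\log u)^3(L+\log n)/\eta^2$ up to the logarithmic factor. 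After the breakpoints are found we also fill in the stored table $P_{\M^i_j,w'}(\beta)$ for all $w'\in\W^{j+1}$, which has the same cost. The factor $\log(n^{O(k)}\log u/\eta)$ in the bound comes from locating $v$'s child among the sorted breakpoints by binary search, in place of a linear scan.

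For bit lengths, every vertex value is a sum of at most $n$ values $f_{ij}(x_j)$ with $x_j<u$, so by rationality its encoding length is at most $L'$. Probabilities are ratios of counts bounded by $u^n$, but we store them rounded to $O(L\log u+\log n)$ bits. We expect this to be the main obstacle: we must argue that rounding the probabilities to this precision only perturbs the $(1+\eta)$ breakpoint comparisons by a negligible relative factor. We would first try to absorb this by running the construction with $\eta/2$ and keeping relative error $\eta/(4n)$ per stored value, so the accumulated loss over $n$ layers stays within $(1+\eta)$.
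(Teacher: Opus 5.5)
Your outline follows the paper: you bound the breakpoints using the range $[u^{-n},1]$, evaluate $P_{\M^i_{j+1},w}(v)$ by the one-step recursion over the merged edge intervals of $D$ and $\M^i_{j+1}$, use $O(L'+\log b_i)$ binary-search probes over $L'$-bit candidates, and multiply over the $s$ sources and the layers. However, your final tally does not give the stated bound. With the costs you assign yourself,
\[
s\cdot\frac{n\log u}{\eta}\cdot(L\log u+\log n)\cdot\frac{n^{O(k)}\log^2 u}{\eta}=\frac{n^{O(k)}(\log u)^3\,(L\log u+\log n)}{\eta^2}.
\]
The leading term is $(\log u)^4L$, a full factor $\log u$ more than the statement allows. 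The factor $\log(n^{O(k)}\log u/\eta)$ cannot absorb it. The surplus is exactly your $O(\log u)$ label binary search for each endpoint of $E(v,\beta)$.

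The paper reaches $(\log u)^2(L\log u+\log n)\le(\log u)^3(L+\log n)$ by a cheaper accounting. It charges a probability evaluation only for the $O(s+|B^{\ell+1}|)$ merged intervals, each with a binary-tree lookup of a stored probability. That lookup, not locating children, is the source of its $\log(n^{O(k)}\log u/\eta)$ factor. Its count includes no cost for computing the interval endpoints. So as written you have proved a bound weaker by $\log u$. To reach the statement you must either adopt the paper's accounting, or obtain all endpoints of the $E(v,\beta)$ for a candidate $v$ without a separate $O(\log u)$ search per breakpoint.

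Second, the bit-length part of the statement is not proved; you identify it as the main obstacle and stop at a plan. Your worry is well founded: the paper only notes that breakpoints have $L'$ bits and appeals to Claim \ref{encoding-gap}, which concerns partial sums, not the stored probabilities. But the plan fails as phrased if ``rounded to $O(L\log u+\log n)$ bits'' means fixed point. A nonzero $P_{\M^i,w}(v)$ can be as small as $u^{-n}$, and fixed-point numbers of that length keep no relative accuracy there. You need a floating-point representation, with a mantissa of $O(\log(ns\log u/\eta))$ bits and an exponent of $O(\log(n\log u))$ bits. You then need an argument that computed values within a factor $1+\rho$ of the true ones still give the breakpoint guarantee used in Theorem \ref{probthm}, up to a factor $(1+\rho)^{O(1)}$ per layer. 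The computed values need not be monotone in $v$, so the correctness of your binary search over candidates must be argued from the monotonicity of the exact values. Finally, $\rho$ must be chosen so that the loss compounded over $n$ layers stays within $1+\eta$.
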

\begin{proof}
    Observe that for every vertex $v$ in layer $\ell$ and every $w \in \W^\ell$ we have that $P_{\M^i, w}(v) \leq 1$ and $P_{\M^i, w}(v) \geq u^{-n}$. Let $N_w+1$ be the number of breakpoints in $B^\ell(w)$. For $1 < j \leq N_w$, every breakpoint $\beta^w_{j}$ changes from $\beta^w_{j-1}$ by at least a factor of $\frac1{1 + \delta}$, and thus \[u^{-n} \leq P_{\M^i, w}(\beta^w_{N_w}) < P_{\M^i, w}(\beta^w_{1})/(1 + \delta)^{N_w} \leq (\frac1{1 + \delta})^{N_w}\]
    This implies that $u^n \geq (1 + \delta)^{N_w}$, and thus $\log u \cdot n \geq \log (1 + \delta) \cdot {N_w}$, or $N_w \leq {n \log u }/{\log(1 + \delta)}$. Taylor series approximation tells us that for $\delta < 1$, we have \[\log(1 + \delta) \geq \delta - \frac{\delta^2}2\quad \implies\quad N_w \leq \frac{n \log U}{\delta - \frac{\delta^2}2}\leq \frac{n \log U}{\delta - \frac{\delta}2}\leq \frac{n \log U}{\frac{\delta}2} \leq \frac{2n \log U}{{\delta}},\] 
    and thus $|B^{\ell}(w)| \leq 1 + {2n \log U}/{{\delta}} $, and $|B^\ell| = |\cup_{w \in \W^\ell}B^{\ell}(w)| \leq \sum_{w \in \W^\ell}|B^\ell(w)| \leq   s + {2ns \log u}/{{\delta}} =   O({n^2s \log u}/{{\eta}} )$, where $s$ is the width of small space source $D$ and $\eta = \del/2n$.
    
     Now note that for layer $\ell = n$, we create two vertices, $\beta_1 = 0$ with $P_{\M, w}(0) = 1$ for all $w \in \W^n$ and $\beta_2 = b_i + 1/2^{2L'}$ with $P_{\M, w}(b_i + 1/2^{2L'}) = 0$ for all $w \in \W^n$ in constant time. 
     
     We will now discuss the runtime of building layer $\ell$ from layer $\ell + 1$. We maintain that the vertices of the layer below, $L(\M^i, \ell+1) = B^{\ell+1}$, are known and stored in a binary tree along with the values $P_{\M^i, w}(\beta)$ for every $\beta \in B^{\ell+1}$ and $w \in \W^{\ell+1}$. We also maintain that each breakpoint has encoding length at most $L'$.
     
     We now prove that the statement holds for any $w \in \W^\ell$.
     For fixed $w \in \W^\ell$, we can compute the first breakpoint vertex of layer $\ell$ of $\M^i$ as $\beta^w_{1} = 0$. Now assume inductively that we have breakpoints $\beta^w_{1}, \dots, \beta^w_{j}$, and we now aim to compute $\beta^w_{j+1}$. 
     
     Recall that $\beta^w_{j+1}$ is the maximal $v \prec \beta^w_{j}$ satisfying $P_{\M^i , w }(v) < P_{\M^i , w }(\beta^w_{j})/(1 + \delta)$. 
    Thus, to find $\beta^w_{j+1}$
    we can do binary search on values in $[0,b]$, computing their associated $P_{\M_{j+1}}(v)$ each time. This range is acceptable because
we require $0 < P_{\M_{j + 1}}(v)$, consequently we need only search through vertices in the range $[0,b]$ since any value larger than $b$ can never lead to an accepting suffix due to nonnegativity of each $f_j$. We will only consider breakpoints with bit size $L'$ and this requires
$ O(\log b_i + L \log u + \log n)$ iterations.

    It remains to analyze the time to calculate $P_{\M^i , w }(v)$ and check if it exceeds $P_{\M^i , w }(v_{w(j)})/(1 + \delta)$ for each $v$ encountered during the binary search. 
     
    Here we have already calculated and stored $P_{\M^i , w }(v_{w(j)})$. We can calculate \[P_{\M^i,w}(v) = \sum_{z = 0}^{u_{j + 1}} p_w(z) P_{\M^i,w_z}(\M^i (v,z))\] where $w_z = D(w, z) \in \W^{\ell + 1}$. 

    It is important to note that in order to calculate probabilities, we do not ever need to take this sum above over all $u$ values. This follows from the fact that both $\M$ and $D$ will have edge intervals, the number of which are bounded by their respective widths. Thus, taking the union over all endpoints of these intervals, we can create new intervals such that in both $\M$ and $D$, all edge labels act the same within the interval. We will take the sum over these new intervals instead.
    Thus, given $\M^i$ of width $O(n^2s\log u/\eta)$ and $D$ of width $s$, we need only sum over $O(s + n^2s\log u/\eta )$ children $v$ in layer $\ell + 1$.
    
    Also note that since we store these known probabilities in a binary tree, we need a factor of $\log({4n^2s \log u}/\eta)$ to access the information. Thus we can calculate this probability in time $O({n^2s \log u} \log ({n^2s \log u}/\eta)/\eta)$.
     
     We repeat this process throughout the binary search, iterating through the ordered partial sums $O(\log b+ L \log u + \log n)$ times. Thus, to find the next unknown breakpoint, we have runtime $O({n^2s \log u (\log b+ L \log u + \log n)} \log ({n^2s \log u}/\eta)/\eta)$. As shown above, we have at most $O(n^2 \log u / \eta)$ breakpoints for each $w \in \W^\ell$, and there are at most $s = n^{O(k)}$ many such $w$, and thus we can find all vertices in this layer in time $O({n^{O(k)} (\log u)^2  (\log b+ L \log u + \log n)} \log ({n^{O(k)} \log u}/\eta)/\eta^2 ) $. 
     Thus we can build all layers in time $O({n^{O(k)} (\log u)^2  (\log b+ L \log u + \log n)} \log ({n^{O(k)} \log u}/\eta)/\eta^2 ) $. Recall that we only compute breakpoints to $L'$-bit accuracy, and consequently our calculations are all on $L\log u + \log n$-bit numbers. By the same logic as the previous section and claim \ref{encoding-gap}, this bit accuracy allows us to access all information needed.
\end{proof}
\begin{theorem}\label{buildM}
    Given a collection of the $(W,n)$-ROBPs $\M^i$ for $i \in [k]$ and $(s,n)$-small space source $D$ as described above, we can create a $(W^k, n)$- ROBP $\M$ computing the intersection of these, i.e. for any $x \in \{0,1,\dots, u-1\}^n $, $\M(x) = 1$ if and only if $\M^i(x) = 1$ for all $i \in [k]$. 
\end{theorem}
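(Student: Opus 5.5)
The natural plan is the standard product construction for read-once branching programs. We build $\M$ layer by layer with the same layering $0,1,\dots,n$ as the $\M^i$. For each $\el$ we set
\[
L(\M,\el)=L(\M^1,\el)\times\cdots\times L(\M^k,\el).
\]
The start state is the tuple $(s^1,\dots,s^k)$ of the individual start vertices. For a vertex $(v^1,\dots,v^k)\in L(\M,\el-1)$ and a label $z\in\{0,\dots,u-1\}$, the edge labeled $z$ goes to $(\M^1(v^1,z),\dots,\M^k(v^k,z))$. A final-layer tuple is accepting if and only if every coordinate is an accepting vertex of its own program. Since each $\M^i$ has width at most $W$, each layer of $\M$ has at most $W^k$ states, so $\M$ is a $(W^k,n)$-ROBP. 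Every vertex has exactly one outgoing edge per label, because each coordinate does; so the result is a legitimate layered multigraph in the sense of Definition \ref{robp}.

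Correctness follows by induction on $\el$. For every prefix $x_1\cdots x_\el$, the state reached in $\M$ from the start is the tuple of states reached in each $\M^i$. This is immediate from the coordinatewise definition of the transitions. Taking $\el=n$, $\M(x)=1$ exactly when every coordinate of $\M(s,x)$ is accepting, which happens exactly when $\M^i(x)=1$ for all $i\in[k]$. The read-once property is inherited, since layer $\el$ still reads only $x_\el$.

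The step needing care is the succinct, efficient description of the edges, which the later counting relies on. Each $\M^i$ is an interval ROBP, so from a fixed $v^i$ the labels are partitioned into at most $W$ intervals $E(v^i,w^i)$. Consequently, for a product vertex, the set of labels leading to a fixed product child is an intersection of $k$ intervals, hence again an interval. Moreover, the common refinement of the $k$ interval partitions has at most $kW$ pieces, since there are at most $kW$ endpoints. We therefore plan to store the edges of $(v^1,\dots,v^k)$ by sorting these endpoints and listing the refined intervals with their target tuples. This costs $O(kW\log(kW))$ per vertex, polynomial for fixed $k$.

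The obstacle is that $\M$ is generally not an interval ROBP: the accepting sets of product states are only partially ordered. I would not claim the interval property. It is not needed here, since later we only need to compute $P_{\M}(s)$, or the $D$-weighted probability, by dynamic programming over the width-$W^k$ product with interval-compressed edges.
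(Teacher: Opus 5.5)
Your proposal is correct and follows essentially the same route as the paper: the coordinatewise product of the $\M^i$ on $k$-tuples of states, width at most $W^k$, and acceptance exactly when every coordinate accepts. The differences are minor. Your refinement argument gives at most $kW$ label intervals per product vertex, a sharper count than the paper's $W^k$ bound on edge sets. The paper also runs the $D$-weighted probability dynamic program inside this proof, which you rightly defer because the statement does not require it.
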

\begin{proof}
    First, to create the desired ROBP $\M$, we need $n+1$ layers, where layer $0$ has a single vertex $(0,0,\dots,0)$ which is a $k$-tuple of partial sum $0$, as $0$ is the partial sum of the start vertex in each $\M^i$. Now from layer $\ell-1$ we will build layer $\ell$ as follows:
    
    For each vertex $v = (v_1, v_2, \dots, v_k) \in L(\M,\ell-1)$ represented by a $k$-tuple of partial sums, and for each $z \in \{0, 1 , \dots, u-1\}$, we construct child vertex through edge labeled $z$, which has partial sum tuple $(\M^1(v_1,z),(\M^2(v_2,z), \dots, (\M^k(v_k,z) )$. We maintain that the $i^{th}$ member of this tuple is associated with a vertex in $\M^i$, and that the prefix taken in $\M$ to reach vertex $v = (v_1, \dots, v_k)$ is exactly the prefix taken in $\M^i$ to reach $v_i$.
    
    Note that since there are at most $W$ vertices in each layer of $\M^i$, we can make at most $W^k$ partial sum tuple vertices in each layer of $\M$. Thus $\M$ has width $W^k$, and can be generated in time $O(W^k)$. Also note that, as in each individual constraint ROBP, we describe edges out of $v$ through sets of suffixes $E(v,w)$ which lead to $w$. This bounds our edges by $W^k$ as well.
    
    Now to generate the probabilities in this new ROBP, we wish to maintain that for any vertex $v$ in any layer $L(\M, \ell)$ and any $w \in \W^\ell$, the probability of accepting if we start from $v = (v_1, v_2, \dots, v_k)$ and make transitions in $\M$ according to a suffix sampled from $D^w$ is equal to the probability of the following happening for all $i \in [k]$: we accept starting from $v_i$ and making transitions in $\M^i$ according to a suffix sampled from $D^w$.
    
    We can compute these starting from layer $n$.
    
    At layer $n$ we know that for any $w \in \W^n$, vertex $v = (v_1, \dots,v_k)$ only accepts if all $v_1, \dots v_k$ accept, i.e. they have probability $1$ of accepting in their respective $\M^i$s. If there exists any $i$ such that $v_i$ does not have probability $1$ of accepting in $\M^i$ (i.e. is not the vertex $0$), then we reject and have probability $0$.
    
    For any layer above, we assume we have the probabilities $P_{\M,w}$ computed for every vertex in the layers below. For vertex $v \in L(\M, \ell)$, and for each $w \in \W^\ell$ we can compute \[P_{\M,w}(v) = \sum_{z = 0}^{u- 1} p_w(z) P_{\M_,w_z}(\M (v,z))\]
    which requires the summing over at most $W^k$ known values, since there are at most $W^k$ unique children $\M (v,z)$.
    Thus we calculate the probabilities for every $w$ and every $v$ in a layer in $O(W^{2k}s)$ time and thus the entire tree (all $n$ layers) in time $O(nW^{2k}s)$.
\end{proof}

Finally, we can prove that this ROBP rounding algorithm provides the guaranteed of theorem \ref{multi-ROBP}.
\begin{proof}[Proof of theorem \ref{multi-ROBP}]
 First we call on theorem \ref{Dyer} to construct altered sets $S_i$. 
 Since our sets have width at most $\sum_{j = 1}^n h_{ij}(u_j) + 2n^2 \leq 2n^3$, it follows that the ROBP $M_{S}$ computing the indicator function over $S$ has width $O(n^{3k})$. This is because we can represent each state in layer $\el$ of $M_{S}$ as a $k$-tuple of partial sums $\{\sum_{j = 1}^{\el} h_{1j}(x_j) , \dots, \sum_{j = 1}^{\el} h_{kj}(x_j) \}$ for which there are at most $O(n^{3k})$ possibel values.
 We use the uniform distribution over solutions in $S = \cap_{i \in [k]}S_i$ as our small space source $D$. 
Note that claim \ref{2.4} verifies this is a small space source of width $n^{O(k)}$. 

For $i \in [k]$, let $M^i$ be a $(W,n)$-ROBP exactly computing the indicator function for $Z_i $. Let $\eta = O(\eps / k(n+1)^k)$. Now, for every $i \in [k]$, by theorem \ref{mainthm}, in time $O(n^{O(k)}(\log u)^2(L + \log n)/\eta^2)$ we can explicitly construct a $(n^{O(k)}\log u/\eta, n)$-width ROBP $\M^i$ such that
\[\Pr_{x \leftarrow D}[\M^i(x) \neq M^i(x)] \leq \eta\]
Let $\M$ be the $(n^{O(k^2)}(\log u)^k/\eta^k, n)$-width ROBP computing the intersection of all $\M^i$ for $i \in [k]$ as described in Theorem \ref{buildM}. We have $\M(x) = \wedge_i \M^i(x),$ and thus by union bound \[ \Pr_{x \leftarrow D} [\M(x) \neq \wedge_i M^i(x)] \leq k\eta \]
On the other hand, by Theorem \ref{Dyer},
\[\Pr_{x \leftarrow D}[\wedge_iM^i(x) = 1] \geq \frac1{2n^k} \]
Therefore, from the above two equations and setting $\eta = \eps/{4kn^k}$, we get that
\[\Pr_{x \leftarrow D}[\M(x) = 1] \leq \Pr_{x \leftarrow D}[\wedge_iM^i(x) = 1] \leq (1 + \eps)\Pr_{x \leftarrow D}[\M(x) = 1] \]
Thus, \[\Pr_{x \in_u \{0,1,\dots,u-1\}^n}[x \in S'] \cdot \Pr_{x \from D}[\M(x) = 1] = \frac{|S|}{u^n}\cdot \Pr_{x \from D}[\M(x) = 1] \] is an $\eps$-relative error approximation to the fraction of solutions to all constraints \\ $\Pr_{x \in_u \{0,\dots,u-1\}^n}[\wedge_i M^i(x) = 1] = \Pr _{x \in_u \{0,\dots,u-1\}^n}[x \in S']\cdot \Pr_{x \from D}[\wedge_i M^i(x) = 1]$.

 The theorem now follows since we can compute ${|Z|}/u^n$ in time $n^{O(k)}$ and using theorem \ref{buildM} we can compute $\Pr_{x \from D}[\M(x) = 1]$ using
 $O(n^{O(k^2)}(\log u/\eps)^{O(k)}(L + \log n))$ arithmetic operations on $O(L\log u + \log n)$-bit numbers.
\end{proof} 

\pagebreak
\end{document}